   \def\MR#1{}
\theoremstyle{definition}
\newtheorem{example}{Example}
\newtheorem{definition}{Definition}
\newtheorem*{axiom}{Axiom}
\newtheorem*{property}{Property}
\theoremstyle{plain}
\newtheorem{theorem}{Theorem}
\newtheorem{lemma}{Lemma}
\newtheorem{proposition}{Proposition}
\newtheorem{corollary}{Corollary}
\theoremstyle{remark}
\newtheorem{remark}{Remark}
\theoremstyle{definition}
\newcommand{\E}{\mathbb{E}}
\newcommand{\R}{\mathbb{R}}
\newcommand{\A}{\mathbb{A}}
\newcommand{\F}{\mathcal{F}}
\newcommand{\X}{\mathcal{X}}
\newcommand{\G}{\mathcal{G}}
\newcommand{\Xvec}{\boldsymbol{X}}
\newcommand{\Yvec}{\boldsymbol{Y}}
\newcommand{\Zvec}{\boldsymbol{Z}}
\newcommand{\evec}{\boldsymbol{e}}
\definecolor{darkblue}{rgb}{0.1,0.1,0.9}
\definecolor{darkred}{rgb}{0.9,0.1,0.1}
\providecommand{\MR}[1]{}
\providecommand{\MR}{\relax\ifhmode\unskip\space\fi MR }
\begin{document}

\title[Allocation Mechanisms with Frictions]{Allocation Mechanisms in Decentralized\vspace{0.2cm}\\Exchange Markets with Frictions\vspace{0.6cm}}


\author[Mario Ghossoub, Giulio Principi, and Ruodu Wang]{Mario Ghossoub\vspace{0.1cm}\\University of Waterloo\vspace{0.7cm}\\Giulio Principi\vspace{0.1cm}\\New York University\vspace{0.7cm}\\Ruodu Wang\vspace{0.1cm}\\University of Waterloo\vspace{1.2cm}\\This draft: \today\vspace{0.2cm}}

\address{{\bf Mario Ghossoub}: University of Waterloo -- Department of Statistics and Actuarial Science -- 200 University Ave.\ W.\ -- Waterloo, ON, N2L 3G1 -- Canada}
\email{\href{mailto:mario.ghossoub@uwaterloo.ca}{mario.ghossoub@uwaterloo.ca}\vspace{0.2cm}}

\address{{\bf Giulio Principi}: New York University -- Department of Economics -- 19 West 4th Street -- New York, NY 10012 -- USA}
\email{\href{mailto:gp2187@nyu.edu}{gp2187@nyu.edu}\vspace{0.2cm}}

\address{{\bf Ruodu Wang}: University of Waterloo -- Department of Statistics and Actuarial Science -- 200 University Ave.\ W.\ -- Waterloo, ON, N2L 3G1 -- Canada}
\email{\href{mailto:wang@uwaterloo.ca}{wang@uwaterloo.ca}}


\thanks{\textit{Key Words and Phrases: Allocation Mechanisms, Risk-Sharing Rules, Market Frictions, Frictional Participation, Decentralized Risk Sharing, Peer-to-Peer Insurance.\vspace{0.15cm}}}

\thanks{\textit{JEL Classification:} C02, D49, D51, D86, D89, G22.\vspace{0.15cm} }

\thanks{\textit{2020 Mathematics Subject Classification:} 46A20, 46A22, 46N10, 47H99, 47N10, 91B05, 91B30, 91G99.\vspace{0.15cm}}


\thanks{We are grateful to Luciano De Castro, Fabio Maccheroni, and Marco Scarsini for comments and suggestions. We thank audiences at the 2023 workshop on the Foundations and Applications of Decentralized Risk Sharing (FADeRiS 2023) at KU Leuven. Mario Ghossoub acknowledges financial support from the Natural Sciences and Engineering Research Council of Canada (NSERC Grant No.\ 2018-03961 and 2024-03744). Ruodu Wang acknowledges financial support from the Canada Research Chairs (CRC-2022-00141) and the Natural Sciences and Engineering Research Council of Canada (NSERC Grant No.\  2024-03728). Giulio Principi is grateful for the financial support provided by the Henry M.\ MacCracken Fellowship at New York University.\vspace{0.2cm}}


\begin{abstract}  
The classical theory of efficient allocations of an aggregate endowment in a pure-exchange economy has hitherto primarily focused on the Pareto-efficiency of allocations, under the implicit assumption that transfers between agents are frictionless, and hence costless to the economy. In this paper, we argue that certain transfers cause frictions that result in costs to the economy. We show that these frictional costs are tantamount to a form of subadditivity of the cost of transferring endowments between agents. We suggest an axiomatic study of allocation mechanisms, that is, the mechanisms that transform feasible allocations into other feasible allocations, in the presence of such transfer costs. Among other results, we provide an axiomatic characterization of those allocation mechanisms that admit representations as robust (worst-case) linear allocation mechanisms, as well as those mechanisms that admit representations as worst-case conditional expectations. We call the latter \textit{Robust Conditional Mean Allocation} mechanisms, and we relate our results to the literature on (decentralized) risk sharing within a pool of agents.  
\end{abstract}

\maketitle
\thispagestyle{empty}

\newpage
\section{Introduction}

Two fundamental concerns of a decentralized pure-exchange economy, be it price-mediated or of the barter type, are the willingness of agents to engage in trade and the efficiency of allocations of the aggregate social endowment. This basic economic problem of distribution is at the core of welfare economics, and an implicit underlying assumption made in the literature on efficient allocations of an aggregate endowment is that transfers between agents are frictionless, and hence costless to the economy. It is our primary contention that certain types of transfers between agents are indeed costly to the economy. Specifically, we posit that in many instantiations of the notion of a pure-exchange economy, certain kinds of subsidization, namely initial transfers of state-contingent endowments to any agent with zero initial endowment, ultimately result in costs to the economy. We understand these as generic frictions in the market that translate into trading or reallocation costs, which we call \textit{frictional costs}. We show that the existence of these frictional costs in the economy is equivalent to a form of subadditivity of the cost of transferring endowments between agents in the economy. An immediate example is any platform that centralizes exchanges between participants, in return for a fixed participation fee, whence the subadditivity of participation costs ensues naturally. 

\medskip

This then begs the question of what are the effects of such frictional costs on the shape and structure of allocation mechanisms; that is, the mechanisms that redistribute feasible allocations of the aggregate endowment into other feasible allocations. Clearly, an immediate implication of frictional costs in this economy is that trade will result in a cost to be deducted from the initial aggregate endowment. This does not mean that the allocation is inefficient or harmful for the participating agents; as we will see from examples in the paper, agents will still benefit from participation in case an appropriate cost is deducted. 
We wish to go a step further and provide a general axiomatic characterization of allocation mechanisms in the presence of such frictional costs. Hitherto, there has been nearly no systematic axiomatic examination of allocation mechanisms. Two notable exceptions are the recent work of Jiao et al.\ \cite{Jiaoetal2023} and Dhaene et al.\ \cite{Dhaeneetal2023} in the context of risk sharing rules that we discuss below.

\medskip

Indeed, the interest of the literature on Pareto efficiency in pure-exchange economies under uncertainty is in the existence (and often the comonotonicity) of Pareto-efficient allocations. The literature is vast, and its groundwork was laid down by Borch \cite{Borch1962a} and Wilson \cite{Wilson1968}, who showed that when agents have risk-averse expected-utility preferences, each individual's consumption is a nondecreasing deterministic function of the aggregate endowment at a Pareto optimum. Chateauneuf et al.\ \cite{CCK97}, Landsberger and Meilijson \cite{LandsbergerMeilijson1994c}, and Carlier et al.\ \cite{CarlierDanaGalichon2012} extend this fundamental insight to economies where agents have probabilistically sophisticated preferences that are consistent with second-order stochastic dominance. Characterizations of efficiency beyond expected-utility theory (EU) were also considered. For instance, economies with Maxmin Expected Utility (MEU) mutiple-prior preferences \`a la Gilboa and Schmeidler \cite{GilboaSchmeidler1989maxmin} were examined by Dana \cite{Dana2002, Dana2004} and De Castro and Chateauneuf \cite{decastro2011}. 
Risk exchange with quantile agents are studied by Embrechts et al.~\cite{embrechts2018quantile, embrechts2020quantile}, notably without any notion of convexity. 
Economies with non-probabilistic uncertainty of the Choquet-Expected Utility (CEU) type (as in Schmeidler \cite{schmeidler89}) were studied by Chateauneuf et al.\ \cite{Chateauneufetal2000}, Dana \cite{Dana2004}, De Castro and Chateauneuf \cite{decastro2011}, and Beissner and Werner \cite{BeissnerWerner2023}. Dana and LeVan \cite{DanaLeVan2010} and Ravanelli and Svindland \cite{RavanelliSvindland2014} looked at the case of economies with variational preferences (see Maccheroni, Marinacci, and Rustichini \cite{Maccheronietal2006}). Jouini et al.\ \cite{JouiniSchachermayerTouzi2008}, Filipovi\'c and Svindland \cite{filipovic2008optimal}, and Ghossoub and Zhu \cite{ghossoub2026efficiency} studied the case of economies with law-invariant monetary utilities (see Delbaen \cite{DelbaenOsaka}). Economies with general ambiguity-sensitive convex preferences were examined by Strzalecki and Werner \cite{StrzaleckiWerner2011}. Pareto-efficiency of risk exchange in risk-sharing markets with risk minimization criteria rather than welfare maximization criteria has been studied in parallel to the above literature. We refer to R\"uschendorf \cite{Ruschendorf2013} for an overview.

\medskip

The literature has also been interested in efficient allocations when the aggregate endowment is constant. In this context, comonotonicity is equivalent to full insurance, and trading is seen as betting rather than hedging. In the case of a constant aggregate endowment, the Pareto optimality of full-insurance allocations was studied by Malinvaud \cite{Malinvaud1972,Malinvaud1973} under EU; Billot et al.\ \cite{Billotetal2000} and Kajii and Ui \cite{KajiiUi2006} under MEU; Chateauneuf et al.\ \cite{Chateauneufetal2000}, Billot et al.\ \cite{Billotetal2002}, and Dominiak et al.\ \cite{Dominiak2012} under CEU; Beissner et al.\ \cite{BBG24} under Rank-Dependent Utility; Rigotti et al.\ \cite{Rigottietal2008} for a general class of ambiguity-sensitive convex preferences; and Ghirardato and Siniscalchi \cite{GhirardatoSiniscalchi2018} for a general class of non-convex preferences. 

\medskip

Two points that have hitherto not been the primary focus of the literature are (i) the axiomatic study of allocation mechanisms themselves -- that is, those mapping that transform feasible allocations of the aggregate endowment into other feasible allocations -- and (ii) the presence of transfer costs in the economy. Two notable exceptions to (i) are the recent work of Jiao et al.\ \cite{Jiaoetal2023} and Dhaene et al.\ \cite{Dhaeneetal2023} in the context of risk sharing rules. In the former study, the authors propose an axiomatization of the so-called \textit{Conditional Mean-Risk Sharing} (CMRS) rule and \textit{Generalized Conditional Mean-Risk Sharing} (Generalized CMRS) rule. In the latter, the authors provide an axiomatization of the so-called \textit{Quantile-Based Risk-Sharing} (QBRS) rule. The CMRS was popularized in the theory of risk exchange by the work of Denuit and Dhaene \cite{DenuitDhaene2012}
and Denuit et al.\ \cite{Denuitetal2022}, building upon the work of Landsberger and Meilijson \cite{LandsbergerMeilijson1994c} and Dana and Meilijson \cite{DanaMeilijson2003}. It is a risk reallocation mechanism that assigns to each individual the conditional expectation of their initial risky endowment, conditioning on the aggregate risk in the economy. Hence, the only information required for the risk allocation is that generated by the aggregate risk. Knowledge of the realization of each individual risk is not needed. The Generalized CMRS was introduced by Jiao et al.\ \cite{Jiaoetal2023} as an extension that accounts for additional information availability, extraneous to the information generated by the aggregate risk.

\medskip

The CMRS, Generalized CMRS, and QBRS rules are linear allocation mechanisms, and hence do not reflect frictional allocation costs in the economy. Indeed, they are designed to satisfy the so-called \textit{Full Allocation} condition (sometimes called the \textit{Self-Financing} condition), which axiomatically imposes on an allocation mechanism to redistribute the entirety of the initial aggregate endowment. This is incompatible with the existence of frictional transfer costs, which, as argued above, result in a reallocation cost to be deducted from the initial aggregate endowment. Additionally, and by design, the CMRS, Generalized CMRS, and QBRS rules do not lead to the kind subsidization discussed above that we argue leads to frictional costs, namely, the initial transfers of state-contingent endowments to any agent with zero initial endowment. Indeed, these risk-sharing rules assign, by construction, a zero endowment to any individual with an initial zero endowment.

\medskip

Extrapolating from the natural property of cost subadditivity, we propose to capture the aforementioned frictional costs through an axiom that we call \textit{Frictional Participation}, which we show is equivalent to subadditivity of a certain cost function that we introduce. In addition to standard monotonicity, continuity, and anonymity axioms, we suggest two axioms introduced by Jiao et al.\ \cite{Jiaoetal2023} in the context of risk sharing to deal with the effect of differing information sets on risk allocations: \textit
{Information Anonymity} and \textit
{Information Backtracking}. We show how combinations of these axioms yield a crisp representation of allocation mechanisms as worst-case, robust linear allocation mechanisms (Theorem \ref{RGRS_prop}), as well as worst-case conditional expectations (Theorem \ref{thm1}). The latter representation is of particular importance in a context of risk-sharing within a pool of agents, as it provides an extension of the CMRS rule of Denuit and Dhaene \cite{DenuitDhaene2012} and the Generalized CMRS rule of Jiao et al.\ \cite{Jiaoetal2023}. We call such mechanisms \textit{Robust Conditional Mean Allocation} mechanisms, and in the context of risk-sharing rules we refer to them as \textit{Robust Conditional Mean Risk Sharing} rules. As a corollary, we provide an axiomatization of (linear) Conditional Mean Allocation mechanisms (Proposition \ref{subj_CMRS}), as well as a subjective version of the CMRS rule.

\medskip

As an illustration, we discuss two interesting and practically relevant examples of robust allocation mechanisms: the Mean-Deviation allocation mechanism and the Expected-Shortfall allocation mechanism, based on  the theory of  risk/deviation measures; see Rockafellar et al.~\cite{RockafellarUryasev06} and McNeil et al.~\cite{mcneil2015quantitative}. In each example, we propose a model of market allocation based on the given robust allocation mechanism, and we show how the global frictional cost in the market can be parameterized by a single parameter. This parameter can be understood as related to the fees imposed by the allocator on the participants.

\medskip

On a technical level, this paper contributes to the mathematical literature on convex duality. In particular, we provide envelope representation results for superlinear operators mapping from Dedekind complete Riesz spaces to Dedekind complete Riesz spaces. These results are then applied in the context of finite Cartesian products of $L^1$-spaces, and conditional-expectation representations of operators. From this perspective, the closest work to ours are the studies on conditional risk measures. We discuss the relevant mathematical literature and our contribution to it in Section \ref{MathLit}.

\medskip

The rest of this paper is set out as follows. Section \ref{Economy} introduces our model, as well as necessary definitions. Section \ref{SecRep} contains our main results on the axiomatic characterizations of robust allocation mechanisms. In Section \ref{SecRiskSharing}, we discuss the use of our results in the theory of decentralized risk sharing within a pool of agents (i.e.,m peer-to-peer insurance). In Section \ref{SecApplication}, we examine two practical examples of robust allocation mechanisms. Section \ref{MathLit} expands on the mathematical literature on convex duality to which we contribute. Section \ref{SecConc} concludes. The \hyperlink{LinkToAppendix}{Appendices} contain some background material, including our aforementioned mathematical contribution, as well as the proofs of our main results.

\bigskip
\section{The Economy}
\label{Economy}

\subsection{Allocations}
Let $(\Omega,\mathcal{F},\mathbb{P})$ be a probability space, and denote by $\Sigma$ the set of all sub-$\sigma$-algebras of $\mathcal{F}$. For a given $\G \in \Sigma$, let $L^1(\G)$ be the space of all $\G$-measurable and $\mathbb{P}$-integrable random variables. To ease notation we set $\X=L^1(\F)$ and let $\mathcal{X}_+$ denote the cone of nonnegative elements of $\X$. We use the symbol $\mathbb{E}[\cdot]$ to denote expectation with respect to $\mathbb{P}$. We treat $\mathbb{P}$-a.s.\ equal random variables as identical and denote by $\sup X$ the essential supremum of any $X\in \X$. Vectors of random variables will be denoted in bold by $\Xvec$. We equate constants with constant random variables. Whenever we consider expectations taken with respect to a probability measure $Q$ rather than $\mathbb{P}$, such expectations will be denoted by $\mathbb{E}^Q\left[\cdot\right]$.

\medskip

Consider a pure exchange economy under uncertainty, where $\Omega$ denotes the set of future states of the world. There are $n\in \mathbb{N}$ agents, each with an initial state-contingent endowment $X_i \in \X$, for $i \in [n] := \{1, \ldots, n\}$. We assume $n\ge 3$ throughout, which is realistic for applications in decentralized risk-exchange economies.  For any random variable $S \in \X$, the set of \textit{allocations} of $S$ is denoted by
\begin{equation}\label{allocations}
\mathbb{A}_n^{=}(S):=\left\lbrace (Y_1,\ldots,Y_n) \in \X^n:\sum_{i=1}^nY_i= S\right\rbrace,
\end{equation}

\medskip

\noindent and the set of \textit{sub-allocations of} $S$ is denoted by
\begin{equation}
\label{suballocations}
\mathbb{A}^{\leqslant}_n(S):=\left\lbrace (Y_1,\ldots,Y_n) \in \X^n:\sum_{i=1}^nY_i\leq S\right\rbrace.
\end{equation}

\medskip

\noindent For each $(Y_1,\ldots,Y_n) \in \mathbb{A}^{\leqslant}_n(S)$, the quantity $S - \sum_{i=1}^nY_i$ can be interpreted as a frictional cost, corresponding to the unallocated part of $S$. To ease the notation we set $S^{\Xvec}:=\sum_{i=1}^n X_i$, for all $\Xvec\in \X^n$.

\medskip
\subsection{Allocation Mechanisms}
In the context of decentralized risk sharing within a pool of agents, Denuit et al.\ \cite{Denuitetal2022} introduced the notion of a risk-sharing rule, as a mapping that transforms a feasible allocation of an aggregate risk into another feasible allocation. In a more general context of trade in a pure-exchange economy, we call this a simple allocation mechanism.

\medskip

\begin{definition}
 A \textit{simple allocation mechanism} is a mapping $H: \X^n \to  \X^n$ such that for all $\Xvec:=(X_1,\ldots,X_n)\in \X^n$,
 $$H(\Xvec)=\left(H_1(\Xvec), \ldots, H_n(\Xvec)\right)\in \A^=_n(S^{\Xvec}).$$
\end{definition}

\medskip

Jiao et al.\ \cite{Jiaoetal2023} introduced the notion of a generalized risk-sharing rule, as an extension of the notion of a risk-sharing rule that accounts for the available information, which might be more than the observable value of the state-contingent aggregate risk. In the context of the allocation mechanisms we study in this paper, such information determine what risk is not gonna
be shared among the participants. This information is interpreted as \textit{target information} and it is modeled by a sigma-algebra $\G\in \Sigma$, called an information set. In a more general context of trade in a pure-exchange economy, we call this a costless allocation mechanism. 

\medskip

\begin{definition}
 A \textit{costless allocation mechanism} is a mapping $H:\X^n\times \Sigma\to  \X^n$ such that for all $\Xvec:=(X_1,\ldots,X_n)\in \X^n$ and all $\G\in \Sigma$,
 $$H(\Xvec|\G)=\left(H_1(\Xvec|\G), \ldots, H_n(\Xvec|\G)\right)\in \A^{=}_n(S^{\Xvec}).$$
\end{definition}

\medskip

Note that we are using $H(\mathbf X|\mathcal G)$ instead of   $H(\mathbf X,\mathcal G)$, because a primary example of $ H(\mathbf X|\mathcal G)$  is the usual conditional expectation $\E[\mathbf X|\mathcal G]$, which will be clear in the later sections. 

\medskip

The definition of a generalized risk-sharing rule and that of a costless allocation mechanism require frictionless transfers in the economy, so that any reallocation of the initial endowments preserves the aggregate endowment, for any given information set. In order to account for potential costly frictions, we introduce the notion of an \textit{allocation mechanism} below.

\medskip

\begin{definition}
An \textit{allocation mechanism} is a mapping $H: \X^n\times \Sigma \to  \X^n$ such that for all $\Xvec:=(X_1,\ldots,X_n)\in \X^n$ and all $\G\in \Sigma$,
$$H(\Xvec|\G)=(H_1(\Xvec|\G), \ldots, H_n(\Xvec|\G))\in \A^{\leqslant}_n(S^{\Xvec}).$$ 
 
\noindent The set of allocation mechanisms will be denoted by $\mathcal{AM}$.
\end{definition}

\medskip

For each information set $\G$ and each vector of initial endowments $\Xvec:=(X_1,\ldots,X_n)\in \X^n$, any $H \in \mathcal{AM}$ induces a total cost to the economy, given by 
$$S^{\Xvec} - \sum_{i=1}^n H_i(\Xvec|\G) \geq 0.$$

\noindent This motivates the following notions. 

\begin{definition}
A \textit{local frictional cost} is a mapping
\begin{align*}
\mathcal{C}:  
\mathcal{AM} \times  \X^n \times \Sigma \times [n]^2 & \to \X\\
(H,\Xvec,\G, \left(i,j\right)) &\mapsto \mathcal{C}^{\Xvec,\G}_{i,j}(H) := X_i+X_j - H_i(\Xvec|\G) - H_j(\Xvec|\G).
\end{align*}
A \textit{global frictional cost} is a mapping
\begin{align*}
\mathcal{C}:  
\mathcal{AM} \times  \X^n \times \Sigma & \to \X_+\\
(H,\Xvec,\G) &\mapsto \mathcal{C}^{\Xvec,\G}(H) := S^{\Xvec} - \sum_{i=1}^n H_i(\Xvec|\G).
\end{align*}
\end{definition}

\noindent For each $(\Xvec,\G) \in \X^n \times \Sigma$, the quantity $\mathcal{C}^{\Xvec,\G}(H)$ measures the frictional cost associated with the allocation mechanism $H \in \mathcal{AM}$. This cost corresponds to the part of $S^{\Xvec}$ that cannot be allocated through $H$ and is ultimately borne by the economy. Clearly, $H(\Xvec|\mathcal{G})\in \mathbb{A}^{=}_n(S^{\Xvec})$ if and only if $\mathcal{C}^{\Xvec|\G}(H) = 0$.

\begin{table}[H]
\centering
\label{tab:notation}
\renewcommand{\arraystretch}{1.14}
\setlength{\tabcolsep}{6pt}
\begin{tabularx}{\textwidth}{l X}
\hline
\textbf{Mathematical Notation} & \textbf{Meaning}\\
\hline\smallskip
$(\Omega,\mathcal{F},\mathbb{P})$ & Base probability space\smallskip\\
$\mathcal{X}=L^1(\mathcal{F})$ & $\F$-measurable \& $\mathbb{P}$-integrable random variables\smallskip\\
$\mathcal{X}_+$ & Cone of nonnegative elements of $\X$\smallskip\\
$\Xvec=(X_1,\dots,X_n)\in \mathcal{X}^n$, $S^{\Xvec}=\sum_{i=1}^n X_i$ & Initial endowments and their aggregate\smallskip\\
$\mathcal{A}^{=}_n(S)$, $\mathcal{A}^{\le}_n(S)$ & (Sub-)allocations: $\sum_i Y_i=S$ (resp.\ $\le S$)\smallskip\\
$\Sigma$ & Set of all sub-\,$\sigma$-algebras of $\mathcal{F}$\smallskip\\\
$\sigma\left\lbrace S \right\rbrace$ & $\sigma$-algebra generated by a random variable $S$\smallskip\\
$\mathcal{G}\in\Sigma$, $\G^{\Xvec} := \sigma\left\{\G, \sigma\left\{S^{\Xvec}\right\}\right\}$ & Information set; augmented info with $S^{\Xvec}$\smallskip\\
$H(\Xvec\mid \mathcal{G})=(H_1(\Xvec\mid \mathcal{G}),\dots,H_n(\Xvec\mid \mathcal{G}))\in \mathcal{X}^n$ & Allocation mechanism\\
$\mathcal{AM}$ & Set of allocation mechanisms\smallskip\\
$C^{\Xvec,\mathcal{G}}(H)=S^{\Xvec}-\sum_i H_i(\Xvec\mid \mathcal{G})$ & Global frictional cost\smallskip\\
$C_{i,j}^{\Xvec,\mathcal{G}}(H)=X_i+X_j-H_i(\Xvec\mid \mathcal{G})-H_j(\Xvec\mid \mathcal{G})$ & Local frictional cost for pair $(i,j)$\smallskip\\
$\bigtriangleup(\F)$ & Probability measures on $\mathcal{F}$\smallskip\\
$\bigtriangleup^{\mathbb{P}}(\F|\G)$ & The set of $Q\in \bigtriangleup(\F)$ s.t. $Q\ll \mathbb{P}$ and $Q|_{\G}=\mathbb{P}|_{\G}$\smallskip\\
$\mathbb{E}^{\mathbf{Q}}[\Xvec\mid \mathcal{G}]$ & Cond. expectations vector $(\mathbb{E}^{Q_i}[X_i\mid \mathcal{G}])_{i=1}^n$\smallskip\\
\hline
\end{tabularx}
\caption{Notation summary}
\end{table}

\bigskip

\section{Allocation Mechanisms with Frictions}
\label{SecRep}
In what follows, for all $\Xvec\in \X^n$, $S\in \X$, and $\G\in \Sigma$, we let 

\vspace{-0.1cm}

$$\G^{\Xvec} := \sigma\left\{\G, \sigma\left\{S^{\Xvec}\right\}\right\}
\ \textnormal{and} \ 
\G^S:=\sigma\left\lbrace \G, \sigma\left\{S\right\}\right\rbrace.$$

\medskip

\noindent A bijective function $\pi:\left\lbrace 1,\ldots,n\right\rbrace\to \left\lbrace 1,\ldots,n\right\rbrace$ will be called a \textit{permutation}. We endow $\X$ and $L^1(\G)$ with the $\mathbb{P}$-almost sure order, for all $\G\in \Sigma$. For $i \in [n]$, let $\evec^{(i)}$ denote the unit vector along the $i^{th}$ axis, that is

\vspace{-0.1cm}

$$\evec^{(i)} := (0, \ldots, 0, 1, 0, \ldots, 0),$$ 

\medskip

\noindent where the $i^{th}$ component is $1$. We denote by $\bigtriangleup(\F)$ the set of all probability measures over $(\Omega,\mathcal{F})$. The space of linear operators from $\X$ to $\X$ will be denoted by $\mathcal{L}$. For all sequences $(X_m)_{m\in \mathbb{N}}$ in $\X$ and all $X\in \X$, the notation $X_m\downarrow X$ means that $(X_m)_{m\in \mathbb{N}}$ is decreasing and $\inf_{m\in \mathbb{N}}X_m=X$. Analogously $X_m\uparrow X$ reads as $(X_m)_{m\in \mathbb{N}}$ is an increasing sequence such that $\sup_{m\in \mathbb{N}}X_m=X$.

\medskip
\subsection{Axioms and Properties}
In this section we provide axioms that will be employed to characterize specific classes of allocation mechanisms. We make a distinction between the terms \textit{axioms} and \textit{properties}. Axioms indicate conditions that we argue are desirable for a function to be an allocation mechanism in the economy described above. Properties refer to more technical requirements. We fix an arbitrary mapping $H: \X^n\times \Sigma \to  \X^n$. All conditions expressed below refer to this mapping.

\medskip

\begin{axiom}[{\bf Internal Fairness -- IF}]
For all $\G \in \Sigma$, $\Xvec \in \X^n$, and $i,j\in \left[n\right]$,
$$X_i\geq X_j \, \implies \, H_i\left(\Xvec|\G\right) \geq H_j\left(\Xvec|\G\right).$$
\end{axiom}

\medskip

\noindent As the name implies, IF is an intra-allocation fairness condition. In relative terms, the larger the initial endowment, the more the agent receives. The next two axioms express anonymity principles. The first, \textit{Agent Anonymity}, is a standard condition that expresses the indifference to re-labelings, or in more colloquial terms, that the names of the agents do not matter for the allocation. It is referred to as Symmetry by Jiao et al.\ \cite{Jiaoetal2023}, and Reshuffling by Denuit et al.\ \cite{Denuitetal2022}.

\medskip

\begin{axiom}[{\bf Agent Anonymity -- AA}]
For all permutations $\pi: [n] \to [n]$, $\G \in \Sigma$, $\left(X_1,\ldots,X_n\right)\in \X^n$, and $i\in \left[n\right]$, 
$$H_i\left(X_{\pi(1)},\ldots,X_{\pi(n)}|\G\right)=H_{\pi(i)}\left(X_1,\ldots,X_n|\G\right).$$
\end{axiom}

\medskip

\begin{axiom}[{\bf Operational Anonymity -- OA}]
For all $\G \in \Sigma$, $\Xvec \in \X^n$, and $i,j \in [n]$, if $\Yvec = \Xvec -X_j \, \evec^{(j)} + X_j \evec^{(i)}$, then
$$H_k\left(\Yvec|\G\right) = H_k\left(\Xvec|\G\right), \ \forall \, k \in [n]\setminus\{i,j\}.$$
\end{axiom}

\medskip

\noindent OA, introduced by Jiao et al.\ \cite{Jiaoetal2023}, dictates that the allocation to one agent is unaffected by transfers between two other agents. 
Here, the term \textit{anonymity} refers to the fact that two agents do not need to disclose whether they are merged into one agent or not, as this merger does not affect the allocation to other agents.
For instance, these two agents may be two accounts of the same person in an online platform.
From OA, it follows that for all $i\in \left[n\right]$, all $\left(X_1,\ldots,X_n\right)\in \X^n$, and all $\G \in \Sigma$,  
$$H_i\left(X_1,\ldots,X_n|\G\right)=H_i\left(0,\ldots,0,X_i,S^{\Xvec}-X_i,0,\ldots,0|\G\right).$$

\medskip

The following axiom captures the frictions in this market. It stipulates that transfers of state-contingent endowments to any agent with no initial endowment (i.e., state-contingent subsidization) are costly to the economy. 

\medskip

\begin{axiom}[{\bf Frictional Participation -- FP}]
For all $\G \in \Sigma$, $Z \in \X$, $i \in [n]$, and $\Xvec \in \X^n$ such that $X_j=0$ for some $j \in [n]$, 
$$\Yvec = \Xvec + Z \evec^{(j)} - Z \evec^{(i)} \, \implies \, \mathcal{C}_{i,j}^{\Yvec,\G}(H)\geq \mathcal{C}_{i,j}^{\Xvec,\G}(H).$$
\end{axiom}

\medskip

\noindent FP is pivotal for our analysis, as it is one of the main conditions that will lead to a representation of $H$ as a so-called \textit{robust allocation mechanism} (Definition \ref{RBGRS} and Theorem \ref{RGRS_prop}), and it distinguishes the framework of this paper from the related literature.

\medskip

\begin{axiom}[{\bf Scale Invariance -- SI}]
For all $\G \in \Sigma$, $i \in [n]$, $\alpha \in [0,1]$, and $\Xvec \in \X^n$ such that $X_j=0$ for some $j \in [n]$,
$$\Yvec = \Xvec + \alpha X_i \evec^{(j)} - \alpha X_i \evec^{(i)} \, \implies \, H_j(\Yvec|\G) =
\alpha  H_i(\Xvec|\G).$$
\end{axiom}

\medskip

\noindent SI limits the extent of the impacts of frictions on the economy, in particular imposing the absence of proportional frictions. As such, it could be interpreted as a fairness condition. Indeed, suppose that agent $i$, with initial endowment $X_i$, decides to split her initial endowment proportionally with agent $j$, who has a zero initial endowment, whereby the latter receives the proportion $\alpha  X_i$, and $(1-\alpha)  X_i$ is retained by the former. SI then stipulates that if the allocation to agent $j$ (that is, $H_j(\Yvec|\G)$) differs from a proportion $\alpha$ of the starting allocation of agent $i$ (that is, $\alpha H_i(\Xvec|\G)$), then this could be considered unfair either to agent $i$ or to agent $j$. 

\medskip

Note that the SI is not implied by the previous axioms. For instance one could consider the allocation mechanism $H:\X^n\to \X^n$ defined as $H\left(\Xvec\right):=S^{\Xvec}/n$ for all $\Xvec\in \X^n$. Such an allocation mechanism violates SI while satisfying IF, AA, OA, and FP. Examples can also be provided for IF. One can consider the allocation mechanism $H:\X^n\times \Sigma\to \X^n$ defined as  $H(\Xvec|\G):=\mathbb{E}\left[-|\Xvec| \, \vert \, \G^{\Xvec}\right]$, for all $\Xvec\in \X^n$ and $\G\in \Sigma$. It is straightforward to see that such an allocation mechanism satisfies AA, OA, FP, and SI but violates IF.

\medskip

\begin{remark}
To highlight the interplay between some of the introduced axioms, note that if $H \in \mathcal{AM}$ satisfies OA and FP, then for all $\Yvec\in \X^n$ and $i,j\in \left[n\right]$,
$$\mathcal{C}^{\Yvec|\G}(H)\geq \mathcal{C}^{\Xvec|\G}(H),$$

\noindent where $\Xvec:=\Yvec-Y_j\evec^{(j)}+Y_j\evec^{(i)}$. To see this formally, fix $\Yvec\in \X^n$ and $i,j\in \left[n\right]$, and let $\Xvec:=\Yvec-Y_j\evec^{(j)}+Y_j\evec^{(i)}$. Then $S^{\Yvec} = S^{\Xvec}$ and $\G^{\Yvec} = \G^{\Xvec}$. Moreover, OA and Lemma \ref{cost_sub_is_split_robu} in the Appendix imply that for each $\mathcal{G}\in \Sigma$, 
$$\sum_{k=1}^n H_k\left(\Xvec|\mathcal{G}\right)\geq \sum_{k=1}^n  H_k\left(\Yvec|\mathcal{G}\right).$$
Consequently, for each $\mathcal{G}\in \Sigma$, 
$$S^{\Xvec} - \sum_{k=1}^n H_k(\Xvec|\mathcal{G})
=S^{\Yvec} - \sum_{k=1}^n H_k(\Xvec|\mathcal{G})
\leq S^{\Yvec} - \sum_{k=1}^n H_k\left(\Yvec|\mathcal{G}\right).$$
Hence,
$\mathcal{C}^{\Yvec|\G}(H) = S^{\Yvec} - \sum_{k=1}^n H_k\left(\Yvec|\G\right) 
\geq  
S^{\Xvec} - \sum_{k=1}^n H_k\left(\Xvec|\G\right)
=
\mathcal{C}^{\Xvec|\G}(H).$
\end{remark}

\bigskip
\smallskip

\noindent{\bf{Terminology.}}
We will refer to the combination of axioms IF, AA, OA, FP, and SI as \textit{Allocation Axioms}.

\bigskip

Next, we report a few technical conditions that we shall refer to as \textit{properties}.

\medskip

\begin{property}[{\bf Zero Preserving -- ZP}]
For all $\G \in \Sigma$ and $\Xvec \in \X^n$, if $X_i=0$ for some $i \in [n]$ then $H_i\left(\Xvec|\mathcal{G}\right)=0$.
\end{property}

\medskip

\noindent ZP imposes on an allocation mechanism to provide a zero allocation to any agent who did not contribute to the aggregate endowment. 

\medskip

Below, we present two properties first introduced by \cite{Jiaoetal2023} in the context of risk sharing rules. Such properties describe the role of information as modeled by $\G$ and $\G^{\Xvec}$ for allocation mechanisms.

\medskip

\begin{property}[{\bf Information Anonymity -- IA}]
For all $\G \in \Sigma$, $\Xvec \in \X^n$, and $i \in [n]$, 
$$\sigma\left\{H_i\left(\Xvec|\G\right)\right\} \subseteq \G^{\Xvec}.$$
\end{property}

\medskip

\begin{property}[{\bf Information Backtracking -- IB}]
For all $\G \in \Sigma$, $i\in \left[n\right]$, and $\Xvec \in \X^n$
$$\sigma\{X_i\} \subseteq \G^{\Xvec} \implies H_i\left(\Xvec|\G\right) = X_i.$$
\end{property}
\medskip

\noindent Property IA requires that the realized allocation be exclusively determined by the aggregate endowment and the target information $\G$, and not by the 
specific initial individual endowments. The measurability with respect to $\G$ reflects that in some applications, more information than simply the realized value of the total endowment may be available, and the mechanism designer may wish to allocate endowments relying on such information. On the other hand, Property IB states that when the initial endowments are sufficiently informative, no re-allocation of the initial endowments is needed.

\medskip

\begin{property}[{\bf Continuity from Above -- CA}]
If $\left(X_m\right)_{m\in \mathbb{N}}\in \X^{\mathbb{N}}$ is such that $X_m\downarrow X$ and $|X_m|\leq Y$ for some $X,Y\in \mathcal{X}$ and all $m\in \mathbb{N}$, then

\vspace{-0.2cm}

$$H_i\left(X_m\, \evec^{(i)} + \left(S-X_m\right)\evec^{(j)}|\G\right)\downarrow H_i\left(X\, \evec^{(i)} + \left(S-X\right) \evec^{(j)}|\G\right),$$

\medskip

\noindent for all $\left(S,\G\right) \in \mathcal{X} \times \Sigma$, and all $i,j\in [n]$ with $i\neq j$.
\end{property}

\medskip

\noindent{\bf{Terminology.}}
We will refer to the combination of the properties ZP, IA, IB, and CA as \textit{Allocation Properties}.

\medskip

\subsection{An illustrative example}

We provide an example illustrating how frictional costs can arise naturally as platform fees, in peer-to-peer pure-exchange markets mediated by an exchange platform, thereby motivating the role of frictions for allocation mechanisms. 

\medskip

Consider a pure-exchange economy with $n$ expected-utility (EU) agents, with utilities $(u_1,\ldots,u_n)$. Given initial endowments $\Xvec=(X_1,\ldots,X_n)\in \X^n$, an allocation $\Yvec=(Y_1,\ldots,Y_n)\in \mathcal{A}^{\le}_n(S)$ is said to be:

\medskip

\begin{itemize}
\item \textit{Individually rational} (IR) if it incentivizes the agents to participate in the market, that
is, for all $i \in [n]$, $\mathbb{E}[u_i(Y_i)]\geq \mathbb{E}[u_i(X_i)]$.

\bigskip

\item \textit{Pareto optimal} (PO) if it is IR and there does not exist any IR allocation $\Zvec\in \X^n$ such that, for all $i\in [n]$,
\[
\mathbb{E}[u_i(Z_i)]\geq \mathbb{E}[u_i(Y_i)],
\]

\smallskip

\noindent with at least one strict inequality.
\end{itemize}

\medskip

\begin{example}[Frictions]Suppose that the individual endowments $X_1,\dots,X_n$ of agents $1,\dots,n$ are independent and identically distributed (iid), positive, and nonconstant.   These conditions are not essential for the conclusion, but help to simplify calculations. 
These $n \geq 3$ individuals wish to enter into an exchange market mediated by a dedicated platform (the market maker). We will refer to the platform as agent $n+1$, which we assume to have a $0$ initial endowment. The aggregate endowment in this economy is denoted by $S := \sum_{i=1}^n X_i = \sum_{i=1}^{n+1} X_i$.

\medskip

Suppose that all agents $i\in [n+1]$ are EU agents with  the same  strictly concave utility function $u$ in the CRRA family, indexed by $\gamma>0$. That is, $u(x)= {x^{1-\gamma}}/{(1-\gamma)}$, for $x>0$ if  $\gamma\ne1$; and $u(x)=\log(x)$, for $x>0$ if $\gamma=1$.  Assume  that $\E[|u(S)|]<\infty$. 

\medskip

We claim that there exists $\epsilon>0$ such that 
\begin{equation}\label{eq:R1-example}
\Big(\left(1-\epsilon\right) S/n,\dots,\left(1-\epsilon\right) S/n, \epsilon S\Big)
\end{equation}
is a PO allocation for the $n+1$ agents. To show this, we first verify individual rationality. Clearly, the utility of agent $n+1$ improves from $u(0)$ to $\E[u(\epsilon S)]$. For the other agents, note that for any convex function $\varphi$,
$$
\varphi\left(S/n\right)
=\varphi\left(\frac{1}{n}\sum_{i=1}^n X_i\right)
\leq
\frac{1}{n}\sum_{i=1}^n \varphi(X_i).
$$

\smallskip

\noindent Hence, taking expectations (assuming integrability) and using the fact that the initial endowments are iid, it follows that, for each $i \in [n]$, 
$$
\E[\varphi\left(S/n\right)]
\leq 
\frac{1}{n} \, \sum_{i=1}^n \E[\varphi(X_i)]
=\E[\varphi(X_1)].
$$

\noindent Moreover, since $n \geq 3$ and since the random variables are assumed nonconstant (more precisely, there is no $c \in \mathbb{R}$ or $i \in [n]$ such that $\mathbb{P}(X_i=c) = 1$), then the inequality is strict for some convex function $\varphi$. Therefore, $S/n$ is strictly smaller than  $X_1$ in the convex order. Consequently, the strict concavity of $u$ implies that $\E[u(S/n)] >\E[u(X_1)]$.
Since $u$ is continuous, the Monotone Convergence Theorem implies that $\E[u((1-\epsilon)S/n)] \to \E[u(S/n)]$ as $\epsilon\downarrow 0$.
Therefore, for $\epsilon>0$ small enough, \eqref{eq:R1-example} yields an IR allocation.

\medskip

We now verify that the allocation  \eqref{eq:R1-example} is indeed PO.  To show this, note that it suffices to show that \eqref{eq:R1-example} maximizes the quantity
\begin{equation}
\label{eq:R1-example-2}
\sum_{i=1}^{n} \E[u(Y_i)] + \beta \, \E[u(Y_{n+1})],
\end{equation}

\noindent over $(Y_1,\dots,Y_{n+1})\in \mathbb A^=_{n+1}(S)$, where $\beta =\left(\frac{1-\epsilon}{n\epsilon}\right)^{-\gamma}>0$. The concavity of $u$ implies that it suffices to consider $Y_1=\dots=Y_n$ when maximizing \eqref{eq:R1-example-2}. 
Hence, it suffices to show that  for each $s>0$, the function
$$
y\mapsto nu(y) + \beta u(s-ny)
$$
is maximized at $y=(1-\epsilon)s/n$. 
This is a direct consequence of the first-order condition. Therefore, the allocation in \eqref{eq:R1-example} is PO.

\medskip

Next, we analyze the value of $\epsilon>0$. The individual rationality constraint is given by 
$$
\E[u((1-\epsilon) \, S/n)] \ge \E[u(X_i)],
$$
for each agent $i\in [n]$. 
If the platform (agent $n+1$) is a monopoly, it may be able to choose $\epsilon=\epsilon_0$ such that 
$$
\E[u((1-\epsilon_0) \, S/n)] = \E[u(X_i)].
$$

\smallskip

\noindent In practice, $\epsilon$ can be chosen slightly smaller than $\epsilon_0$, so that agent $i$ is strictly better off by joining the exchange platform. For simplicity, let us assume that $\epsilon_0$ is chosen by the above equality, meaning that all agents $i \in [n]$ are made indifferent between entering the peer-to-peer arrangement and not doing so, so that all potential consumer (utility) surplus is absorbed by the platform.

\medskip

Consider the situation where two agents $i$ and $j$ merge into one. This resulting agent is the new agent $i$ with endowment $X_i+X_j$. The new agent $j$ has zero endowment. In the above simple model, it is natural (although not strictly necessarily) to allocate $2 \, (1-\epsilon) \, S/n$ to the new   agent $i$ and $0$ to the new agent $j$, with $\epsilon>0$ to be recalculated.
In this case,  an acceptable $\epsilon>0$ needs to satisfy
$$
\E[u (2\,(1-\epsilon)\,S/n)] \ge  \E[u(X_i+X_j)],
$$

\medskip

\noindent to ensure individual rationality for the new agent $i$. Using the properties of the CRRA utility function, and  the fact that $X_i+X_j$ is strictly smaller than $2X_1$ in the convex order, we obtain 
$$
\E[u(2 (1-\epsilon)\, S/n)] 
\ge \E[u(X_i+X_j)]
> 
\E[u(2\,(1-\epsilon_0) \, S/n)]. 
$$

\smallskip

\noindent This immediately gives 
$\epsilon < \epsilon_0$ 
as a necessary condition for the new agent $i$ to remain in the risk-sharing pool, i.e., to satisfy individual rationality. 
As a consequence, the platform will need to allocate more payoffs in total to the $n$ agents and take less for itself.
In particular, the frictional cost to agents $i$ and $j$ decreases from $X_i+X_j-2(1-\epsilon_0)S$
to $X_i+X_j-2(1-\epsilon)S$.

\medskip

The above analysis reflects the idea of Frictional Participation (FP): by combining endowments, the frictional cost reduces. Therefore, the above simple example shows that FP naturally appears in a Pareto-optimal and individually rational allocation.
\end{example}

\medskip

\subsection{Frictional Participation and Convexity of Costs}

As mentioned above, our axiom FP is meant to capture frictions in the economy that result from our basic assumption that transfers of state-contingent endowments to any agent with no initial endowment (i.e., state-contingent subsidization) are costly to the economy. These frictions result in a form of convexity of the cost functional associated with a given allocation mechanism. We formalize this below.

\medskip

\begin{proposition}\label{PropConvexCost}
If $H \in \mathcal{AM}$ satisfies FP, OA, SI, and AA, then for all $i,j \in [n]$ and all $\G \in \Sigma$, the mappings 
$\Xvec \mapsto \mathcal{C}^{\Xvec,\G}_{i,j}(H)$
and $\Xvec \mapsto\mathcal{C}^{\Xvec,\G}(H)$
satisfy
$$\mathcal{C}^{\lambda\Xvec+(1-\lambda)\Yvec,\G}_{i,j}(H)\leq \lambda \mathcal{C}^{\Xvec,\G}_{i,j}(H)+(1-\lambda)\mathcal{C}^{\Yvec,\G}_{i,j}(H)$$
and
$$\mathcal{C}^{\lambda\Xvec+(1-\lambda)\Yvec,\G}(H)\leq \lambda \mathcal{C}^{\Xvec,\G}(H)+(1-\lambda)\mathcal{C}^{\Yvec,\G}(H),$$

\medskip

\noindent for all $\lambda\in [0,1]$, and all $\Xvec,\Yvec\in \X^n$ with $S^{\Xvec}=S^{\Yvec}$.
\end{proposition}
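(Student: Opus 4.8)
The plan is to reduce both frictional-cost functionals to \emph{separable} sums of a single-variable map, and then prove that map is convex by showing it is sublinear.

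First I would use OA together with AA to collapse the allocation to each agent into a function of that agent's own endowment and the aggregate alone. Concretely, the consequence of OA recorded above gives $H_i(\Xvec|\G)=H_i(0,\ldots,0,X_i,S^{\Xvec}-X_i,0,\ldots,0|\G)$, so $H_i(\Xvec|\G)$ depends on $\Xvec$ only through the pair $(X_i,S^{\Xvec})$; AA then renders this dependence both independent of the index $i$ and independent of the slot carrying $S^{\Xvec}-X_i$. Writing $S:=S^{\Xvec}$, this yields a single map $g=g_\G^S\colon\X\to\X$ with $H_k(\Xvec|\G)=g(X_k)$ for every $k\in[n]$. Setting $h:=\mathrm{id}_\X-g$, i.e.\ $h(t)=t-g(t)$, both costs become separable: $\mathcal{C}^{\Xvec,\G}_{i,j}(H)=h(X_i)+h(X_j)$ and $\mathcal{C}^{\Xvec,\G}(H)=S-\sum_k g(X_k)=\sum_{k=1}^n h(X_k)$.

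Next I would read off the structure of $g$ from SI and FP. Applying SI to a configuration with $X_j=0$ and $X_i=a$ and invoking the reduction gives $g(\alpha a)=\alpha g(a)$ for all $\alpha\in[0,1]$ and $a\in\X$ (the case $\alpha=0$ yielding $g(0)=0$), so $g$, and hence $h$, is positively homogeneous. Applying FP to the same type of configuration — transferring $Z$ from $i$ to the zero-endowment agent $j$ — and using $g(0)=0$ turns the cost inequality $\mathcal{C}^{\Yvec,\G}_{i,j}(H)\ge\mathcal{C}^{\Xvec,\G}_{i,j}(H)$ into $g(a)\ge g(a-Z)+g(Z)$; since $a$ and $Z$ are arbitrary, this is superadditivity $g(u+v)\ge g(u)+g(v)$. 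Thus $g$ is superlinear, equivalently $h$ is subadditive and positively homogeneous, hence sublinear; a sublinear map is convex, since $h(\lambda x+(1-\lambda)y)\le h(\lambda x)+h((1-\lambda)y)=\lambda h(x)+(1-\lambda)h(y)$ for $\lambda\in[0,1]$.

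Finally, convexity of both cost functionals follows by applying convexity of $h$ coordinatewise. The hypothesis $S^{\Xvec}=S^{\Yvec}=:S$ is exactly what guarantees that the \emph{same} map $g_\G^S$ represents $H$ at $\Xvec$, at $\Yvec$, and at $\lambda\Xvec+(1-\lambda)\Yvec$, since all three share the aggregate $S$; then $\mathcal{C}^{\lambda\Xvec+(1-\lambda)\Yvec,\G}_{i,j}(H)=h(\lambda X_i+(1-\lambda)Y_i)+h(\lambda X_j+(1-\lambda)Y_j)$, and convexity of $h$ delivers the local bound, while summing over all coordinates delivers the global bound. I expect the main obstacle to be the first step: rigorously justifying, from OA and AA, that the allocation to each agent is a function of its own endowment and the aggregate only, since it is precisely this separable representation that makes the subsequent sublinearity argument routine.
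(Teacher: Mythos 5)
Your proposal is correct and takes essentially the same route as the paper: the paper's proof of this proposition simply points to the proof of Theorem \ref{RGRS_prop}, which performs exactly your reduction (via OA and AA) to a single map $h^{S|\G}(X)=H_1\left(X,S-X,0,\ldots,0|\G\right)$, derives its superadditivity from FP and its positive homogeneity from SI, and thus obtains concavity of each agent's allocation and convexity of the local and global costs. Your extra observation that $g(0)=0$ follows from SI at $\alpha=0$ is a nice touch, since the cited theorem's proof invokes ZP (which the proposition does not assume, but which SI implies in exactly this way), making your argument self-contained under the stated hypotheses.
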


\medskip

Therefore, if an allocation mechanism $H \in \mathcal{AM}$ satisfies FP, SI, OA, and AA, then the local and global frictional costs are convex over the collection of all feasible allocations that have the same total. They are in fact also positively homogeneous (by inspection of the proof), and hence subadditive.

\medskip
\subsection{Representation of Allocation Mechanisms}
In this section we introduce robust allocation mechanisms and their representation through the previous axioms and properties. We start with a more abstract setting before moving on to a refinement of such robust mechanisms, as expressed by nonlinear conditional expectations. 

\medskip
\subsubsection{Abstract Robust Mechanisms}

Here we provide an abstract representation result for scale-invariant allocation mechanisms exhibiting FP. We start with the following definition.

\medskip

\begin{definition}\label{RBGRS}
An allocation mechanism $H\in \mathcal{AM}$ is said to be a \textit{robust allocation mechanism} if
    \[
H(\Xvec|\G)=\begin{bmatrix} \inf\limits_{L_1\in D_1(\Xvec,\G)}L_1(X_1)\\
    \vdots\\
\inf\limits_{L_n\in D_n(\Xvec,\G)}L_n(X_n)
    \end{bmatrix},
    \]

    \medskip
    
    \noindent for all $\Xvec\in \X^n$ and $\G\in \Sigma$, and where each $D_i(\Xvec,\G)$ is a convex set of linear mappings $L:\X\to \X$ such that $L(X)\geq H_i(X\evec^{(i)}+\left(S^{\Xvec}-X\right)\evec^{(j)}|\G)$ for all $j\in \left[n\right]$ with $i\neq j$ and all $X\in \X$.
\end{definition}

\medskip

\begin{remark}[Robust Allocation Mechanisms and Convex Costs]
It is immediate to see that for any robust allocation mechanism, both the local and global frictional costs are convex (in the sense of Proposition \ref{PropConvexCost}). 
\end{remark}

\medskip

It is important to observe that for a robust allocation mechanism $H$, the sets $D_i(\Xvec,\G)$ can be substituted by the following supporting sets:

\vspace{-0.1cm}

$$D_i\left(H_i,S^{\Xvec},\G\right):=\bigg\{ L\in \mathcal{L}:\forall X\in \X, \forall j\in \left[n\right]\setminus \left\lbrace i\right\rbrace,\ L(X)\geq H_i(X\evec^{(i)}+\left(S^{\Xvec}-X\right)\evec^{(j)}|\G) \bigg\},$$

\medskip

\noindent for all $\Xvec\in \X^n$ and all $\G\in \Sigma$. The latter sets are potentially larger than the former, although taking infima with respect to such sets yield the same values. One aspect that stands out from the previous representation is its individualistic feature. Indeed, each set $D_i(\Xvec,\G)$ depends on the associated agent label $i$. It is intuitive to expect such dependence to disappear under some fairness or anonymity requirement. This justifies the following representation.

\medskip

\begin{theorem}\label{RGRS_prop}
For a mapping $H:\X^n\times \Sigma\to \X^n$, the following are equivalent:

\smallskip

\begin{enumerate}[label=(\roman*)]
    \item $H\in \mathcal{AM}$ and satisfies AA, OA, SI, FP, and ZP.
    \medskip
    \item $H$ is a robust allocation mechanism with $D_i\left(H_i,S,\G\right)=D_j\left(H_j,S,\G\right)$, for all $i,j\in \left[n\right]$, $S\in \X$, and $\G\in \Sigma$.
\end{enumerate}
\end{theorem}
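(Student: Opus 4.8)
\textbf{Proof plan for Theorem \ref{RGRS_prop}.}

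The plan is to prove the two implications separately, with the bulk of the work in establishing (i) $\implies$ (ii). For the direction (ii) $\implies$ (i), I would verify each axiom directly from the robust representation. Membership in $\mathcal{AM}$ and property ZP follow from the defining inequality $L(X) \geq H_i(X\evec^{(i)} + (S^{\Xvec}-X)\evec^{(j)}|\G)$ together with the feasibility built into each $H_i$; AA and the symmetry of the supporting sets $D_i = D_j$ give Agent Anonymity; OA follows because each coordinate depends only on $X_i$ and $S^{\Xvec}$ through the infimum; and FP and SI are inherited from the structural properties already noted in the Remark following Definition \ref{RBGRS} (robust mechanisms have convex costs) together with the scaling behavior of the linear maps $L$. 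This direction is essentially bookkeeping.

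For the main direction (i) $\implies$ (ii), the strategy is first to reduce each coordinate to a function of a single variable, then to represent that function as an infimum of linear maps. The reduction step uses OA: as noted in the excerpt, OA implies $H_i(\Xvec|\G) = H_i(0,\ldots,X_i,\ldots,S^{\Xvec}-X_i,\ldots,0|\G)$, so that $H_i(\Xvec|\G)$ depends on $\Xvec$ only through the pair $(X_i, S^{\Xvec})$. I would therefore define, for fixed $S \in \X$ and $\G \in \Sigma$, the operator $T_i(X) := H_i(X\evec^{(i)} + (S-X)\evec^{(j)}|\G)$ and establish that it is superlinear (concave and positively homogeneous) as a map $\X \to \X$. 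Positive homogeneity should come from SI, and concavity (superadditivity) from FP via Proposition \ref{PropConvexCost}, since convexity of the local frictional cost $\Xvec \mapsto \mathcal{C}^{\Xvec,\G}_{i,j}(H)$ translates, after accounting for the constant contribution of the fixed total $S$, into concavity of $T_i$ in its single argument. Once superlinearity is in hand, the envelope/duality result alluded to in the introduction (representation of superlinear operators between Dedekind complete Riesz spaces as infima of linear operators) yields
$$
T_i(X) = \inf_{L \in D_i(H_i, S, \G)} L(X),
$$
with $D_i$ exactly the supporting set defined in the excerpt.

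The remaining task is to show the agent-independence $D_i(H_i, S, \G) = D_j(H_j, S, \G)$, which is where AA enters decisively. The idea is that AA forces $T_i$ and $T_j$ to be the \emph{same} superlinear functional (after the OA reduction, both are the allocation received by an agent holding endowment $X$ against a complementary endowment $S-X$, and the agents' labels are interchangeable), so their envelope representations coincide; since each supporting set $D_i(H_i,S,\G)$ is determined intrinsically by the functional $T_i$ as the collection of all linear majorants, identical functionals produce identical supporting sets.

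\textbf{Main obstacle.} I expect the hardest step to be the rigorous passage from the single-variable superlinear functional to its infimal linear representation in the Riesz-space setting: ordinary Hahn--Banach sandwich arguments must be replaced by their order-theoretic analogues valid for operators into a Dedekind complete $\X$, and one must ensure the supporting linear maps $L$ genuinely land in $\mathcal{L}$ (operators $\X \to \X$) rather than merely in some larger dual object. Verifying that the domain of validity of the cited envelope theorem matches the concrete space $\X = L^1(\F)$ — in particular checking the requisite continuity (via CA, if needed) and the Dedekind completeness hypotheses — is the subtle point, whereas the reductions via OA, SI, FP, and AA are comparatively mechanical once the single-variable concavity is correctly extracted from Proposition \ref{PropConvexCost}.
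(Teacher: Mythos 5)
Your overall architecture---reduce each coordinate to a single-variable functional via OA, prove it is superlinear, apply an order-theoretic Hahn--Banach envelope theorem into the Dedekind complete space $\X=L^1(\F)$, and then use AA to identify the supporting sets---is exactly the paper's strategy, and your handling of the AA step (identical functionals have identical sets of linear minorants) matches the paper's Lemma \ref{abs_agentsano}. The genuine gap is in how you obtain superadditivity of $T_i$. You propose to extract concavity of $T_i$ ``from FP via Proposition \ref{PropConvexCost}.'' But Proposition \ref{PropConvexCost} is proved in the paper as an immediate consequence of the proof of Theorem \ref{RGRS_prop} itself: its only available proof runs through precisely the superadditivity you want to deduce from it. FP on its own is a statement about transfers to a zero-endowment agent; it says nothing about convex combinations, and there is no independent route from FP to convexity of the frictional cost that bypasses superadditivity. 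As written, your derivation has the arrows backwards and is circular within the paper's logical structure.

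The missing idea is the direct argument, which also explains why ZP and $n\geq 3$ appear in the statement. First, FP is equivalent (Lemma \ref{cost_sub_is_split_robu}) to a merging inequality: if $\Xvec$ is obtained from $\Yvec$ by moving agent $j$'s endowment onto agent $i$, then $H_i(\Xvec|\G)+H_j(\Xvec|\G)\geq H_i(\Yvec|\G)+H_j(\Yvec|\G)$. Superadditivity of $h^{S|\G}(X):=H_1(X,S-X,0,\ldots,0|\G)$ is then proved by parking the second summand in a \emph{third} slot: apply the merging inequality to $\Yvec=(X,S-X-Y,Y,0,\ldots,0)$ with the pair $(1,3)$, use ZP to annihilate $H_3(X+Y,S-X-Y,0,\ldots,0|\G)$, use AA to rewrite $H_3(X,S-X-Y,Y,0,\ldots,0|\G)$ as $H_1(Y,S-X-Y,X,0,\ldots,0|\G)$, and finally use OA to collapse both terms to $h^{S|\G}(X)$ and $h^{S|\G}(Y)$. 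None of the merging reformulation, the role of ZP, or the third agent appears in your plan, and without them the superadditivity claim is unsupported. A symmetric remark applies to your (ii)$\implies$(i) direction: FP does not follow from the Remark that robust mechanisms have convex costs (convexity of costs does not imply the zero-endowment merging property); it follows from superadditivity of $X\mapsto\inf_{L\in D}L(X)$ over the common supporting set, again combined with Lemma \ref{cost_sub_is_split_robu}. By contrast, your closing worry about continuity is unfounded: Theorem \ref{HBrep} is purely algebraic and order-theoretic, so CA plays no role in this theorem (it enters only in Theorem \ref{thm1}).
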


\medskip

\noindent This result highlights how a combination of axioms and properties that we introduced in the previous section fully characterizes robust allocation mechanisms. In particular, it is interesting to note how FP is captured by the \textit{superadditive} nature of each $H_i$. Informally, superadditivity in this setting reflects the idea that it is less costly to the economy if some initial endowments are grouped together in one agent rather than evaluated separately, as less allocation work needs to be performed. OA is captured by the fact that each $H_i$ depends exclusively on $X_i$ and the sum of the endowments $S^{\Xvec}$. SI is expressed through the positive homogeneity of each $H_i$. AA is highlighted by the fact that all supporting sets are equal.

\medskip

In the next section, we will specialize the previous representation in order to obtain a specific functional form for $H$ in terms of conditional expectations.

\medskip
\subsubsection{Robust Conditional Mean Allocation Mechanisms}

We first introduce some notation. Fix $\Xvec\in \X^n$, $\G\in \Sigma$, and a vector $\mathbf{Q}:=\left(Q_1,\ldots,Q_n\right)$ of probability measures over $(\Omega,\F)$. Whenever each $X_i$ is $Q_i$-integrable, we denote by 
\[
\mathbb{E}^{\mathbf{Q}}\left[\Xvec|\mathcal{G}\right]:=
\begin{bmatrix}
\mathbb{E}^{Q_1}\left[X_1|\mathcal{G}\right]\\
\vdots\\
\mathbb{E}^{Q_n}\left[X_n|\mathcal{G}\right]
\end{bmatrix}
\]

\medskip

\noindent a vector of conditional expectations, where each entry identifies the probability measure to be implemented from $\mathbf{Q}$. As in the previous section, we will focus on envelope representations of linear allocation mechanisms. To this end, consider a collection $\left(\mathcal{Q}_i\right)_{i\in \left[n\right]}$ of sets of probability measures over $(\Omega,\F)$, and denote by $\mathcal{Q}$ the product set $\bigtimes_{i\in \left[n\right]}\mathcal{Q}_i$. Analogously, we denote inferior envelopes of vectors of conditional expectations by
\[
\inf\limits_{\mathbf{Q}\in \mathcal{Q}}\mathbb{E}^{\mathbf{Q}}\left[\Xvec|\G\right]:=\begin{bmatrix}  \inf\limits_{Q_1\in \mathcal{Q}_1}\mathbb{E}^{Q_1}\left[X_1|\mathcal{G}\right]\\
\vdots\\
 \inf\limits_{Q_n\in \mathcal{Q}_n}\mathbb{E}^{Q_n}\left[X_n|\mathcal{G}\right]
\end{bmatrix},
\]

\medskip

\noindent for all $\Xvec\in \X^n$ and $\G\in \Sigma$.

\bigskip

\begin{definition}\label{IRGCMRS}
A mapping $H:\X^n\times \Sigma\to \X^n$ is said to be a \textit{robust conditional mean allocation mechanism} if 
    \[
H(\Xvec|\G)=\inf\limits_{\mathbf{Q}\in \mathcal{Q}\left(S^{\Xvec},\G\right)}\mathbb{E}^{\mathbf{Q}}\left[\Xvec|\G^{\Xvec}\right],
    \]

    \medskip

    \noindent     for all $\Xvec\in L^{1}(\F)^n$, where $\mathcal{Q}\left(S^{\Xvec},\G\right):=\bigtimes_{i\in \left[n\right]}\mathcal{Q}_i\left(S^{\Xvec},\G\right)$ and each $\mathcal{Q}_i\left(S^{\Xvec},\G\right)$ is a set of probability measures on $(\Omega,\F)$.
\end{definition}

\medskip

The frictional cost of $H$ is given by
$$\mathcal{C}^{\Xvec|\G}(H) = S^{\Xvec} - \sum_{i=1}^n \inf\limits_{Q_i\in \mathcal{Q}_i\left(S^{\Xvec},\G\right)}\mathbb{E}^{Q_i}\left[X_i|\mathcal{G}^{\Xvec}\right] \geq 0.$$

\vspace{0.1cm}

\noindent Clearly, all robust conditional mean allocation mechanisms are robust allocation mechanisms. The converse is easily seen to be false. Adapting from Denuit et al.\ \cite[Example 7.2]{Denuitetal2022} in the context of  risk sharing rules (see Section \ref{SecRiskSharing} herein),  it is possible to provide an example of a robust allocation mechanism that is not a robust conditional mean allocation mechanism. In particular, an example is provided by the robust allocation mechanism given by

\vspace{-0.1cm}

\[
H_i(\Xvec|\G):=\delta_i X_i+(1-\delta_i)\inf_{Q\in \mathcal{Q}\left(S^{\Xvec},\G\right)}\mathbb{E}^{Q}\left[X_i|\G^{\Xvec}\right],
\]

\medskip

\noindent for all $i\in \left[n\right]$, $\Xvec\in \X^n$, and $\G\in \Sigma$, for some $\delta_i\in \left(0,1\right]$, and for some set of probability measures $\mathcal{Q}(S^{\Xvec},\G)$.

\medskip

Under additional measurability and normalization requirements, we retrieve a sharper representation result, which fully characterizes robust mean allocation mechanisms. Before stating the result we introduce some additional notation. Given $\G\in \Sigma$, we denote by $\bigtriangleup^{\mathbb{P}}(\F|\G)$ the set of $Q\in \bigtriangleup(\F)$ such that $Q\ll \mathbb{P}$ and $Q|_{\G}=\mathbb{P}|_{\G}$. We are now ready to state our main representation result.

\medskip

\begin{theorem}\label{thm1}
Let $H:\X^n\times \Sigma \to \X^n$. The following are equivalent:

\medskip

\begin{enumerate}[label=(\roman*)]
    \item $H\in \mathcal{AM}$ and satisfies the Allocation Axioms and the Allocation Properties.

    \medskip
    
    \item $H$ is a robust conditional mean allocation mechanism with $\mathcal{Q}_i(S,\G)=\mathcal{Q}_j(S,\G)$ and $\mathcal{Q}_i(S,\G)\subseteq \bigtriangleup^{\mathbb{P}}\left(\F|\G^S\right)$, for all $i,j\in \left[n\right]$, $S\in \X$, and $\G\in \Sigma$.
\end{enumerate}
\end{theorem}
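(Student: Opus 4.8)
The plan is to prove the two implications separately, treating (ii)$\Rightarrow$(i) as a direct verification and concentrating the work on (i)$\Rightarrow$(ii). The natural starting point for the substantive direction is Theorem \ref{RGRS_prop}: since the Allocation Axioms include AA, OA, SI, and FP and the Allocation Properties include ZP, that theorem already produces a robust allocation mechanism with common supporting sets, namely
$$H_i(\Xvec|\G)=\inf_{L\in D(S^{\Xvec},\G)}L(X_i),\qquad i\in[n],$$
where $D(S,\G)$ is a set of linear operators not depending on $i$. By OA and AA the right-hand side depends on $\Xvec$ only through $X_i$ and $S^{\Xvec}$, so everything reduces to the single superlinear functional $\phi_{S,\G}(X):=H_i\big(X\evec^{(i)}+(S-X)\evec^{(j)}\,|\,\G\big)$, and the task becomes to show that the operators in its supporting set are precisely conditional expectations $\E^{Q}[\,\cdot\,|\G^{S}]$ with $Q\in\bigtriangleup^{\mathbb{P}}(\F|\G^{S})$.

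First I would record the structural properties of $\phi_{S,\G}$ and transfer them to every $L\in D(S,\G)$, each of which satisfies $L\geq\phi_{S,\G}$ pointwise by definition of the supporting set. For \emph{positivity}, taking $\Xvec=X\evec^{(i)}+(S-X)\evec^{(j)}$ with $X\geq 0$ and any third index $k$ (available since $n\geq 3$), IF gives $H_i(\Xvec|\G)\geq H_k(\Xvec|\G)$ while ZP forces $H_k(\Xvec|\G)=0$; hence $\phi_{S,\G}\geq 0$ on the positive cone, and consequently $L(X)\geq\phi_{S,\G}(X)\geq 0$ whenever $X\geq0$, so each $L$ is positive. For \emph{fixing the conditioning algebra}, IB yields $\phi_{S,\G}(\pm Y)=\pm Y$ for $Y\in L^{1}(\G^{S})$, and combining $L(Y)\geq Y$ with $L(-Y)=-L(Y)\geq -Y$ gives $L(Y)=Y$; in particular $L(1)=1$, so $L$ is unital. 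For \emph{measurability of the range}, IA states that $\phi_{S,\G}(X)\in L^{1}(\G^{S})$, which, after composing each $L$ with $\E^{\mathbb{P}}[\,\cdot\,|\G^{S}]$ (a reduction that preserves the infimum by conditional monotone convergence and leaves the two previous properties intact), lets me take the supporting operators to map into $L^{1}(\G^{S})$ while remaining positive, unital, and equal to the identity on $L^{1}(\G^{S})$. Finally, CA guarantees order continuity of $\phi_{S,\G}$ and hence of the relevant operators.

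The crux is then the identification of these operators. Each such $L$ is a positive, unital, order-continuous linear projection of $\X=L^{1}(\F)$ onto the sublattice $L^{1}(\G^{S})$, and by the Moy--Douglas--Rota-type characterization of conditional expectations established in the Appendix, every such operator has the form $L=\E^{Q}[\,\cdot\,|\G^{S}]$ for a probability measure $Q$. Order continuity (CA) is what upgrades the finitely additive set function $A\mapsto\E^{\mathbb{P}}[L(\One_A)]$ to a countably additive measure, while vanishing of $L$ on $\mathbb{P}$-null sets forces $Q\ll\mathbb{P}$; the fact that $L$ fixes $L^{1}(\G^{S})$ gives $Q|_{\G^{S}}=\mathbb{P}|_{\G^{S}}$, so that $Q\in\bigtriangleup^{\mathbb{P}}(\F|\G^{S})$. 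Collecting the measures associated with $D(S,\G)$ into a set $\mathcal{Q}(S,\G)$, common to all $i$ because $D(S,\G)$ is, delivers the robust conditional mean representation with $\G^{\Xvec}=\G^{S}$ as conditioning algebra. I expect this operator-identification step---promoting the abstract supporting operators to genuine conditional expectations and extracting absolute continuity and countable additivity from CA---to be the main obstacle, since the envelope theorem alone only guarantees abstract positive linear operators.

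For the converse (ii)$\Rightarrow$(i), I would verify each condition directly from $H(\Xvec|\G)=\inf_{\Qvec\in\mathcal{Q}(S^{\Xvec},\G)}\E^{\Qvec}[\Xvec|\G^{\Xvec}]$. Feasibility $H\in\mathcal{AM}$ follows because any common $Q$ gives $\sum_i\E^{Q}[X_i|\G^{\Xvec}]=\E^{Q}[S^{\Xvec}|\G^{\Xvec}]=S^{\Xvec}$ (as $S^{\Xvec}$ is $\G^{\Xvec}$-measurable), so that $\sum_iH_i(\Xvec|\G)\leq S^{\Xvec}$. Monotonicity of conditional expectation together with order-preservation of infima gives IF; positive homogeneity gives SI; ZP and IB are immediate from $\E^{Q}[0|\cdot]=0$ and from $\E^{Q}[X_i|\G^{\Xvec}]=X_i$ when $X_i$ is $\G^{\Xvec}$-measurable; IA holds since conditional expectations and their essential infima are $\G^{\Xvec}$-measurable; OA and AA follow from the dependence on $(X_i,S^{\Xvec})$ only and from the common, symmetric sets $\mathcal{Q}(S,\G)$; FP follows from the superadditivity of the infimum-of-linear form, equivalently from the convexity of the cost in Proposition \ref{PropConvexCost}; and CA follows from conditional dominated convergence.
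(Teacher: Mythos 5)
Your proposal is correct and follows essentially the same route as the paper's own proof: reduce via OA/AA to a single superlinear functional $X \mapsto H_1\left(X,S-X,0,\ldots,0|\G\right)$, translate IF, IB, IA, and CA into monotonicity, $\G^{S}$-normalization, $\G^{S}$-measurability, and continuity from above, then invoke the Appendix machinery that identifies the dominating linear operators as conditional expectations (Proposition \ref{main_representation}, whose proof is exactly the ``countably additive $Q$, $Q\ll\mathbb{P}$, $Q|_{\G^{S}}=\mathbb{P}|_{\G^{S}}$'' argument you sketch) together with Corollary \ref{agentsano} for the common ambiguity sets, and finish the converse by direct verification. The only differences are organizational --- you pass explicitly through Theorem \ref{RGRS_prop} and upgrade the supporting operators one at a time (your composition with $\E\left[\cdot|\G^{S}\right]$ preserves the infimum because the envelope is attained, not because of conditional monotone convergence, a minor misstatement), whereas the paper applies Proposition \ref{main_representation} directly to the functional, which packages those operator-level steps.
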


\medskip

Theorem \ref{thm1} provides an axiomatization of allocation mechanisms robust to model misspecification. Robust conditional allocation mechanisms determine allocations by minimizing conditional expectations with respect to the least favorable model in each \(\mathcal{Q}_i(S,\G)\). Thus \(\mathcal{Q}_i(S,\G)\) functions as an ambiguity set: lacking full confidence in its knowledge of the data-generating process, the mechanism designer designer considers alternative probabilistic models to hedge against the possibility of employing a misspecified proxy for \(\mathbb{P}\).

\medskip

An interesting difference between Theorem \ref{RGRS_prop} and Theorem \ref{thm1} is that in the latter we do not have to require that $H\in \mathcal{AM}$. This is indeed an implication of FP and IB.

\medskip

\subsubsection{Conditional Mean Allocation Mechanisms}

In this section, we study the properties that lead the set of multiple probabilistic models of a robust conditional mean allocation mechanism to collapse to a singleton. In particular, we introduce the concept of \textit{conditional mean allocation} mechanisms.

\medskip

\begin{definition}
A mapping $H:\X^n\times \Sigma\to \X^n$ is said to be a \textit{conditional mean allocation mechanism} if for all $\Xvec\in \X^n$ and $\G\in \Sigma$, there exists a probability measure $Q^{\G^{\Xvec}}$ such that
$$H_i\left(\Xvec|\G\right) =  \mathbb{E}^{Q^{\G^{\Xvec}}}\left[X_i \vert \G^{\Xvec}\right],$$
for all $i\in \left[n\right]$.
\end{definition}

\medskip

It is immediate to see that any conditional mean allocation mechanism is a robust conditional mean allocation mechanism. However, there are important differences. First, the frictional costs of conditional mean allocation mechanisms are always zero. Indeed, for all $\Xvec\in \X^n$, $\G\in \Sigma$, and $Q\in \bigtriangleup(\F)$, 
\[
S^{\Xvec}-\sum_{i=1}^n\mathbb{E}^{Q}\left[X_i|\G^{\Xvec}\right]=0,
\]

\medskip

\noindent which follows from the linearity of conditional expectations. This \textit{linear} aspect of linear conditional allocation mechanisms is due to the fact that these mechanisms satisfy a condition of \textit{Frictionless Participation}, which we define below.

\medskip

\begin{property}[{\bf Frictionless Participation -- FP*}]
For all $\G \in \Sigma$ and $\Xvec \in \X^n$, if there exists $j \in [n]$ such that $X_j=0$, then for all $i \in [n]$ and all $Z \in \X$,

\vspace{-0.15cm}

$$\mathcal{C}_{i,j}^{\Yvec,\G}(H)= \mathcal{C}_{i,j}^{\Xvec,\G}(H),$$
where
$$\Yvec := \Xvec + Z \evec^{(j)} - Z \evec^{(i)}.$$
\end{property}

\medskip

\noindent This is tantamount to imposing a linearity requirement on $H$. In particular, this property yields the following representation.

\medskip

\begin{proposition}\label{subj_CMRS}
Let $H:\X^n\times \Sigma\to \X^n$. The following are equivalent:

\medskip

\begin{enumerate}[label=(\roman*)]
    \item $H\in \mathcal{AM}$ and satisfies IF, AA, OA, FP*, SI and the Allocation Properties.

    \medskip
    
    \item $H$ is a conditional mean allocation mechanism such that $Q^{\G^{\Xvec}}\in \bigtriangleup^{\mathbb{P}}(\F|\G)$ for all $\G\in \Sigma$ and all $\Xvec\in \X^n$.
\end{enumerate}
\end{proposition}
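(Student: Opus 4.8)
The plan is to obtain (i)$\Rightarrow$(ii) by first running Theorem \ref{thm1} and then using the \emph{extra} strength of FP* (equality, not just the inequality in FP) to collapse the ambiguity set to a point. Since FP* clearly implies FP, the hypotheses in (i) contain all the Allocation Axioms together with the Allocation Properties, so Theorem \ref{thm1} applies and yields the robust conditional mean representation $H_i(\Xvec|\G)=\inf_{Q\in \mathcal{Q}(S^{\Xvec},\G)}\mathbb{E}^{Q}[X_i|\G^{\Xvec}]$ with a common ambiguity set $\mathcal{Q}(S,\G):=\mathcal{Q}_i(S,\G)\subseteq \bigtriangleup^{\mathbb{P}}(\F|\G^{S})$. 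Everything in (ii) except the singleton nature of the allocation is then in place, and the task reduces to showing that FP* forces the global frictional cost to vanish identically.

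This vanishing is the crux of the argument, and I expect it to be the main obstacle. The Remark following axiom SI shows that OA together with FP gives $\mathcal{C}^{\Yvec,\G}(H)\geq \mathcal{C}^{\Xvec,\G}(H)$ whenever $\Xvec$ is obtained from $\Yvec$ by consolidating agent $j$'s endowment into agent $i$, i.e.\ $\Xvec=\Yvec-Y_j\evec^{(j)}+Y_j\evec^{(i)}$. Re-running that reasoning with FP* in place of FP upgrades the inequality to an equality: OA leaves $H_k$ unchanged for $k\notin\{i,j\}$, while FP* (via Lemma \ref{cost_sub_is_split_robu}) leaves the pairwise cost $\mathcal{C}_{i,j}^{\cdot,\G}(H)$ unchanged, so the global cost is invariant under such a consolidation. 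Iterating over $j=2,\dots,n$, each time consolidating into agent $1$, I reach the configuration $(S^{\Xvec},0,\dots,0)$ without altering $\mathcal{C}^{\Xvec,\G}(H)$; the delicate point is simply to check that the hypotheses of OA and FP* (in particular the presence of a zero coordinate at the target agent) remain intact at each stage. At this fully consolidated configuration, ZP gives $H_k=0$ for $k\geq 2$, and IB gives $H_1=S^{\Xvec}$ because $\sigma\{S^{\Xvec}\}\subseteq \G^{\Xvec}$; hence its global cost is $S^{\Xvec}-S^{\Xvec}=0$, and therefore $\mathcal{C}^{\Xvec,\G}(H)=0$ for all $\Xvec$ and $\G$.

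The collapse is then immediate. Fix $\Xvec,\G$ and any $Q_0\in \mathcal{Q}(S^{\Xvec},\G)$, which is nonempty since $H$ is real-valued. By definition of the infimum, $\mathbb{E}^{Q_0}[X_i|\G^{\Xvec}]\geq H_i(\Xvec|\G)$ for each $i$; summing and using that $S^{\Xvec}$ is $\G^{\Xvec}$-measurable gives $S^{\Xvec}=\sum_i \mathbb{E}^{Q_0}[X_i|\G^{\Xvec}]\geq \sum_i H_i(\Xvec|\G)=S^{\Xvec}$, the last equality being the vanishing of the cost. The extreme terms coincide, so every nonnegative summand $\mathbb{E}^{Q_0}[X_i|\G^{\Xvec}]-H_i(\Xvec|\G)$ is zero, i.e.\ $H_i(\Xvec|\G)=\mathbb{E}^{Q_0}[X_i|\G^{\Xvec}]$ for all $i$ and all $Q_0\in \mathcal{Q}(S^{\Xvec},\G)$. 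Taking $Q^{\G^{\Xvec}}$ to be any element of $\mathcal{Q}(S^{\Xvec},\G)$ exhibits $H$ as a conditional mean allocation mechanism; and since $\mathcal{Q}(S^{\Xvec},\G)\subseteq \bigtriangleup^{\mathbb{P}}(\F|\G^{\Xvec})$ with $\G\subseteq \G^{\Xvec}$, one has $Q^{\G^{\Xvec}}\ll \mathbb{P}$ and $Q^{\G^{\Xvec}}|_{\G}=\mathbb{P}|_{\G}$, that is $Q^{\G^{\Xvec}}\in \bigtriangleup^{\mathbb{P}}(\F|\G)$, as required.

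The converse (ii)$\Rightarrow$(i) is a routine verification, with the conditional expectations read through the Bayes formula so that each $H_i(\Xvec|\G)$ is a genuine $\mathbb{P}$-equivalence class. Linearity of conditional expectation gives $\sum_i H_i=\mathbb{E}^{Q^{\G^{\Xvec}}}[S^{\Xvec}|\G^{\Xvec}]=S^{\Xvec}$, so $H\in \mathcal{AM}$ with zero cost; FP* holds because the pairwise cost $\mathcal{C}^{\Xvec,\G}_{i,j}(H)=X_i+X_j-\mathbb{E}^{Q^{\G^{\Xvec}}}[X_i+X_j|\G^{\Xvec}]$ depends only on $X_i+X_j$ and $\G^{\Xvec}$, both unchanged by the transfer. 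Monotonicity of conditional expectation gives IF; the use of a single measure together with the symmetry of $\G^{\Xvec}$ in $\Xvec$ gives AA and OA; $\G^{\Xvec}$-measurability of the output gives IA; ZP, IB, and SI follow respectively from $\mathbb{E}^{Q}[0|\cdot]=0$, from $\mathbb{E}^{Q}[X_i|\G^{\Xvec}]=X_i$ when $X_i$ is $\G^{\Xvec}$-measurable, and from linearity; and CA follows from conditional dominated convergence, noting that along the sequence in that property the aggregate endowment is constant, so the conditioning $\sigma$-algebra $\G^{S}$ does not vary.
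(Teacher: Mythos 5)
Your proof is correct, but the key direction (i)$\Rightarrow$(ii) follows a genuinely different route from the paper's. The paper never invokes Theorem \ref{thm1} as a black box: it re-runs the construction of the one-dimensional functional $h^{S|\G}(X):=H_1(X,S-X,0,\ldots,0|\G)$, observes that FP* upgrades superadditivity to additivity (hence $h^{S|\G}$ is \emph{linear}), and then applies Proposition \ref{main_representation} together with Lemma \ref{linear_lemma} (a linear map dominates only itself), so the representing set of measures is a singleton by construction. You instead deduce the robust representation from Theorem \ref{thm1} (legitimate, since FP* trivially implies FP), then extract the economic content of FP* separately: the consolidation argument (OA plus the equality version of Lemma \ref{cost_sub_is_split_robu}, iterated down to $(S^{\Xvec},0,\ldots,0)$, where ZP and IB pin the allocation) shows the global frictional cost vanishes identically, and the summation argument --- $\sum_i \E^{Q_0}[X_i|\G^{\Xvec}] = S^{\Xvec} = \sum_i H_i(\Xvec|\G)$ with nonnegative summand-wise differences --- forces every measure in the ambiguity set to attain all the infima simultaneously. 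Your route buys two things: it proves slightly more than claimed (\emph{every} element of $\mathcal{Q}(S^{\Xvec},\G)$ represents $H$, not just some element), and it makes transparent the mechanism by which linearity emerges, namely zero cost collapsing the worst-case envelope; it also correctly handles the measurability subtlety, since the measures inherited from Theorem \ref{thm1} lie in $\bigtriangleup^{\mathbb{P}}(\F|\G^{S})$, so the conditional expectations are unambiguous $\mathbb{P}$-equivalence classes and $Q_0$-a.s.\ identities transfer to $\mathbb{P}$-a.s.\ ones. The paper's route is shorter and stays at the level of the scalar functional, but leaves the cost-vanishing phenomenon implicit. Your converse direction matches the paper's (a routine verification referencing the necessity proof of Theorem \ref{thm1}), and your explicit appeal to the Bayes-formula convention for $\E^{Q}[\,\cdot\,|\G^{\Xvec}]$ when $Q$ is only required to agree with $\mathbb{P}$ on $\G$ is in fact more careful than the paper on this point.
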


\bigskip

\section{A Risk-Sharing Perspective}
\label{SecRiskSharing}

Our results for allocation mechanisms are deeply related to the literature on risk sharing. In the latter, endowments are seen as pools of risks that are shared among the agents according to the specificities of an implemented risk-sharing rule. According to this perspective, risk-sharing rules and allocation mechanisms are the same object studied in conceptually different frameworks. The properties of risk-sharing rules are well-studied and involve mainly anonymity and fairness requirements. Recently, Jiao et al.\ \cite{Jiaoetal2023} proposed the first axiomatic study of risk-sharing rules, where they provided a full characterization of the \textit{conditional mean risk sharing rule}, defined as

\vspace{-0.15cm}

$$H\left(\Xvec\right):=\mathbb{E}\left[\Xvec|S^{\Xvec}\right],$$

\medskip

\noindent for all $\Xvec\in \X^n$. In addition, Dhaene et al.\ \cite{Dhaeneetal2023} provide an axiomatization of the so-called \textit{Quantile-Based Risk-Sharing} rule. Our paper contributes to the literature on risk-sharing rules by providing an axiomatic characterization of \textit{robust risk sharing rules}, the risk-sharing analogues of our robust allocation mechanisms.

\medskip

In this setting, we use the term \textit{generalized risk sharing rule} to refer to allocation mechanisms. Following Denuit et al.\ \cite{Denuitetal2022}, we highlight some of the properties satisfied by a robust risk sharing rule. Thanks to the previous equivalence results, some of the properties are an immediate consequence of our axioms.

\medskip

\begin{definition}\label{Properties_def}
We say that a generalized risk sharing rule $H$ satisfies:

\medskip

\begin{enumerate}
    \item \textit{Positive Homogeneity} if $H(\alpha \Xvec|\cdot)=\alpha H(\Xvec|\cdot)$ for all $\alpha>0$ and $\Xvec\in \X^n$.

    \medskip
    
    \item \textit{Normalization} if $H_j(\Xvec|\cdot)=0$ for all $\Xvec\in \X^n$ with $X_j=0$ and $j\in \left[n\right]$.

    \medskip
    
    \item \textit{Constancy}  if $H_j(\Xvec|\cdot)=c$ for all $\Xvec\in \X^n$ with $X_j=c$, $c\in \mathbb{R}$, and $j\in \left[n\right]$.

    \medskip
    
    \item \textit{Translativity} if $H_j(\Xvec+c\evec^{(j)}|\cdot)=H_j(\Xvec|\cdot)+c$ for all $\Xvec\in \X^n$, $c\in \mathbb{R}$, and $j\in \left[n\right]$.

    \medskip
    
    \item \textit{No-Ripoff} if $H_j(\Xvec|\cdot)\leq \sup X_j$ for all $\Xvec\in \X^n$ and $j\in \left[n\right]$. 

    \medskip
    
    \item \textit{Actuarial Fairness} if $\mathbb{E}\left[H_i(\Xvec|\cdot)\right]=\mathbb{E}\left[X_i\right]$ for all $i\in \left[n\right]$.
\end{enumerate}
\end{definition}

\medskip

Analogously, we refer to robust conditional mean allocation mechanism as \textit{robust conditional mean risk sharing rule} in the risk sharing context. 

\medskip

\begin{proposition}\label{prop_properties}
Suppose that $H$ is a robust conditional mean risk sharing rule with $\mathcal{Q}_i(S,\G)=\mathcal{Q}_j(S,\G)$ and $\mathcal{Q}_i(S,\G)\subseteq \bigtriangleup^{\mathbb{P}}\left(\F|\G^S\right)$ for all $i,j\in \left[n\right]$, $S\in \X$, and $\G\in \Sigma$. Then it satisfies Positive Homogeneity, Normalization, Constancy, Translativity, and No-Ripoff.
\end{proposition}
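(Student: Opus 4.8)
The plan is to work directly from the envelope representation in Definition~\ref{IRGCMRS}, writing $H_i(\Xvec|\G)=\inf_{Q\in\mathcal{Q}_i(S^{\Xvec},\G)}\mathbb{E}^{Q}[X_i|\G^{\Xvec}]$ and reducing each of the five properties to an elementary statement about the conditional-expectation functional $X\mapsto\inf_{Q\in\mathcal{Q}_i(S,\G)}\mathbb{E}^{Q}[X|\G^{\Xvec}]$. The three ingredients I would use repeatedly are: (i) that conditional expectation is linear, positively homogeneous, monotone, and fixes every $\G^{\Xvec}$-measurable random variable (in particular every constant); (ii) that each $Q\in\mathcal{Q}_i(S,\G)\subseteq\bigtriangleup^{\mathbb{P}}(\F|\G^S)$ satisfies $Q\ll\mathbb{P}$ with $Q|_{\G^S}=\mathbb{P}|_{\G^S}$, so that $\mathbb{P}$-a.s.\ inequalities hold $Q$-a.s., and conversely for $\G^S$-measurable events the $\mathbb{P}$- and $Q$-null sets coincide; and (iii) the information identities $\G^{\alpha S^{\Xvec}}=\G^{S^{\Xvec}}$ for $\alpha>0$ and $\G^{S^{\Xvec}+c}=\G^{S^{\Xvec}}$ for $c\in\R$, which hold because an affine transformation with nonzero slope does not change the $\sigma$-algebra a random variable generates. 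Since $H$ is a robust conditional mean allocation mechanism, I may also invoke the Allocation Properties it inherits through Theorem~\ref{thm1}.

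First I would dispatch the three ``pointwise'' properties. Normalization is immediate: if $X_j=0$ then $\mathbb{E}^{Q}[X_j|\G^{\Xvec}]=0$ for every $Q$, so the infimum is $0$ (this is exactly property ZP). Constancy follows because $c=X_j$ is $\G^{\Xvec}$-measurable, whence $\mathbb{E}^{Q}[X_j|\G^{\Xvec}]=c$ for all $Q$ and $H_j(\Xvec|\G)=c$; equivalently, apply IB, since $\sigma\{X_j\}=\{\emptyset,\Omega\}\subseteq\G^{\Xvec}$. No-Ripoff follows from monotonicity and constant-preservation: $X_j\le\sup X_j$ $\mathbb{P}$-a.s., hence $Q$-a.s., so $\mathbb{E}^{Q}[X_j|\G^{\Xvec}]\le\mathbb{E}^{Q}[\sup X_j|\G^{\Xvec}]=\sup X_j$ for every $Q$ (the exceptional set lies in $\G^{\Xvec}=\G^S$, where $Q$ and $\mathbb{P}$ agree, so the bound is genuinely $\mathbb{P}$-a.s.), and passing to the infimum preserves the bound $H_j(\Xvec|\G)\le\sup X_j$.

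It remains to treat Translativity and Positive Homogeneity, and this is where the only real subtlety lies. For translativity, set $\Yvec=\Xvec+c\,\evec^{(j)}$; then $S^{\Yvec}=S^{\Xvec}+c$ and, by (iii), $\G^{\Yvec}=\G^{\Xvec}$, so by linearity $\mathbb{E}^{Q}[X_j+c|\G^{\Xvec}]=\mathbb{E}^{Q}[X_j|\G^{\Xvec}]+c$ and the additive constant factors out of the infimum. For positive homogeneity, $S^{\alpha\Xvec}=\alpha S^{\Xvec}$, again $\G^{\alpha\Xvec}=\G^{\Xvec}$ by (iii), and $\mathbb{E}^{Q}[\alpha X_i|\G^{\Xvec}]=\alpha\,\mathbb{E}^{Q}[X_i|\G^{\Xvec}]$, so $\alpha>0$ pulls out of the infimum.

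The crux---and the step I expect to be the genuine obstacle---is that in both computations the infimum is a priori taken over a different index set, namely $\mathcal{Q}_i(S^{\Xvec}+c,\G)$ or $\mathcal{Q}_i(\alpha S^{\Xvec},\G)$ rather than $\mathcal{Q}_i(S^{\Xvec},\G)$. Unlike the axioms FP, SI, OA, AA---all of which constrain only reallocations that leave the aggregate $S^{\Xvec}$ fixed---translativity and homogeneity genuinely change the aggregate, so they cannot be read off from those axioms alone. What makes them go through is that the ambiguity sets depend on the aggregate only through the information it generates: since $\G^{S^{\Xvec}+c}=\G^{S^{\Xvec}}=\G^{\alpha S^{\Xvec}}$, both the conditioning $\sigma$-algebra and the ambient constraint set $\bigtriangleup^{\mathbb{P}}(\F|\G^S)$ are unchanged. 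I would therefore isolate, as a preliminary lemma drawn from the construction underlying Theorem~\ref{thm1}, the invariance $\mathcal{Q}_i(\alpha S+c,\G)=\mathcal{Q}_i(S,\G)$ for all $\alpha>0$ and $c\in\R$ (the sets being functions of $\G^S$ only); once this is in hand, the factored constant and scalar may be taken outside the common infimum, and both Translativity and Positive Homogeneity follow immediately from the linearity and positive homogeneity of conditional expectation.
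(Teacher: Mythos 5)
Your proposal follows essentially the same route as the paper's own proof: Normalization, Constancy, and No-Ripoff are dispatched exactly as the paper does (monotonicity, linearity, and constant-preservation of conditional expectations, with the paper likewise splitting off the case $\sup X_j=+\infty$), and Translativity and Positive Homogeneity rest on the identities $\sigma\{\alpha S^{\Xvec}\}=\sigma\{S^{\Xvec}\}=\sigma\{S^{\Xvec}+c\}$, which is precisely the paper's stated justification. The one point where you go beyond the paper is the step you call the crux: the paper cites only these $\sigma$-algebra identities and silently treats $\mathcal{Q}_i(\alpha S^{\Xvec}+c,\G)=\mathcal{Q}_i(S^{\Xvec},\G)$ as following from them, whereas you correctly observe that this set-invariance is a separate claim, since FP, SI, OA, and AA constrain $H$ only across reallocations with a \emph{fixed} aggregate and say nothing about how the mechanism varies when the aggregate itself changes. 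Be warned, however, that your plan to prove this invariance as a lemma ``drawn from the construction underlying Theorem \ref{thm1}'' will not go through: under Definition \ref{IRGCMRS} the map $S\mapsto\mathcal{Q}_i(S,\G)$ is unrestricted, and the canonical sets produced in the proof of Theorem \ref{thm1} are determined by the envelope $h^{S|\G}(X)=H_1(X,S-X,0,\ldots,0|\G)$, which depends on $S$ itself and not only on $\G^S$. Indeed, taking $\mathcal{Q}_i(S,\G)=\{\mathbb{P}\}$ for all $S$ except $S=2S_0$, where one takes all of $\bigtriangleup^{\mathbb{P}}(\F|\G^{S_0})$, yields a mechanism satisfying both displayed hypotheses of the proposition for which Positive Homogeneity fails at $\Xvec$ with $S^{\Xvec}=S_0$. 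So the invariance you isolate is genuinely an additional (implicit) assumption on how $\mathcal{Q}_i(\cdot,\G)$ depends on its first argument---namely, only through $\G^S$---adopted without comment by the paper; your proof should state it as a standing hypothesis or as part of the reading of the notation, rather than defer it as a provable lemma. With that reading fixed, your argument and the paper's coincide and are complete.
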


\medskip

It is important to point out that not all robust conditional mean risk sharing rules satisfy Actuarial Fairness. This is quite easily observed. In particular, suppose that $H\in \mathcal{AM}$ satisfies Actuarial Fairness. Then for all $\Xvec\in \mathcal{X}^n$ and $\G\in \Sigma$, 
\[
\sum_{i=1}^nH_i(\Xvec|\G)\leq S^{\Xvec}\ \textnormal{and}\ \sum_{i=1}^n\mathbb{E}\left[H_i(\Xvec|\G)\right]=\mathbb{E}\left[S^{\Xvec}\right].
\]
This implies that $\sum_{i=1}^nH_i(\Xvec|\G)= S^{\Xvec}$. By a contrapositive argument, it readily follows that if $H\in \mathcal{AM}$ and $\sum_{i=1}^nH_i(\Xvec|\G)|_A< S^{\Xvec}|_A$ for some $A\in \F$ with $\mathbb{P}\left[A\right]>0$, then $H$ cannot satisfy Actuarial Fairness.

\medskip

Proposition \ref{subj_CMRS} allows us to provide an axiomatization of what can be seen as a subjective version of the CMRS.

\begin{definition}
A generalized risk-sharing rule $H$ is a \textit{subjective conditional mean-risk sharing rule} if for all $\Xvec\in \X^n$ and $\G\in \Sigma$, there exists a probability measure $Q^{\G^{\Xvec}}$ such that

\vspace{-0.15cm}

$$H_i\left(\Xvec|\G\right) =  \mathbb{E}^{Q^{\G^{\Xvec}}}\left[X_i \vert \G^{\Xvec}\right], \ \hbox{for all } i\in \left[n\right].$$
\end{definition}

\medskip

\noindent It then follows from Proposition \ref{subj_CMRS} that a generalized risk-sharing rule $H$ satisfies IF, AA, OA, FP*, SI and the Allocation Properties if and only if it is a subjective conditional mean-risk sharing rule, with $Q^{\G^{\Xvec}}\in \bigtriangleup^{\mathbb{P}}(\F|\G^{\Xvec})$, for all $\G\in \Sigma$ and all $\Xvec\in \X^n$. 

\bigskip

\section{Two Examples}
\label{SecApplication}
In this section we discuss two examples of robust allocation mechanisms. In each, we propose a model of market allocation based on a given robust allocation mechanism that allows the global frictional cost in the market to be parameterized by a single parameter. This parameter can be understood as related to the fees imposed by the allocator on the participants.

\medskip
\subsection{Mean-Deviation Allocation Mechanisms}
\label{example:meandeviation}

First we consider a mean-deviation approach, based on the theory of generalized deviation measures of Rockafellar et al.\ \cite{RockafellarUryasev06}.

\medskip

\begin{definition}[Mean-Deviation]
Let $D:\X\times \Sigma\to \X$ be such that:
\medskip
\begin{itemize}
\item For all $\G\in \Sigma$, $D(\cdot|\G)$ is subadditive and positively homogeneous.
\medskip
\item For all $\G\in\Sigma$, $\G$-measurable $Y\in \X$, and $X\in \X$, $D(X+Y|\G)=D(X|\G)$.
\medskip
\item For all $\G\in \Sigma$ and $X\in \X$, $D(X|\G)$ is $\G$-measurable and
\[
X\geq 0\Longrightarrow D(X|\G)\leq \mathbb{E}\left[X|\G\right].
\]
\item For all $\G\in \Sigma$, $D(\cdot|\G)\geq 0$.
\end{itemize}
\end{definition}

\bigskip

\noindent Given $\Xvec\in \X^n$ and $\G\in \Sigma$, $D(\cdot|\G^{\Xvec})$ can be seen as a generalized conditional volatility measure (see \cite{RockafellarUryasev06} for more on this). For a given $\theta\in [0,\infty)$, we define the \textit{mean-deviation} allocation mechanism $H^{D,\theta}$ by
\[
H_{i}^{D,\theta}(\Xvec|\G)=\mathbb{E}\left[X_i|\G^{\Xvec}\right]-\theta \, D(X_i|\G^{\Xvec}),
\]

\medskip

\noindent for all $i\in [n]$, $\Xvec \in \X^n$, and $\G\in \Sigma$. According to this allocation mechanism, each participant receives the conditional expectation of their initial endowment, minus a cost that is a multiple $\theta$ of the generalized conditional volatility of their initial endowment. The parameter $\theta$ is controlled by the pooling system, which can vary the parameter (or fee, or loading factor) to control the number of individuals in the pool. 

\medskip

It is straightforward to verify that, when it satisfies CA, $H^{D,\theta}$ is a robust conditional mean allocation mechanism. The global frictional cost of such an allocation mechanism is given by
\[
 \mathcal{C}^{\Xvec,\G}(H^{D,\theta})=\sum_{i=1}^n \theta \, D(X_i|\G^{\Xvec}).
\]
Therefore, the larger the deviations or the loading factor $\theta$, the higher the cost.

\bigskip
\subsection{Expected-Shortfall Allocation Mechanisms}
\label{example:ES}

The second example we consider is inspired by one of the most popular and most studied measures of risk: the \textit{Expected Shortfall}. 

\medskip

\begin{definition}[Left-Tail Expected Shortfall]
Let $\lambda\in (0,1)$, and define 
\[
\mathcal{Q}^{\lambda}(S,\G):=\left\lbrace (Q_i)_{i=1}^n\in \bigtriangleup^{\mathbb{P}}(\F|\G^S)^n:\forall i\in \left[n\right],\ \frac{\textnormal{d}Q_i}{\textnormal{d}\mathbb{P}}\leq \frac{1}{\lambda} \right\rbrace.
\]

\noindent We refer to the robust allocation mechanism induced by $\mathcal{Q}^{\lambda}(S,\G)$ as a  \textit{Left Expected Shortfall} (LES) mechanism. In particular, this mechanism is given by
\[
H^{\lambda}(\Xvec|\G)=\inf_{\mathbf{Q}\in \mathcal{Q}^{\lambda}(S^{\Xvec},\G)} \, \mathbb{E}^{\mathbf{Q}}\left[\Xvec|\G^{\Xvec}\right],
\]

\medskip

\noindent for all $\Xvec\in \X^n$ and $\G\in \Sigma$. 
\end{definition}

\medskip

The sets $\mathcal{Q}^{\lambda}(S,\G)$ are again interpreted as ambiguity sets employed by the mechanism designer to hedge against model misspecification. In this case, the alternative probabilistic models considered by the mechanism designer cannot be too extreme, as their density (w.r.t.\ $\mathbb{P}$) is capped by $1/\lambda$. Therefore, $\lambda$ can be interpreted as a factor of confidence of the mechanism designer about its understanding of $\mathbb{P}$: the closer $\lambda$ is to $0$, the more cautious the approach of the designer is.

\medskip

This abstract specification turns out to be handy when it comes to the analysis of allocations with some specific distributions of the initial endowments. As an illustration, we consider the case of no initial information (i.e., $\G=\left\lbrace \emptyset,\Omega\right\rbrace$) and a vector of normally distributed endowments $\Xvec=(X_1,\ldots,X_n)\sim \mathcal{N}(\boldsymbol{\mu},\mathbf{V})$, with
\[
\boldsymbol{\mu}=(\mu_1,\ldots,\mu_n)\ \ \textnormal{and}\ \ \mathbf{V}_{ij}=\rho_{ij}\sigma_i\sigma_j,
\]

\medskip

\noindent where $\rho_{ij}$ is the correlation coefficient between $X_i$ and $X_j$, for all $i,j\in [n]$. Formulas for computing the Expected Shortfall under normality can be found in e.g., \cite{mcneil2015quantitative}; see also details in Appendix \ref{app:ex52}. For simplicity of notation, let 
\[
\sigma_{1:n}^2:=\sum_{i=1}^n\sigma^2_i+2\sum_{i<j}\rho_{ij}\sigma_i\sigma_j\ \ \textnormal{ and } \ \
\bar{\rho}_i:=\frac{\sum_{j=1}^n\rho_{ij}\sigma_i\sigma_j}{\sigma_i\sigma_{1:n}}\ \textnormal{for\ all}\ i\in [n].
\]

\medskip

\noindent In this case, for all $i\in [n]$, 
\[
H_i^{\lambda}(\Xvec|\G)=\mu_i+\frac{\sigma_i}{\sigma_{1:n}}\bar{\rho}_i\left(S^{\Xvec}-\sum_{i=1}^n\mu_i\right)-\frac{\varphi(\Phi^{-1}(\lambda))}{\lambda}\sqrt{(1-\bar{\rho}_i^2)\sigma^2_i},
\]
and the global frictional cost is
\[
\mathcal{C}^{\Xvec,\G}(H)=\frac{\varphi\left(\Phi^{-1}(\lambda)\right)}{\lambda}\sum_{i=1}^n\sqrt{(1-\bar{\rho}_i^2)\sigma^2_i},
\]
where $\varphi$ and $\Phi$ are the density and CDF of the standard normal distribution, respectively. 
Note that $H_i^\lambda(\mathbf X|\mathcal G)$ is again normally distributed. 
These formulas allow us to analyze the role of volatility and correlation of the endowments on the frictions/benefits suffered/enjoyed by the participants from risk-sharing. First, Figure \ref{fig1:costsvscorrelation} below illustrates the impact of correlation on the global frictional cost.

\begin{figure}[H]
\includegraphics[width=10cm]
{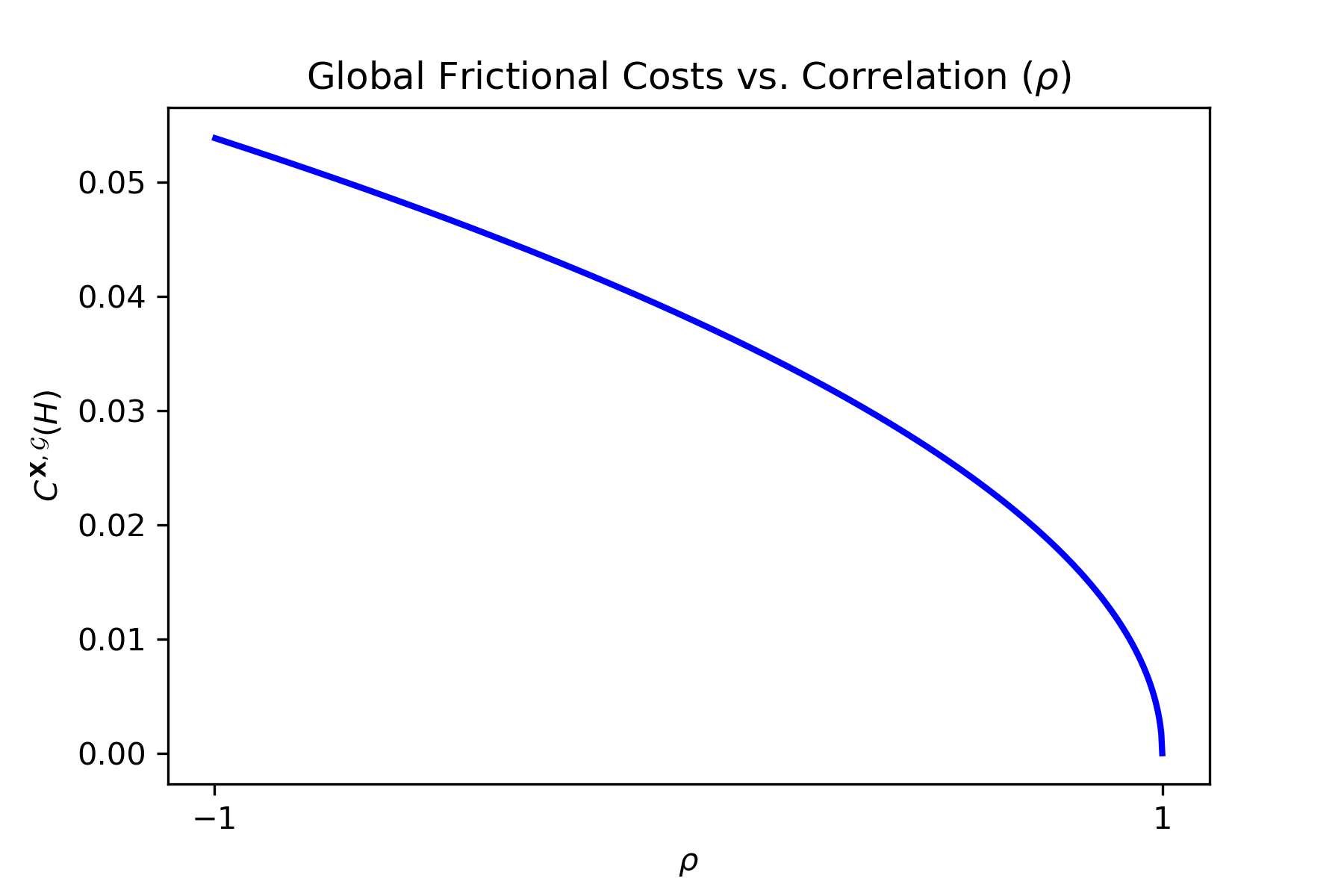}
\caption{ $\sigma_1=\sigma_2=1$, $\lambda=0.99$, $\rho_{ij}=\rho$ for all $i\neq j$.}
\label{fig1:costsvscorrelation}
\end{figure}

\medskip

Interpreting the global frictional cost as the \textit{profits} of the allocator, Figure \ref{fig1:costsvscorrelation} highlights how these depend negatively and concavely on the level of correlation, which we assume constant in this example, that is, $\rho_{ij}=\rho$, for some $\rho \in [-1,1]$, and for all $i\neq j$. The higher the correlation the more limited are the possibilities and the efficiency of re-allocating the starting endowments to rebalance the risks, and hence this motivates a lower allocation charge to the participants. A different interpretation is that a higher level of correlation is statistically equivalent to less independent components, and consequently a lower amount of data to be processed, thereby resulting in less frictions.

\medskip

We now take this analysis a step further and assume, in addition, that the participants have mean-variance preferences, i.e., their utility functions are given by
\[
V_i:X \mapsto \mathbb{E}[X]-\theta_i \,  \sigma^2_X,
\]
where $\theta_i$ is a risk attitude parameter. 
For normal random variables (recall that   the initial endowments and the allocation  are normally distributed in our context), the mean-variance preferences can be equivalently described by expected utilities using exponential utility functions. 

\medskip

Considering the preferences of the individuals allow us to evaluate the trade-off they face and their incentive to participate in the risk-sharing pool. This trade-off is simply expressed by the following difference:
\[
T_i(\Xvec|\G)=V_i(H_i(\Xvec|\G))-V_i(X_i).
\]

\smallskip

\noindent In particular, whenever $T_i(\Xvec|\G)\geq 0$, agent $i$ wishes to enter the risk-sharing pool, as $i$ prefers the re-allocated $H_i(\Xvec|\G)$ to his starting endowment $X_i$. Requiring $T_i\geq 0$ can be seen as an individual rationality (IR) constraint. In the case that we consider here, with normally distributed endowments and no initial information, this takes the form
\[
T_i(\Xvec|\G)=\sqrt{(1-\bar{\rho}^2_i)\sigma^2_i\,} \, \left[\theta_i\sqrt{(1-\bar{\rho}^2_i)\sigma^2_i\,}-\frac{\varphi(\Phi^{-1}(\lambda))}{\lambda}\right].
\]

\medskip

\noindent The volatility, correlations, levels of risk aversion, and $\lambda$'s all play a role in determining a threshold for agent $i$ (namely, $T_i(\Xvec|\G) \geq 0$) beyond which agent $i$ has an incentive to enter the risk-sharing pool. A natural question to examine is how to evaluate such trade-offs and find regions that incentivize individuals to participate in the pool and the allocator to implement the pooling mechanism.

\medskip

First, it is immediately noticeable that for all $i\in [n]$, whenever $\theta_i\leq 0$, individual $i$ has no incentive to enter the pool. This is quite intuititve, as risk-neutral (and risk-loving) agents have no interest in diversification. Therefore, in the following figures we consider only risk averse agents.

\begin{figure}[H]
\includegraphics [scale=0.8]{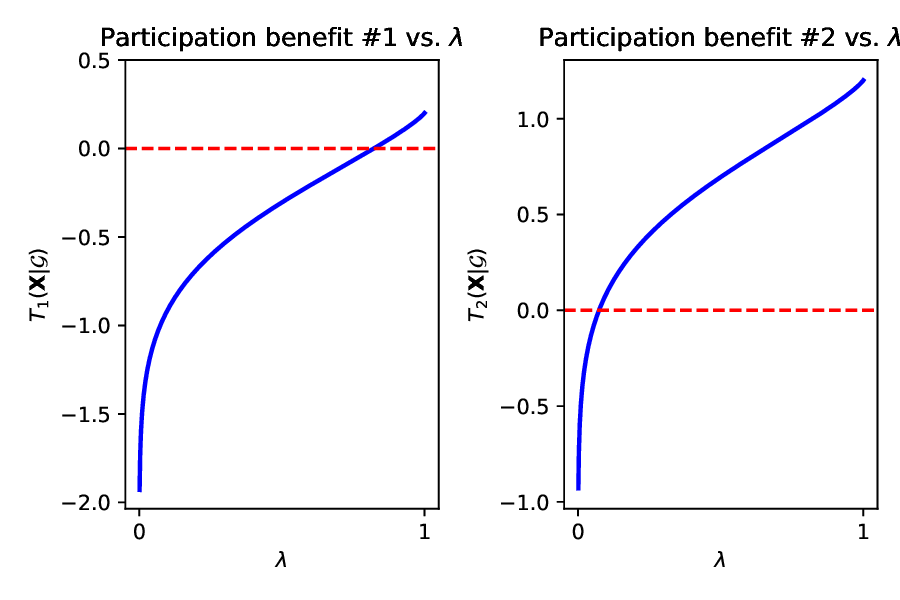}
\caption{$n=2$, $\sigma_1=\sigma_2=1$, $\theta_1=1/2$, $\theta_2=3$, $\rho_{ij}=\frac{2}{10}$ for all $i\neq j$.}
\label{fig3:costsvsnumberofpart}
\end{figure}

\noindent As Figure \ref{fig3:costsvsnumberofpart} shows,  the minimum value $\lambda^*$, which determines the threshold of participation for the individuals,\footnote{More formally, $\lambda_i^*(\theta)=\inf\left\lbrace \lambda\in (0,1]:\sqrt{(1-\bar{\rho}_i^2)\sigma_i^2\,} \, \left[\theta\sqrt{(1-\bar{\rho}_i^2)\sigma_i^2\,}-\frac{\varphi(\Phi^{-1}(\lambda))}{\lambda}\right]=0 \right\rbrace$ for all $\theta> 0$ and $i\in [n]$. In the context of Figure \ref{fig3:costsvsnumberofpart} we omit the subscript $i$ from $\lambda^*_i$, as $\sigma_1=\sigma_2$ and $\rho_{ij}$ is constant.} depends significantly on the level of risk aversion. Indeed, in the figure, $\lambda^*(\theta_1)>\lambda^*(\theta_2)$, highlighting how more risk aversion leads to a higher willingness to pay and enter the risk-sharing pool. 

\medskip

Next we analyze how the average trade-offs and average global frictional cost vary with the number of participants in the pool. For both, we consider two cases: varying and fixed correlations.

\begin{figure}[H]
\begin{subfigure}[H]{0.45\linewidth}
\includegraphics[width=7.5cm]{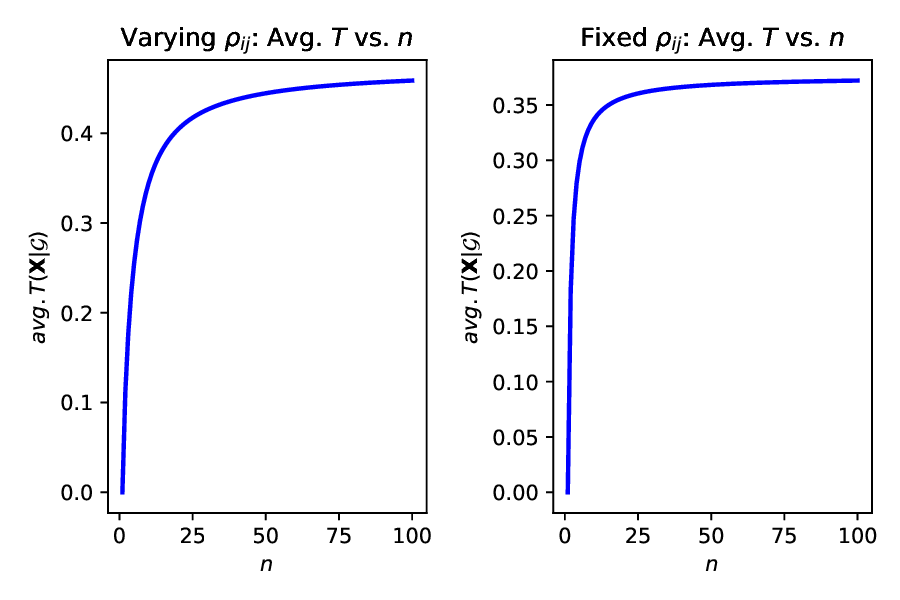}
\caption{Average participation benefits}
\end{subfigure}
\hspace{0.2cm}
\begin{subfigure}[H]{0.45\linewidth}
\includegraphics[width=7.5cm]{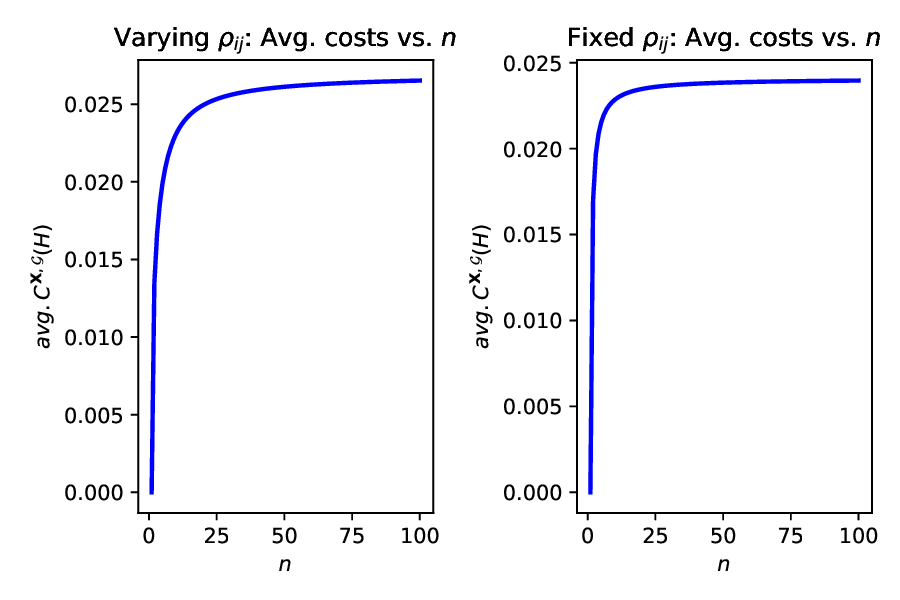}
\caption{Average global frictional costs}
\end{subfigure}%
\caption{Varying $\rho_{ij}=\left(\frac{1}{2}\right)^{|i-j|}$, fixed $\rho_{ij}=\frac{2}{10}$, $\sigma_i=1$, $\theta_i=\frac{1}{2}$, $\lambda =0.99$.}
\label{figure:averagevalues}
\end{figure}

Figure \ref{figure:averagevalues} illustrates two important facts. First, the IR constraint for each member of the pool (i.e., $T_i\geq 0$) is easier to satisfy when the number of participants is higher. A possible interpretation is that the higher the number of participants, the stronger the possible benefits from diversification, and the more attractive is the allocation mechanism to possible new participants. Second, the average global frictional cost increases with the number of participants. An interpretation is that a higher number of participants makes the allocation process more burdensome (e.g., more data to be processed), leading to higher frictions,
translating to more profit on average for the provider of the allocation service. 

\medskip

\subsection{Empirical Implementation with Flood Loss Data}

We provide an illustrative example using real-world data on insurance claims related to flood losses in the U.S. The data on historical insurance claims  are collected by the U.S.~Federal Emergency Management Agency (FEMA), under the National Flood Insurance Program (NFIP). The dataset used is publicly available at the OpenFEMA data platform. We consider two sets of three states: California--New York--Texas and Alabama--Lousiana--Mississippi that are distinct for the geographical distance of their members.

\begin{table}[H]
\centering
\label{tab:flood_stats_and_corr1}
\begin{minipage}{0.55\textwidth}
\centering
\textbf{Summary Statistics} \\[0.5em]
\begin{tabular}{lrrr}
\toprule
\textbf{Statistic} & \textbf{CA} & \textbf{NY} & \textbf{TX} \\
\midrule
Mean & $-16.04\times 10^5$ & $-11\times 10^6$ & $-33.25\times 10^6$ \\
Variance & $7.51\times 10^{13}$ & $3.47\times 10^{16}$ & $1.77\times 10^{17}$ \\
\bottomrule
\end{tabular}
\end{minipage}%
\hfill
\begin{minipage}{0.40\textwidth}
\centering
\textbf{Correlation Matrix} \\[0.5em]
\begin{tabular}{lrrr}
\toprule
 & \textbf{CA} & \textbf{NY} & \textbf{TX} \\
\midrule
\textbf{CA} & 1 & -0.094 & -0.109 \\
\textbf{NY} & -0.094 & 1 & -0.044 \\
\textbf{TX} & -0.109 & -0.044 & 1 \\
\bottomrule
\end{tabular}
\end{minipage}
\caption{Summary statistics and correlation matrix for CA--NY--TX flood losses.}
\end{table}

\medskip

\begin{table}[H]
\centering
\label{tab:flood_stats_and_corr2}
\begin{minipage}{0.55\textwidth}
\centering
\textbf{Summary Statistics} \\[0.5em]
\begin{tabular}{lrrr}
\toprule
\textbf{Statistic} & \textbf{AL} & \textbf{LA} & \textbf{MS} \\
\midrule
Mean     & $-23.96\times 10^5$ & $-44.74\times 10^6$ & $-78.68\times 10^5$ \\
Variance & $5.15\times 10^{14}$ & $5.09\times 10^{17}$ & $1.88\times 10^{16}$ \\
\bottomrule
\end{tabular}
\end{minipage}%
\hfill
\begin{minipage}{0.40\textwidth}
\centering
\textbf{Correlation Matrix} \\[0.5em]
\begin{tabular}{lrrr}
\toprule
 & \textbf{AL} & \textbf{LA} & \textbf{MS} \\
\midrule
\textbf{AL} & 1     & 0.596 & 0.606 \\
\textbf{LA} & 0.596 & 1     & 0.983 \\
\textbf{MS} & 0.606 & 0.983 & 1     \\
\bottomrule
\end{tabular}
\end{minipage}
\caption{Summary statistics and correlation matrix for AL--LA--MS flood losses.}
\end{table}

\smallskip

We assume that the data $X_{\textbf{CA}}$, $X_{\textbf{NY}}$, $X_{\textbf{TX}}$,  $X_{\textbf{AL}}$, $X_{\textbf{LA}}$, and $X_{\textbf{MS}}$ are drawn from normal distributions.\footnote{This is not the best fit, but it serves purpose to apply the expected shortfall mechanism presented above to real-world data. More comprehensive empirical analyses are left for future research.}  This simplifying assumption allows us to apply the formulas retrieved in the previous section to compute the allocations, costs, and benefits obtained through the implementation of the Expected Shortfall Allocation Mechanism to the two groups of states.

\medskip

\begin{table}[ht]
\centering
\renewcommand{\arraystretch}{1.3}
\begin{tabular}{lccccc}
\toprule
\textbf{State} & $\mathbb{E}\left[H_i^{\lambda}(\Xvec|\G)\right]$ & $\mathbb{E}[C_i^{\Xvec,\G}(H)]$ & $T_i(\Xvec|\G)$ \\
\midrule
\textbf{CA}  &  $-18.37\times 10^5$ &  $23.32\times 10^4$ & $3.753\times 10^{13}$ \\
\textbf{NY} & $-15.60\times 10^6$ & $45.93\times 10^5$ & $1.455\times 10^{16}$ \\
\textbf{TX} & $-37.85\times 10^6$ & $45.96\times 10^5$ & $1.457\times 10^{16}$ \\
\bottomrule
\end{tabular}
\caption{Allocations CA, NY, and TX flood losses with $\lambda=0.99$ and $\theta_i\equiv 1/2$.}
\label{tab:summary_stats_CANYTX}
\end{table}

\begin{table}[ht]
\centering
\renewcommand{\arraystretch}{1.3}
\begin{tabular}{lccccc}
\toprule
\textbf{State} & $\mathbb{E}\left[H_i^{\lambda}(\Xvec|\G)\right]$ & $\mathbb{E}[C_i^{\Xvec,\G}(H)]$ & $T_i(\Xvec|\G)$ \\
\midrule
\textbf{AL} & $-28.77\times 10^5$ &  $48.18\times 10^4$ & $1.601\times 10^{14}$ \\
\textbf{LA} & $-45.48\times 10^6$ & $74.03\times 10^4$ & $3.781\times 10^{14}$ \\
\textbf{MS} & $-84.40\times 10^5$ & $57.15\times 10^4$ & $2.253\times 10^{14}$ \\
\bottomrule
\end{tabular}
\caption{Allocations AL, LA, and MS flood losses with $\lambda=0.99$ and $\theta_i\equiv 1/2$.}
\label{tab:summary_stats_ALLAMS}
\end{table}

\medskip

This empirical illustration implements the Expected Shortfall Allocation Mechanism on NFIP flood-claim data for two triads (CA--NY--TX and AL--LA--MS), chosen to contrast weakly and strongly correlated risk pools. Thanks to the Gaussian distribution hypothesis adopted here, the mechanism yields closed-form allocations and costs. Consistently with the formulas derived in Section~\ref{example:ES}, within each triad, the state with the largest variance bears both the highest expected cost share and the greatest participation gain, while higher cross-state dependence compresses diversification benefits and thus lowers the designer profits. Interpreting the global frictional cost as a platform fee, the exercise shows how tuning $\lambda$ trades off larger frictions against inclusion (higher participation benefits), and how pool composition (geographical location affects correlation) shapes both costs and incentives.

\bigskip

\section{A Discussion of the Mathematical Contribution}
\label{MathLit}

The results presented here on the axiomatization of allocation mechanisms are based on applications of Hahn-Banach type-extension results (Theorems \ref{HBgen} and \ref{HBKgen}). These results provide conditions for extending linear operators on ordered vector spaces. We study some implications of these extension results for envelope representations of sublinear operators. As in the standard real-valued setting, we show that sublinear operators can be written as the upper envelopes of their dominated linear operators. Our results abstract from any topological assumptions, and they are purely order-theoretic and algebraic in nature. The level of generality adopted makes our results applicable to all order-theoretic contexts where sublinear operators play a role. In particular, we retrieve such envelope representation results within Dedekind complete Riesz spaces, that include a large amount of widely used ordered vector spaces. The provided representation results are then refined, adding further restrictions on the behavior of the sublinear operators, and they highlight a classic feature: the dominated linear operators often inherit the same properties of the dominating sublinear operator. Among such restrictions, the main ones are monotonicity and normalization, that are both meaningful in economics and finance. Monotonicity is a non-satiation assumption of the type ``\textit{more is better}'', while normalization says that the value of certain outcomes should remain unchanged after any evaluation. This motivates the mathematical analysis of such properties, that is provided in the \hyperlink{LinkToAppendix}{Appendices}. To the best of our knowledge, the generalization of our results toward convex operators remains an open question.

\medskip

From a mathematical perspective, our representations results take place in the literature on convex duality. In particular, we provide envelope representation results for superlinear operators mapping from Dedekind complete Riesz spaces to Dedekind complete Riesz spaces. These results are then applied in the context of finite Cartesian products of $L^1$-spaces and conditional expectation representations. From this perspective the closest work to ours are the studies of conditional risk measures by Filipovi\'{c} et al.\ \cite{FilKupVogel2012}, Detlefsen and Scandolo \cite{DetlefsenScandolo2005}, as well as Bion-Nadal and Palaiesau \cite{BionNadal04}, for instance. In the former, the authors provide a comprehensive analysis of two approaches to the study of conditional risk measures. The first approach relies on classical convex analysis, treating risk measures as functions on vector spaces; whereas the second approach builds instead upon the theory of random modules. Our analysis is closer to the first approach, with the difference that we do not rely on any explicit local property to provide a conditional expectation representation (see for instance their Proposition 2.5). The work of Detlefsen and Scandolo \cite{DetlefsenScandolo2005} and Bion-Nadal and Palaiesau \cite{BionNadal04} are similar to the first approach of Filipovi\'{c} et al.\ \cite{FilKupVogel2012}. A further relevant difference of our paper is that even if we are able in the proofs to reduce the dimensionality of our problem, in the end we provide an envelope representation result for a non-superlinear operator. In addition to the mentioned papers, there are many other contributions to the theory of conditional risk measures that we do not mention for the sake of brevity. On the side of convex analytic representations, Bartl \cite{Bartl2020} provided representation results for conditional nonlinear expectations. While the operators we study are very similar, the approach in Bartl \cite{Bartl2020} is oriented towards \textit{state-dependent} representations under additional topological assumptions. Our results are more \textit{functional-analytically and order-theoretically} oriented than those of Bartl \cite{Bartl2020}. Moreover, we provide representations in terms of linear operators representable as conditional expectations, while Bartl \cite{Bartl2020} focuses on state-dependent representations that allow to work with linear operators, representable as unconditional expectations. The conditional aspects in Bartl \cite{Bartl2020} are captured by the dependence of the set of multiple probabilistic models on each state the world $\omega \in \Omega$.

\bigskip

\section{Conclusion}
\label{SecConc}

In this paper, we contribute to the literature on efficient allocations in pure-exchange economies by presenting an axiomatic study of allocation mechanisms in the presence of frictions on certain types of initial transfers of endowments. Such mechanisms are mappings that transform feasible re-allocations of the aggregate endowment into other feasible re-allocations.

\medskip

We argue that in a pure-exchange economy, certain kinds of subsidization, namely initial transfers of state-contingent endowments to any agent with zero initial endowment, are ultimately costly to the economy. We capture the resulting frictional costs through an axiom that we call \textit{Frictional Participation}, and we show that the existence of these frictional costs in the economy is equivalent to a form of subadditivity of the cost of transferring endowments between agents in the economy. We illustrate the far-reaching consequences of \textit{Frictional Participation} by providing, among other results, an axiomatic characterization of those allocation mechanisms that admit representations as robust (worst-case) linear allocation mechanisms, as well as those mechanisms that admit representations as worst-case conditional expectations. We call the latter \textit{Robust Conditional Mean Allocation} mechanisms, and in the context of risk-sharing within a pool of agents we refer to them as \textit{Robust Conditional Mean Risk Sharing} rules. Those risk sharing rules are of particular importance, as they provide an extension of the so-called \textit{Conditional Mean-Risk Sharing} rule of Denuit and Dhaene \cite{DenuitDhaene2012} and the \textit{Generalized Conditional Mean-Risk Sharing} rule of Jiao et al.\ \cite{Jiaoetal2023}. As a corollary, we provide an axiomatization of (linear) \textit{Conditional Mean Allocation} mechanisms, as well as a subjective version of the \textit{Conditional Mean-Risk Sharing} rule. 

\medskip

Some of our results can offer ways to design robust decentralized systems of trade, such as decentralized finance risk-sharing protocols. As an illustration, we discuss two interesting and practically relevant examples of robust allocation mechanisms: the Mean-Deviation allocation mechanism and the Expected-Shortfall allocation mechanism. In each example, we propose a model of market allocation based on the given robust allocation mechanism, and we show how the global frictional cost in the market can be parameterized by a single parameter. This parameter can be understood as related to the fees imposed by the allocator on the participants. In particular, the Expected Shortfall Allocation Mechanism presented in Section \ref{example:ES}, by choosing $\mathcal G=\{\emptyset,\Omega\}$, yields an explicit allocation rule that can be implemented under any statistical specification of the vector of initial endowments. This mechanism features a free parameter $\lambda$, which regulates the extent of friction in the risk-sharing arrangement and represents the random fees collected by the platform.

\medskip

On a technical level, our main results require novel envelope representation results for superlinear operators mapping from Dedekind complete Riesz spaces to Dedekind complete Riesz spaces. To the best of our knowledge, such results were absent from the mathematical literature on convex duality, and we fill this gap herein.

\newpage

\setlength{\parskip}{0.5ex}
\hypertarget{LinkToAppendix}{\ }
\appendix

\vspace{-0.8cm}

\section{Order-Theoretic Background}
\label{AppBackground}

\subsection{Definitions}
Given a set $V$, we say that a binary relation $\succeq$ is a \textit{partial order} if it is reflexive, transitive, and antisymmetric. A partially ordered set $(V,\succeq)$ will be referred to as a \textit{poset}. A poset $(V,\succeq)$ is said to be a \textit{lattice} if $\inf_{\succeq}\left\lbrace x,y\right\rbrace\in V$ and $\sup_{\succeq}\left\lbrace x,y\right\rbrace\in V$, for all $x,y\in V$. A subset $B$ of a poset $V$ is $\succeq$-\textit{bounded} if there exists $x\in V$ such that $x\succeq y$, for all $y\in B$. Moreover, we say that a poset $(V,\succeq)$ is \textit{Dedekind complete} if $\sup_{\succeq} B$ exists in $V$, for all nonempty $\succeq$-bounded subsets $B\subseteq V$. In words, a poset $(V,\succeq)$ is Dedekind complete if every nonempty subset of $V$ with an $\succeq$-upper bound in $V$ has a $\succeq$-supremum in $V$. We say that a real vector space $(V,\succeq)$ that is also a poset is a \textit{partially ordered vector space} if $$x\succeq y\Longleftrightarrow x+z\succeq y+z\ \textnormal{and}\ x\succeq y\Longleftrightarrow \alpha x\succeq \alpha y,$$
for all $x,y,z\in V$ and all $\alpha> 0$. A partially ordered vector space $(V,\succeq)$ is said to be a \textit{Riesz space} if it is also a lattice. In what follows, whenever the order is understood from the context we will write $\sup$ and $\inf$ in place of $\sup_{\succeq}$ and $\inf_{\succeq}$. A subset $S$ of a Riesz space $(V,\succeq )$ is called \textit{solid} if $|x|\succeq |y|$ and $x\in S$ imply that $y\in S$. A solid vector subspace of a Riesz space is called an \textit{ideal}. In what follows we will make use of the following lemma.

\medskip

\begin{lemma}
\label{ideals}
An ideal in a Dedekind complete Riesz space is Dedekind complete.
\end{lemma}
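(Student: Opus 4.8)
The plan is to verify the defining property of Dedekind completeness for the ideal directly, by transporting suprema computed in the ambient space into the ideal. Fix a Dedekind complete Riesz space $(V,\succeq)$ and an ideal $I\subseteq V$, and let $B\subseteq I$ be nonempty and $\succeq$-bounded above by some $u\in I$. Since $I$ carries the order inherited from $V$, the element $u$ is also a $\succeq$-upper bound of $B$ in $V$, so $B$ is a nonempty $\succeq$-bounded subset of the Dedekind complete space $V$. Hence $s:=\sup_{\succeq}B$ exists in $V$. The whole argument then reduces to showing that $s\in I$ and that $s$ is still the supremum of $B$ once the order is restricted to $I$.

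The key step is to show $s\in I$, and this is exactly where the solidity of $I$ enters. First I would fix an arbitrary $b_0\in B$. Because $s$ is an upper bound of $B$ and $u$ dominates $s$ in $V$, one has $u\succeq s\succeq b_0$, and subtracting $b_0$ gives $u-b_0\succeq s-b_0\succeq 0$. Taking absolute values, this yields $|u-b_0|=u-b_0\succeq s-b_0=|s-b_0|$. Since $I$ is a vector subspace, $u-b_0\in I$; and since $I$ is solid with $|u-b_0|\succeq|s-b_0|$, solidity forces $s-b_0\in I$. Using again that $I$ is a subspace, $s=(s-b_0)+b_0\in I$.

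It then remains to check that $s$ is the least upper bound of $B$ inside $I$. This is immediate: $s\in I$ is an upper bound of $B$, and any $v\in I$ with $v\succeq b$ for all $b\in B$ is in particular an upper bound of $B$ in $V$, whence $v\succeq s$ because $s$ is the least upper bound in $V$. Thus $s=\sup_{\succeq}B$ holds within $I$ as well, and $I$ is Dedekind complete.

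The main obstacle -- indeed essentially the only nontrivial point -- is establishing that $s\in I$; everything else is a routine unwinding of definitions. The crucial idea is that although the supremum need not a priori lie in the subspace, the difference $s-b_0$ is order-squeezed between $0$ and the element $u-b_0\in I$, so solidity captures it. I would emphasize that this is precisely where the hypothesis that $I$ is an \emph{ideal} (a solid subspace), rather than merely a Riesz subspace, is indispensable.
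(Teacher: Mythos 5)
Your proof is correct. The paper gives no argument of its own here --- it simply cites Aliprantis--Border (Lemma 8.14) --- and your write-up is precisely the standard proof behind that citation: the supremum $s$ computed in the ambient space satisfies $0 \preceq s - b_0 \preceq u - b_0$ with $u - b_0 \in I$, so solidity pulls $s - b_0$ (hence $s$) into the ideal, after which the fact that $s$ remains the least upper bound when the order is restricted to $I$ is immediate, exactly as you note.
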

\begin{proof}
See \cite[Lemma 8.14]{AliprantisBorder}.
\end{proof}

\medskip

\subsection{Lattice and Pointwise Suprema in Lebesgue Spaces}
Fix a finite measure space $(\Omega, \mathcal{F},\mu)$, it is well-known that pointwise suprema of uncountably many measurable functions may not be measurable. However, this is not the case for \textit{lattice suprema}. Before stating this classical result, we report here some terminology. Denote by $L^0(\F)$ the space of $\F$-measurable and real-valued functions, which we endow with the $\mu$-a.e.\ dominance order, denoted by $\geq_{\mu\textnormal{-a.e.}}$, as customary $\mu$-a.e. equal measurable functions are identified as equal. For $F\subseteq L^0(\F)$, we denote by $\sup F$ the \textit{pointwise supremum} of $F$, which is defined as
\[
\sup F:\omega\mapsto \sup\left\lbrace f(\omega):f\in F \right\rbrace.
\]

\medskip

\noindent If $F$ is countable, then $\sup F$ is $\F$-measurable. The \textit{lattice supremum} of $F$ is defined as $\sup_{\geq_{\mu\textnormal{-a.e.}}}F$. When $F$ is countable, its pointwise and lattice suprema coincide. This is not true in general, when $F$ is uncountable. In order, to bypass this issue, we will employ the countable sup property of spaces of measurable functions. A Dedekind complete poset $(V,\succeq)$ satisfies the \textit{countable sup property} if, for all order bounded sets $A\subseteq V$, there exists a countable set $A_0\subseteq A$ such that $\sup_{\succeq}A_0=\sup_{\succeq}A$.

\medskip

\begin{lemma}\label{latt_sup}
$(L^0(\F),\geq_{\mu\textnormal{-a.e.}})$ is a Dedekind complete Riesz space with the countable sup property.
\end{lemma}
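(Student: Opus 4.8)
The plan is to establish the three claims in turn, treating Dedekind completeness and the countable sup property by a single construction. The Riesz-space structure is routine: $L^0(\F)$ is a real vector space, the $\mu$-a.e.\ dominance order is compatible with the linear operations (so that $(L^0(\F),\geq_{\mu\textnormal{-a.e.}})$ is a partially ordered vector space), and for any $f,g\in L^0(\F)$ the pointwise maximum $\max(f,g)$ and minimum $\min(f,g)$ are $\F$-measurable and furnish the lattice supremum and infimum. I would dispatch this in a sentence or two.

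The core of the argument is Dedekind completeness, which I would prove so as to simultaneously yield the countable sup property. Fix once and for all a \emph{strictly} increasing bounded function $\phi:\R\to(-1,1)$, for instance $\phi(x)=x/(1+|x|)$. Let $F\subseteq L^0(\F)$ be nonempty and bounded above by some $h\in L^0(\F)$. For any countable $G\subseteq F$, the pointwise supremum $\sup_{f\in G}f$ is $\F$-measurable and dominated $\mu$-a.e.\ by $h$; since $\mu$ is finite and $\phi$ is bounded, the quantity $\int_\Omega \phi\big(\sup_{f\in G}f\big)\,\d\mu$ is finite. Set
$$s:=\sup\Big\{\textstyle\int_\Omega \phi\big(\sup_{f\in G}f\big)\,\d\mu \;:\; G\subseteq F \text{ countable}\Big\}\leq \mu(\Omega)<\infty.$$

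Next I would extract a single countable subfamily realizing $s$. Choose countable families $G_m\subseteq F$ with $\int_\Omega\phi(\sup_{G_m}f)\,\d\mu\to s$, and put $G^*:=\bigcup_m G_m$, again countable. Writing $g^*:=\sup_{f\in G^*}f$ (a pointwise, hence measurable, supremum), the monotonicity of $\phi$ together with a squeeze between $\int_\Omega\phi(\sup_{G_m}f)\,\d\mu$ and $s$ gives $\int_\Omega\phi(g^*)\,\d\mu=s$. I claim $g^*=\sup_{\geq_{\mu\textnormal{-a.e.}}}F$. For the upper-bound property, fix $f\in F$; then $G^*\cup\{f\}$ is countable, so $\int_\Omega\phi(\max(g^*,f))\,\d\mu\leq s=\int_\Omega\phi(g^*)\,\d\mu$, while $\max(g^*,f)\geq g^*$ pointwise. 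Because $\phi$ is \emph{strictly} increasing, this forces $\max(g^*,f)=g^*$ $\mu$-a.e., i.e.\ $f\leq_{\mu\textnormal{-a.e.}}g^*$. For minimality, any upper bound $u$ of $F$ dominates every element of the countable family $G^*$, whence $u\geq_{\mu\textnormal{-a.e.}}\sup_{f\in G^*}f=g^*$, using that for countable families the pointwise and $\mu$-a.e.\ suprema coincide.

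This single construction delivers both conclusions: the supremum $\sup_{\geq_{\mu\textnormal{-a.e.}}}F=g^*$ exists for every bounded-above $F$, giving Dedekind completeness, and since $g^*=\sup_{f\in G^*}f$ with $G^*\subseteq F$ countable and $\sup_{\geq_{\mu\textnormal{-a.e.}}}G^*=g^*=\sup_{\geq_{\mu\textnormal{-a.e.}}}F$, the countable sup property holds as well. The main obstacle I anticipate is the upper-bound step: the passage from ``the integral cannot increase'' to the pointwise identity $\max(g^*,f)=g^*$ a.e.\ is exactly where strict monotonicity of $\phi$ (rather than mere monotonicity) and the finiteness of $\mu$ are essential, and it is also where one must carefully keep track of the distinction between pointwise suprema (available only for countable families) and lattice suprema.
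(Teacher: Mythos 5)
Your proof is correct. There is no in-paper argument to compare against: the paper disposes of this lemma by citing Lemma 2.6.1 of Meyer--Nieberg, and your construction is, in substance, the classical argument underlying that citation. Fixing a bounded, \emph{strictly} increasing $\phi$ and maximizing $\int_\Omega \phi\bigl(\sup_{f\in G}f\bigr)\,\d\mu$ over countable subfamilies $G\subseteq F$ is the standard essential-supremum device, and it correctly delivers both conclusions at once: the maximizing countable family $G^*$ yields a measurable $g^*$ which you show is the lattice supremum of $F$ (the combination of strict monotonicity of $\phi$ and finiteness of $\mu$, forcing $\max(g^*,f)=g^*$ $\mu$-a.e., is exactly the right mechanism), and since $g^*$ is simultaneously the lattice supremum of the countable family $G^*\subseteq F$, the countable sup property falls out of the same construction. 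Two points deserve an explicit sentence in a polished write-up: first, $g^*$ is a.e.\ real-valued (it is squeezed between any fixed element of $G^*$ and the upper bound $h$), so it genuinely defines an element of $L^0(\F)$; second, all pointwise suprema are computed from chosen representatives of equivalence classes, which is harmless precisely because every family you take a pointwise supremum of is countable. What your route buys over the paper's is self-containedness; what the citation buys is brevity, and the fact that the textbook version covers the $\sigma$-finite case, to which your argument also extends by exhausting $\Omega$ by sets of finite measure.
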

\begin{proof}
See \cite[Lemma 2.6.1]{meyer}.
\end{proof}

\smallskip

Interestingly, $L^p$-spaces are ideals of $L^0(\F)$ and thus they are Dedekind complete Riesz spaces with the countable sup property.\footnote{See pp.\ 115-116 of  \cite{meyer}.} In particular, the result we will employ is formalized as follows.

\smallskip

\begin{lemma}\label{L1_count_sup}
$(L^1(\F),\geq_{\mu\textnormal{-a.e.}})$ is a Dedekind complete Riesz space with the countable sup property.
\end{lemma}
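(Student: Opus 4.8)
The plan is to deduce Lemma \ref{L1_count_sup} as an immediate consequence of the structural results already assembled in this subsection, rather than proving anything from scratch. The two facts I would invoke are Lemma \ref{latt_sup}, which states that $(L^0(\F),\geq_{\mu\textnormal{-a.e.}})$ is a Dedekind complete Riesz space with the countable sup property, and Lemma \ref{ideals}, which states that an ideal in a Dedekind complete Riesz space is itself Dedekind complete. The footnote preceding the statement already records the key structural observation, namely that $L^p$-spaces sit inside $L^0(\F)$ as ideals; the proof is essentially the verification of this observation together with a transfer of the countable sup property.

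First I would check that $L^1(\F)$ is an ideal of $L^0(\F)$. It is clearly a vector subspace, so it remains to verify that it is solid: if $X\in L^1(\F)$ and $|Y|\leq_{\mu\textnormal{-a.e.}}|X|$ for some $Y\in L^0(\F)$, then $\int |Y|\,\d\mu\leq \int|X|\,\d\mu<\infty$, so $Y\in L^1(\F)$. This is the content of the cited pages of \cite{meyer}. Having established that $L^1(\F)$ is an ideal in the Dedekind complete Riesz space $L^0(\F)$, Lemma \ref{ideals} immediately yields that $L^1(\F)$ is Dedekind complete. That it is a Riesz space is automatic, since any solid subspace is a sublattice.

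Second I would transfer the countable sup property. The subtle point is that the countable sup property is stated for an ambient poset and one must ensure that suprema computed in the ideal agree with those computed in the ambient space. For an order-bounded set $A\subseteq L^1(\F)$, its supremum in $L^1(\F)$ coincides with its supremum in $L^0(\F)$: since $L^1(\F)$ is an ideal containing an upper bound for $A$, the $L^0$-supremum lies below that bound and hence inside $L^1(\F)$, where it serves as the least upper bound. Applying the countable sup property of $L^0(\F)$ from Lemma \ref{latt_sup} then furnishes a countable $A_0\subseteq A$ with $\sup A_0=\sup A$ in $L^0(\F)$, and by the preceding remark this identity persists in $L^1(\F)$. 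This establishes the countable sup property for $L^1(\F)$ and completes the proof.

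The only genuinely delicate step is the compatibility of suprema across the inclusion $L^1(\F)\subseteq L^0(\F)$; everything else is bookkeeping that the two cited lemmas have already done. I expect this compatibility to be the main obstacle only in the sense that it must be stated carefully, since it is precisely the ideal property (not mere subspace membership) that guarantees the ambient supremum stays within $L^1(\F)$. Once that is articulated, the result drops out with no computation.
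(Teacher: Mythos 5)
Your proof is correct and takes essentially the same route as the paper's: both establish solidity of $L^1(\F)$ in $L^0(\F)$ via monotonicity of the Lebesgue integral, invoke Lemma \ref{ideals} to get Dedekind completeness, and transfer the countable sup property from $L^0(\F)$ using Lemma \ref{latt_sup}. Your explicit verification that suprema computed in $L^1(\F)$ agree with those computed in $L^0(\F)$ is the same point the paper treats tersely (by noting that the a.e.-supremum of a bounded subset of $L^1(\F)$ is $\F$-measurable and, being order-bounded, has finite $L^1$-norm), so no substantive difference remains.
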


\begin{proof}
The countable sup property follows from Lemma \ref{latt_sup}. Indeed, the $\geq_{\mu\textnormal{-a.e.}}$-supremum of any $\geq_{\mu\textnormal{-a.e.}}$-bounded subset $A$ of $L^1(\F)$ is $\F$-measurable. Moreover, proving that $L^1(\F)$ is solid would imply that $L^1(\F)$ is Dedekind complete by Lemma \ref{ideals}, thereby yielding that $\sup_{\geq_{\mu\textnormal{-a.e.}}}A$ has finite $L^1$-norm being $\geq_{\mu\textnormal{-a.e.}}$-bounded. However, the solidity of $L^1(\F)$ follows immediately from the monotonicity of the Lebesgue integral. 
\end{proof}

\smallskip

In the following we will ease the notation for $\geq_{\mu\textnormal{-a.e.}}$, using simply $\geq$ whenever the relevant measure will be clear from the context.

\medskip

\subsection{A Hahn-Banach-Kantorovich Extension Theorem}
Given two real vector spaces $V$ and $W$, we denote by $L(V,W)$ the space of linear mappings $\ell:V\to W$. If $(W,\succeq)$ is a partially ordered vector space, then we say that a mapping $h:V\to W$ is $\succeq$-sublinear if for all $x,y\in V$ and $\alpha\geq 0$, 
$$h(x)+h(y)\succeq h(x+y) \ \ \hbox{and} \ \ h(\alpha x)=\alpha h(x).$$ 

\smallskip

\noindent Given two functions $f$ and $g$ from a real vector space $V$ to a partially ordered vector space $(W,\succeq)$, we say that $f\succeq g$ if $f(x)\succeq g(x)$ for all $x\in V$.

\medskip

\begin{theorem}\label{HBgen}
Let $V$ be a real vector space, $(W,\succeq)$ a Dedekind complete Riesz space, and $h:V\to W$ a $\succeq$-sublinear function. If $S$ is a
vector subspace of $V$ and $\ell:S\to W$ is a linear function with $h|_S\succeq \ell$, then there exists $\ell^*\in L(V,W)$ such that $\ell^*|_S=\ell$ and $h\succeq \ell^*$.
\end{theorem}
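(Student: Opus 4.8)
The plan is to run the classical transfinite Hahn--Banach argument, with the Dedekind completeness of $(W,\succeq)$ playing the role that the least-upper-bound property of $\mathbb{R}$ plays in the scalar case. First I would introduce the collection $\mathcal{E}$ of all pairs $(T,\ell_T)$ in which $T$ is a vector subspace with $S\subseteq T\subseteq V$ and $\ell_T\in L(T,W)$ satisfies $\ell_T|_S=\ell$ together with the domination $h|_T\succeq \ell_T$. Ordering $\mathcal{E}$ by extension (that is, $(T_1,\ell_{T_1})\leq (T_2,\ell_{T_2})$ when $T_1\subseteq T_2$ and $\ell_{T_2}$ restricts to $\ell_{T_1}$), every chain admits an upper bound: the union of the domains is again a subspace, and the maps glue to a single linear map that is still dominated by $h$. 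Zorn's Lemma then yields a maximal element $(T^*,\ell^*)$, and it suffices to prove $T^*=V$.

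The core of the proof is the one-step extension, which shows that $T^*\neq V$ is impossible. Assuming $T^*\neq V$, I would fix $x_0\in V\setminus T^*$ and seek a value $w_0\in W$ so that $\ell'(s+t x_0):=\ell^*(s)+t\,w_0$ (for $s\in T^*$ and $t\in\mathbb{R}$, uniquely determined) stays dominated by $h$ on $T^*+\mathbb{R}x_0$, contradicting maximality. Dividing the target inequality $\ell'(s+tx_0)\preceq h(s+tx_0)$ by $|t|$ and invoking the positive homogeneity of $h$ and the linearity of $\ell^*$ reduces the requirement, across the cases $t>0$ and $t<0$, to a single two-sided bound:
$$\ell^*(q)-h(q-x_0)\;\preceq\; w_0\;\preceq\; h(p+x_0)-\ell^*(p)\qquad\text{for all }p,q\in T^*.$$
The crucial ``crossing'' inequality $\ell^*(q)-h(q-x_0)\preceq h(p+x_0)-\ell^*(p)$ rearranges to $\ell^*(p+q)\preceq h(p+x_0)+h(q-x_0)$, which I would obtain by chaining the domination on $T^*$ with the subadditivity of $h$: $\ell^*(p+q)\preceq h(p+q)=h\big((p+x_0)+(q-x_0)\big)\preceq h(p+x_0)+h(q-x_0)$.

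Finally, Dedekind completeness supplies the separating value. The set $A:=\{\ell^*(q)-h(q-x_0):q\in T^*\}$ is nonempty and, by the crossing inequality, bounded above by every element of $B:=\{h(p+x_0)-\ell^*(p):p\in T^*\}$; hence $w_0:=\sup_{\succeq}A$ exists in $W$. By construction $w_0$ dominates every element of $A$ (yielding the left inequality) and, being the least upper bound of $A$, is dominated by every upper bound of $A$, in particular by every element of $B$ (yielding the right inequality). Thus $\ell'$ is a legitimate $h$-dominated extension of $\ell^*$, contradicting maximality, so $T^*=V$ and $\ell^*$ is the desired map. I expect the one-step extension --- precisely the passage from subadditivity to the crossing inequality, and the appeal to Dedekind completeness for the existence of $\sup_{\succeq}A$ --- to be the main obstacle; the remainder is a routine Zorn's-Lemma wrap-up.
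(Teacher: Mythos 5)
Your proof is correct: the Zorn's-Lemma reduction, the one-step extension with its crossing inequality $\ell^*(p+q)\preceq h(p+x_0)+h(q-x_0)$, and the use of Dedekind completeness to extract the separating value $w_0=\sup_{\succeq}A$ are all sound, and together they constitute the standard proof of the Hahn--Banach--Kantorovich theorem. The paper itself gives no argument here---it simply cites Aliprantis and Burkinshaw (Theorem 1.25), whose proof is essentially the one you reconstructed---so your proposal matches the intended approach.
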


\begin{proof}
See \cite[Theorem 1.25]{AliBurki06}.
\end{proof}

\smallskip

We say that a function $f:V\to W$ from a partially ordered vector space $(V,\succeq_V)$ to a partially ordered vector space $(W,\succeq_W)$ is $(\succeq_V,\succeq_W)$-\textit{monotone} if for all $x,y\in V$, 
$$x\succeq_V y \ \implies \  f(x)\succeq_W f(y).$$

\smallskip

\noindent Note that for a linear mapping $\ell:V\to W$, $(\succeq_V,\succeq_W)$-monotonicity is the same as $(\succeq_V,\succeq_W)$-positivity, i.e., $\ell$ is monotone if and only if $\ell(x)\succeq_W\mathbf{0}_W$, for all $x\succeq_V\mathbf{0}_V$.

\medskip

\begin{theorem}\label{HBKgen}
Let $(V,\succeq_V)$ and $(W,\succeq_W)$ be respectively a partially ordered vector space and a Dedekind complete Riesz space. Let $h:V\to W$ be a $(\succeq_V,\succeq_W)$-monotone and $\succeq_W$-sublinear function, and suppose that $S$ is a
vector subspace of $V$. If $\ell:S\to W$ is a $(\succeq_V,\succeq_W)$-positive linear function with $h|_S\succeq_W \ell$, then there exists a $(\succeq_V,\succeq_W)$-positive $\ell^*\in L(V,W)$ such that $\ell^*|_S=\ell$ and $h\succeq_W \ell^*$.
\end{theorem}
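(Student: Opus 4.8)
The plan is to reduce the statement to the order-theoretic Hahn--Banach--Kantorovich theorem (Theorem \ref{HBgen}) and then observe that, under the monotonicity hypothesis on $h$, the resulting extension is automatically $(\succeq_V,\succeq_W)$-positive, so no additional construction is required to enforce positivity.

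First I would apply Theorem \ref{HBgen} verbatim. Its hypotheses are all in place: $(W,\succeq_W)$ is a Dedekind complete Riesz space, $h$ is $\succeq_W$-sublinear by assumption, $S$ is a vector subspace of $V$, and $\ell:S\to W$ is linear with $h|_S\succeq_W \ell$. Theorem \ref{HBgen} then produces $\ell^*\in L(V,W)$ with $\ell^*|_S=\ell$ and $h\succeq_W \ell^*$. This already delivers the extension and the domination, so the only remaining claim is positivity of $\ell^*$.

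To establish positivity I would combine three facts: that $h(\mathbf{0}_V)=\mathbf{0}_W$ (which follows from positive homogeneity with scalar $0$), that $h$ is $(\succeq_V,\succeq_W)$-monotone, and that $h\succeq_W \ell^*$ holds pointwise. Fix $x\in V$ with $x\succeq_V \mathbf{0}_V$; then $\mathbf{0}_V\succeq_V -x$, so monotonicity gives $\mathbf{0}_W=h(\mathbf{0}_V)\succeq_W h(-x)$. Domination yields $h(-x)\succeq_W \ell^*(-x)=-\ell^*(x)$, and chaining these two relations produces $-\ell^*(x)\succeq_W \mathbf{0}_W$ reversed, i.e.\ $\ell^*(x)\succeq_W \mathbf{0}_W$. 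Since $x\succeq_V \mathbf{0}_V$ was arbitrary, $\ell^*$ is $(\succeq_V,\succeq_W)$-positive, which completes the proof.

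I do not anticipate a genuine obstacle here: the entire content lies in recognizing that monotonicity of the dominating sublinear operator forces every dominated linear operator to be positive, so the positive extension theorem is an immediate corollary of the plain one. The only points requiring care are the sign bookkeeping in the chain of inequalities and the elementary identity $h(\mathbf{0}_V)=\mathbf{0}_W$. As a side remark, the same argument shows that the stated positivity of $\ell$ is in fact redundant, being implied by $h|_S\succeq_W \ell$ together with the monotonicity of $h$.
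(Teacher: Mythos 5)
Your proposal is correct and coincides with the paper's own proof: both apply Theorem \ref{HBgen} to obtain the dominated linear extension $\ell^*$ and then derive positivity from the chain $\mathbf{0}_W=h(\mathbf{0}_V)\succeq_W h(-x)\succeq_W \ell^*(-x)=-\ell^*(x)$ for $x\succeq_V\mathbf{0}_V$. Your closing remark that the positivity of $\ell$ is redundant (it follows from $h|_S\succeq_W\ell$, monotonicity of $h$, and the fact that $S$ is a subspace) is also accurate, though the paper does not make this observation.
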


\begin{proof}
By Theorem \ref{HBgen}, $\ell$ admits a linear extension $\ell^*$ with $h\succeq_W\ell^*$. This implies that for all $x\succeq_V \mathbf{0}_V$ we have $\mathbf{0}_W=h(\mathbf{0}_V)\succeq_W h(-x)\succeq_W \ell^*(-x)=-\ell^*(x)$, which yields $\ell^*(x)\succeq_W \mathbf{0}_W$. Thus, $\ell^*$ is $(\succeq_V,\succeq_W)$-positive.
\end{proof}

\medskip
\section{Representations of Sublinear Mappings}
\label{AppRep}

\subsection{Envelope Representation Results}

Here we state abstract versions of the envelope representation for sublinear mappings.

\smallskip

\begin{theorem}\label{HBrep}
Let $V$ be a real vector space and $(W,\succeq)$ be a Dedekind complete Riesz space. If $h:V\to W$ is a $\succeq$-sublinear mapping, then
$$h(x)=\sup\limits_{\ell\in D(h)}\ell(x), \ \forall x\in V,$$
where $D(h)=\left\lbrace \ell \in L(V,W):h\succeq \ell \right\rbrace$.
\end{theorem}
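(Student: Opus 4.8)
The plan is to prove the two inequalities $h(x) \succeq \sup_{\ell \in D(h)} \ell(x)$ and $h(x) \preceq \sup_{\ell \in D(h)} \ell(x)$ separately. The first inequality is immediate from the definition of $D(h)$: since every $\ell \in D(h)$ satisfies $h \succeq \ell$, we have $h(x) \succeq \ell(x)$ for all $\ell \in D(h)$, and hence $h(x)$ is an upper bound for $\{\ell(x) : \ell \in D(h)\}$. The supremum $\sup_{\ell \in D(h)} \ell(x)$ exists in $W$ because $W$ is Dedekind complete and this set is bounded above by $h(x)$; taking the supremum preserves the inequality, yielding $h(x) \succeq \sup_{\ell \in D(h)} \ell(x)$.

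The nontrivial direction is the reverse inequality, and the key step is to show that for each fixed $x \in V$ there exists a particular linear operator $\ell_x \in D(h)$ with $\ell_x(x) = h(x)$; this forces the supremum to attain the value $h(x)$ at that point. I would construct $\ell_x$ by first defining a linear functional on the one-dimensional subspace $S = \{\alpha x : \alpha \in \mathbb{R}\}$ spanned by $x$, and then invoking the Hahn-Banach-Kantorovich extension theorem (Theorem \ref{HBgen}) to extend it to all of $V$ while remaining dominated by $h$.

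Concretely, I would set $\ell(\alpha x) := \alpha \, h(x)$ for $\alpha \in \mathbb{R}$, which is linear on $S$. The crucial verification is that $h|_S \succeq \ell$, i.e.\ that $h(\alpha x) \succeq \alpha \, h(x)$ for every real $\alpha$. For $\alpha \geq 0$ this is an equality by positive homogeneity of $h$. For $\alpha < 0$, write $\alpha = -\beta$ with $\beta > 0$; using subadditivity one has $h(\mathbf{0}_V) = h(\alpha x + \beta x) \preceq h(\alpha x) + h(\beta x) = h(\alpha x) + \beta \, h(x)$, and since $h(\mathbf{0}_V) = \mathbf{0}_W$ (again by positive homogeneity, taking $\alpha = 0$), this rearranges to $h(\alpha x) \succeq -\beta \, h(x) = \alpha \, h(x)$, as required. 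With $h|_S \succeq \ell$ established, Theorem \ref{HBgen} furnishes an extension $\ell^* = \ell_x \in L(V,W)$ with $\ell_x|_S = \ell$ and $h \succeq \ell_x$; the latter means precisely $\ell_x \in D(h)$, and evaluating at $x$ gives $\ell_x(x) = \ell(x) = h(x)$.

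The main obstacle here is not a deep one but a bookkeeping point: one must be careful that subadditivity of $h$ is the $\succeq$-sublinear inequality $h(u) + h(v) \succeq h(u+v)$ (the direction stated in the definition), so the argument for negative scalars uses it in the correct orientation, and that $h(\mathbf{0}_V) = \mathbf{0}_W$ is genuinely available from positive homogeneity rather than assumed. Once $\ell_x \in D(h)$ with $\ell_x(x) = h(x)$ is in hand, I conclude $\sup_{\ell \in D(h)} \ell(x) \succeq \ell_x(x) = h(x)$, and combining with the first direction gives equality for every $x \in V$.
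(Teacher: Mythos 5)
Your proposal is correct and follows essentially the same route as the paper's proof: define $\ell(\alpha x) := \alpha\, h(x)$ on $\mathrm{span}\{x\}$, verify domination by $h$ (your argument via $h(\mathbf{0}_V)=\mathbf{0}_W$ and subadditivity for negative scalars is the same computation the paper packages with the $\alpha + |\alpha| \geq 0$ trick), extend by Theorem \ref{HBgen}, and conclude pointwise attainment of the supremum. The only items the paper spells out that you gloss over are the routine well-definedness and linearity of $\ell$ on the span, which are immediate.
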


\smallskip

\begin{proof}
Fix $x\in V$ and define $l:\textnormal{span}\left\lbrace x\right\rbrace \to W$ by $l(\alpha x):=\alpha h(x)$, for all $\alpha\in \R$. First notice that, $l$ is well-defined. If $x=\mathbf{0}_V$, then the claim follows from the fact that $h(\mathbf{0}_V)=\mathbf{0}_W$. Therefore, suppose that $x\neq \mathbf{0}_V$. Assume that $\alpha x=\beta x$ for some $\alpha,\beta\in \R$. If $\alpha,\beta\geq 0$, by the positive homogeneity of $h$ the claim is immediate. If $\alpha,\beta<0$, then $-\alpha x=-\beta x$, and hence, by positive homogeneity,
$$-\alpha h(x)=h(-\alpha x)=h(-\beta x)=-\beta h(x),$$
which gives $l(\alpha x)=\alpha h(x)=\beta h(x)=l(\beta x)$. Note that if $\alpha\geq 0$ and $\beta<0$, then $(\alpha-\beta)x=\mathbf{0}_V$, and since $\alpha-\beta>0$ we would have $x=\mathbf{0}_V$, a contradiction.

\smallskip

We now show that $h(\alpha x)\succeq l(\alpha x)$ for all $\alpha\in \R$. Since $h$ is $\succeq$-sublinear and for all $\beta\in \R$, $\beta+|\beta|\geq 0$, it follows that 
$$\alpha h(x)+|\alpha|h(x)=h((\alpha+|\alpha|)x)\leq  h(\alpha x)+|\alpha|h(x),$$
for all $\alpha\in \R$. This implies that for all $\alpha\in \R$, we have $l(\alpha x)=\alpha h(x)\leq h(\alpha x)$. Thus, $h|_{\textnormal{span}\left\lbrace x\right\rbrace}\succeq l$. Our goal now is to prove that $l$ is linear. To this end, notice that $l(\alpha (\beta x))=l(\alpha\beta x)=\alpha\beta h(x)=\alpha l(\beta x)$, for all $\alpha,\beta\in \R$, proving that $l$ is homogeneous. Analogously, note that $l(\alpha x+\beta x)=(\alpha+\beta)h(x)=l(\alpha x)+l(\beta x)$. Thus, $l$ is linear. Therefore, by Theorem \ref{HBgen}, it follows that there exists $l^*:V\to W$ such that $l^*|_{\textnormal{span}\left\lbrace x\right\rbrace}=l$, $h\succeq l^*$, and $l^*(x)=l(x)=h(x)$. Since the choice of $x\in V$ is arbitrary, these steps show that for all $z\in V$, there exists a function $\ell_z\in D(h)$ such that $\ell_z(z)=h(z)$. Thus, $\sup_{\ell\in D(h)}l(z)\succeq h(z)$, for all $z\in V$. Since each $\ell\in D(h)$ is such that $h\succeq \ell$, it follows that $h(z)\succeq \sup_{l\in D(h)}\ell(z)$, for all $z\in V$. Given that $\succeq$ is antisymmetric, it follows that
$$h(z)= \sup_{l\in D(h)}\ell(z),$$
for all $z\in V$.
\end{proof}

\medskip

\begin{corollary}\label{mon_abs_rep}
Let $(V,\succeq_V)$ be a partially ordered real vector space and $(W,\succeq_W)$ a Dedekind complete Riesz space. If $h:V\to W$ is a $(\succeq_V,\succeq_W)$-monotone and $\succeq_W$-sublinear mapping, then
$$h(x)=\sup\limits_{\ell\in D_+(h)}\ell(x), \ \forall x\in V,$$

\smallskip

\noindent where $D_+(h):=\left\lbrace \ell \in L(V,W):h\succeq_W \ell\ \textnormal{and}\ \ell\ \textnormal{is}\ (\succeq_V,\succeq_W)\textnormal{-positive} \right\rbrace$.
\end{corollary}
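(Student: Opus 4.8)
The plan is to deduce the corollary directly from Theorem \ref{HBrep} by showing that the monotonicity hypothesis on $h$ automatically forces every dominated linear operator to be positive. Concretely, I would argue that under the present assumptions the two index sets coincide, i.e.\ $D_+(h) = D(h)$, after which the asserted representation is an immediate restatement of Theorem \ref{HBrep}.

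First I would record the elementary fact that $h(\mathbf{0}_V) = \mathbf{0}_W$, which follows from positive homogeneity upon taking $\alpha = 0$. Next, fixing an arbitrary $\ell \in D(h)$ (so that $h \succeq_W \ell$ on all of $V$), I would verify that $\ell$ is $(\succeq_V,\succeq_W)$-positive. Indeed, for $v \succeq_V \mathbf{0}_V$, monotonicity of $h$ gives $\mathbf{0}_W = h(\mathbf{0}_V) \succeq_W h(-v)$, while domination gives $h(-v) \succeq_W \ell(-v)$; chaining these and using linearity of $\ell$ yields $\mathbf{0}_W \succeq_W \ell(-v) = -\ell(v)$, that is $\ell(v) \succeq_W \mathbf{0}_W$. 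Hence $\ell \in D_+(h)$. Since the reverse inclusion $D_+(h) \subseteq D(h)$ is immediate from the definitions, this establishes $D_+(h) = D(h)$.

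With the two dominating sets identified, I would simply invoke Theorem \ref{HBrep}, obtaining $h(x) = \sup_{\ell \in D(h)} \ell(x) = \sup_{\ell \in D_+(h)} \ell(x)$ for every $x \in V$, which is the claim.

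I do not expect a genuine obstacle here: the substantive work, namely the positivity-preserving extension, was already carried out once in the proof of Theorem \ref{HBKgen}, and the present statement only needs its core observation that a monotone sublinear dominating map drags its dominated linear operators into the positive cone. The only point deserving mild care is to apply monotonicity of $h$ in the correct direction when passing from $v \succeq_V \mathbf{0}_V$ to $-v$. An alternative, more hands-on route would retrace the proof of Theorem \ref{HBrep} verbatim while replacing the appeal to Theorem \ref{HBgen} by Theorem \ref{HBKgen} at the extension step; this works too, but it requires separately checking that the one-dimensional map $l(\alpha x) := \alpha h(x)$ is positive on $\mathrm{span}\{x\}$, an unnecessary detour compared with the set-equality argument above.
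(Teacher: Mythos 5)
Your proof is correct, but it takes a genuinely different route from the paper's. The paper proves Corollary \ref{mon_abs_rep} by retracing the proof of Theorem \ref{HBrep} and observing that the one-dimensional functional $l(\alpha x):=\alpha h(x)$ is positive on $\mathrm{span}\{x\}$, so that the positive-extension Theorem \ref{HBKgen} can be invoked in place of Theorem \ref{HBgen}; this produces, for each $x$, a positive dominated linear operator attaining $h(x)$ --- i.e.\ exactly the ``more hands-on route'' you mention and set aside at the end. You instead establish the set identity $D_+(h)=D(h)$: since $h(\mathbf{0}_V)=\mathbf{0}_W$ and $h$ is monotone, the chain $\mathbf{0}_W = h(\mathbf{0}_V)\succeq_W h(-v)\succeq_W \ell(-v)=-\ell(v)$ for $v\succeq_V\mathbf{0}_V$ forces every dominated linear operator to be positive, after which the corollary is literally a restatement of Theorem \ref{HBrep}. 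Note that your chain of inequalities is precisely the observation the paper itself uses \emph{inside} the proof of Theorem \ref{HBKgen} to show the extension is positive; you redeploy it globally rather than at the extension step. What your argument buys: Theorem \ref{HBrep} is used as a black box, no re-run of the extension argument is needed, the (small but, given the hypotheses of Theorem \ref{HBKgen}, necessary) check that $l$ is positive on the span is avoided, and Theorem \ref{HBKgen} becomes dispensable for this corollary; your argument also isolates a slightly stronger and reusable fact, namely that domination by a monotone sublinear map automatically confers positivity. What the paper's route buys is mainly expository economy in context --- it exercises the positive-extension machinery that the appendix has just set up. Both arguments are complete; all the individual steps you flag (the $\alpha=0$ homogeneity giving $h(\mathbf{0}_V)=\mathbf{0}_W$, and the direction of monotonicity via $\mathbf{0}_V\succeq_V -v$) are justified under the paper's definitions.
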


\smallskip

\begin{proof}
The proof is identical to that of Theorem \ref{HBrep}, but with the observation that the function $l$ in that proof is $(\succeq_V,\succeq_W)$-positive. Then applying Theorem \ref{HBKgen} in place of Theorem \ref{HBgen} yields the desired result.
\end{proof}

\smallskip

Given $W\subseteq V$, a mapping $f:V\to W$ is said to be $W$-\textit{normalized} if $f(w)=w$, for all $w\in W$. 

\smallskip

\begin{proposition}\label{partial_add}
Let $(V,\succeq)$ be Dedekind complete Riesz space and $W\subseteq V$ a Dedekind complete Riesz subspace. If $h:V\to W$ is a $\succeq$-monotone, $\succeq$-sublinear, and $W$-normalized mapping, then for all $x\in V$,
$$h(x)=\sup\limits_{\ell\in D^W_+(h)}\ell(x),$$ 

\smallskip

\noindent where $D^W_+(h):=\left\lbrace \ell \in L(V,W):h\succeq \ell,\ \ell\ \textnormal{is}\ \succeq\textnormal{-positive},\ \textnormal{and}\ \ell(w)=w \ \textnormal{for\ all\ }w\in W\right\rbrace$.
\end{proposition}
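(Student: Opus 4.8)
The plan is to deduce the statement directly from Corollary \ref{mon_abs_rep}, which already gives $h(x)=\sup_{\ell\in D_+(h)}\ell(x)$ for all $x\in V$, where $D_+(h)=\left\lbrace \ell\in L(V,W):h\succeq \ell\ \textnormal{and}\ \ell\ \textnormal{is}\ \succeq\textnormal{-positive}\right\rbrace$. Since the normalization constraint only shrinks the index set, we have $D^W_+(h)\subseteq D_+(h)$ by definition. Hence it suffices to prove the reverse inclusion $D_+(h)\subseteq D^W_+(h)$, i.e.\ that every $\succeq$-positive linear map dominated by $h$ is automatically $W$-normalized. Once this is shown, the two index sets coincide, so the families $\left\lbrace \ell(x):\ell\in D_+(h)\right\rbrace$ and $\left\lbrace \ell(x):\ell\in D^W_+(h)\right\rbrace$ are identical, and therefore their lattice suprema in the Dedekind complete space $W$ agree, yielding the representation.

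The crux is to observe that $W$-normalization of the dominated functionals comes for free, and this is the only genuine step. Fix $\ell\in D_+(h)$ and $w\in W$. Evaluating the domination $h\succeq \ell$ at $w$ and using that $h$ is $W$-normalized gives $\ell(w)\preceq h(w)=w$. Since $W$ is a subspace we also have $-w\in W$, so evaluating the same domination at $-w$, invoking the linearity of $\ell$, and using $h(-w)=-w$ yields $-\ell(w)=\ell(-w)\preceq h(-w)=-w$, that is $\ell(w)\succeq w$. By antisymmetry of $\succeq$ we conclude $\ell(w)=w$. As $w\in W$ was arbitrary, $\ell$ is $W$-normalized, so $\ell\in D^W_+(h)$, which establishes $D_+(h)\subseteq D^W_+(h)$.

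I do not expect a serious obstacle here: the structural fact underlying the argument is that a $W$-normalized $\succeq$-sublinear map is $W$-translation equivariant (one checks $h(v+w)=h(v)+w$ for $v\in V$, $w\in W$ by two applications of subadditivity together with $h(\pm w)=\pm w$), so the order constraint $\ell\preceq h$ rigidly pins down the value of every dominated linear map on all of $W$. The remaining ingredients -- nonemptiness of $D_+(h)$, the construction of a dominated positive functional through any prescribed point, and the fact that the supremum is realized as a lattice supremum in the Dedekind complete target -- are furnished verbatim by Corollary \ref{mon_abs_rep} and the Dedekind completeness of $W$, so no additional construction is required.
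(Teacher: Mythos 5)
Your proposal is correct and follows essentially the same route as the paper's own proof: both start from Corollary \ref{mon_abs_rep} and then show that every $\ell\in D_+(h)$ is automatically $W$-normalized by evaluating the domination $h\succeq\ell$ at $w$ and at $-w$ (using linearity of $\ell$ and $h(\pm w)=\pm w$) and invoking antisymmetry of $\succeq$, so that $D_+(h)=D^W_+(h)$ and the two suprema coincide. The extra remark on translation equivariance is harmless but not needed.
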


\smallskip

\begin{proof}
By Corollary \ref{mon_abs_rep}, $h(z)=\sup_{\ell\in D_+(h)}\ell(z),$ for all $z\in V$. Since $\ell$ is linear, $h\succeq\ell$, and $h$ is $\succeq$-sublinear we have that
$$w=-h(-w)\preceq \ell(w)\preceq h(w)=w.$$
Since $\succeq$ is antisymmetric it follows that $\ell(w)=w$, for all $w\in W$. This implies that
$$h(z)=\sup\limits_{\ell\in D^W_+(h)}\ell(z),$$
for all $z\in V$.
\end{proof}

\smallskip

Proposition \ref{partial_add} is a first step towards the representation of conditional sublinear expectations. An important observation is that, in all previous results, the sets $D(h)$, $D_+(h)$, and $D^W_+(h)$ are convex. In addition, it can be observed that these sets are the largest sets of linear functionals whose envelope represents $h$. We show this only for $D^W_+(h)$, since the other proofs are analogous.

\medskip

\begin{lemma}\label{uniqueness_D}
Let $(V,\succeq)$ be Dedekind complete Riesz space and $W\subseteq V$ a Dedekind complete Riesz subspace. If there exists a set $D$ of linear, $\succeq$-monotone, and $W$-normalized mappings $\ell:V\to W$ such that
$$h(x)=\sup\limits_{\ell\in D}\ell(x),$$ 
for all $x\in V$, then $D\subseteq D^W_+(h)$ and 
$$h(x)=\sup\limits_{\ell\in D^W_+(h)}\ell(x),$$ 
for all $x\in V$.
\end{lemma}

\smallskip

\begin{proof}
By hypothesis, $\ell(x)\preceq h(x)$ for all $\ell\in D$ and $x\in V$, and hence $\ell \in D^W_+(h)$. This implies that $D\subseteq D^W_+(h)$ and
$$h(x)=\sup\limits_{\ell\in D}\ell(x)\preceq \sup\limits_{l\in D^W_+(h)}l(x),$$
for all $x\in V$. Since for all $l\in  D^W_+(h)$ we must have $l\preceq h$, it follows that $\sup\limits_{l\in D^W_+(h)}l(x)\preceq h(x)$, for all $x\in V$. Given that $\succeq$ is antisymmetric, it follows that
$$h(x)=\sup\limits_{\ell\in D^W_+(h)}\ell(x),$$ 
for all $x\in V$.
\end{proof}

\smallskip

In all above representation results, it is immediate to see that whenever $h$ is linear, the set of dominated linear mappings is a singleton whose only element is $h$. We report this straightforward observation as a lemma for referential purposes.

\medskip

\begin{lemma}\label{linear_lemma}
Let $(V,\succeq)$ be Dedekind complete Riesz space and $W\subseteq V$ a Dedekind complete Riesz subspace.
If $h\in L(V,W)$, then $D(h)=\left\lbrace h \right\rbrace$.
\end{lemma}

\smallskip

\begin{proof}
Suppose that $\ell$ is a linear mapping dominated by $h$. Then, $-h(-x)\preceq \ell(x)\preceq h(x)$, for all $x\in V$. Since $h$ is linear, it follows that $h=\ell$.
\end{proof}

\medskip

Clearly, the same result holds for $h$ being a linear, $\succeq$-monotone, and $W$-normalized mapping, that is, $D^W_+(h)=\left\lbrace h \right\rbrace$.

\medskip
\subsection{Representation Based on Conditional Expectation}
Consider a probability space $(\Omega,\F,\mathbb{P})$. Random variables in this space are identified with their equivalence classes, and hence all inequalities and equalities are intended to hold $\mathbb{P}$-almost surely. Let $\G$ be a sub-$\sigma$-algebra of $\F$. We will refer to $L^1(\G)$-normalization as $\G$-normalization. We denote $\X=L^1(\F)$. A function $h:\X\to L^1(\G)$ is \textit{continuous from above} if for all sequences $(X_m)_{m\in \mathbb{N}}\in \X^{\mathbb{N}}$ such that $|X_m|\leq Y$, for all $m\in \mathbb{N}$, and for some $Y\in \X$, 
$$X_m\downarrow X \ \implies \ h(X_m)\downarrow h(X).$$
It is immediate to note that whenever $h$ is sublinear, monotone, and continuous from above, all positive linear mappings $\ell$ dominated by $h$ must also be continuous from above. Indeed, for the same sequence as before $(X_m)_{m\in \mathbb{N}}$, we have clearly that $X_m-X\downarrow 0$ and hence,
\[
0\leq \ell(X_m)-\ell(X)=\ell(X_m-X)\leq h(X_m-X)\downarrow h(0)=0
\]
that implies that $\ell(X_m)\downarrow \ell(X)$. Note also that because of linearity, all such $\ell$ will also be continuous from below.

\medskip

\begin{proposition}\label{main_representation}
If $h:\X\to L^1(\G)$ is monotone, sublinear, $\G$-normalized, and continuous from above, then there exists a convex set $\mathcal{Q}$ of probability measures on $(\Omega, \F)$ such that
$$h(X)=\sup\limits_{Q\in \mathcal{Q}}\mathbb{E}^{Q}\left[X|\G\right],$$

\smallskip

\noindent for all $X\in \X$. Moreover, $Q\ll\mathbb{P}$ and  $Q|_{\mathcal{G}}=\mathbb{P}|_{\mathcal{G}}$, for all $Q \in \mathcal{Q}$.
\end{proposition}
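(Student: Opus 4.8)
The plan is to combine the abstract envelope representation of Proposition \ref{partial_add} with a conditional-expectation representation of each dominated linear operator. Since $(\X,\geq)$ and $(L^1(\G),\geq)$ are Dedekind complete Riesz spaces by Lemma \ref{L1_count_sup}, and $L^1(\G)$ is a Riesz subspace of $\X$, the hypotheses of Proposition \ref{partial_add} hold with $W=L^1(\G)$, yielding
$$h(X)=\sup_{\ell\in D^W_+(h)}\ell(X),\qquad X\in \X,$$
where every $\ell\in D^W_+(h)$ is linear, $\geq$-positive, and $L^1(\G)$-normalized, and moreover continuous from above because $h$ is (by the observation recorded just before the statement). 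It therefore suffices to show that each such $\ell$ is a conditional expectation $\ell(\cdot)=\mathbb{E}^{Q_\ell}[\,\cdot\mid \G]$ for a probability measure $Q_\ell\ll \mathbb{P}$ with $Q_\ell|_\G=\mathbb{P}|_\G$; setting $\mathcal{Q}:=\{Q_\ell:\ell\in D^W_+(h)\}$ then gives the representation, and convexity of $\mathcal{Q}$ follows from convexity of $D^W_+(h)$ together with the affinity of the map $\ell\mapsto Q_\ell$.

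Fixing $\ell\in D^W_+(h)$, I would define $Q(A):=\mathbb{E}[\ell(\One_A)]$ for $A\in \F$ and first verify that $Q$ is a probability measure. Nonnegativity and $Q(\Omega)=1$ follow from positivity and normalization of $\ell$ (note $\ell(\One_\Omega)=\One_\Omega$), whereas countable additivity is precisely where continuity from above enters: for pairwise disjoint $(A_n)$ with union $A$, the tails $\One_{\bigcup_{n>N}A_n}\downarrow 0$ are dominated by $\One_A$, so continuity from above gives $\ell(\sum_{n\leq N}\One_{A_n})\uparrow \ell(\One_A)$, and monotone convergence yields $\sum_n Q(A_n)=Q(A)$. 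Absolute continuity is immediate, since $\mathbb{P}(A)=0$ forces $\One_A=0$ and hence $\ell(\One_A)=0$, while $Q|_\G=\mathbb{P}|_\G$ follows from normalization applied to $\One_A$ with $A\in \G$.

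The technical heart of the argument, and the step I expect to be the main obstacle, is the local (module) property $\ell(\One_A X)=\One_A\ell(X)$ for $A\in\G$ and bounded $X$. I would extract it from positivity and normalization alone: for $0\leq X\leq c$ one has $\One_A X\leq c\One_A$, whence $0\leq \ell(\One_A X)\leq c\,\ell(\One_A)=c\One_A$, forcing $\ell(\One_A X)$ to vanish on $A^c$; decomposing $X=X^+-X^-$ and then $X=\One_A X+\One_{A^c}X$ gives the identity. With this in hand, the defining relation of conditional expectation is checked first on indicators, since for $A\in\G$ and $B\in\F$,
$$\mathbb{E}[\One_A\,\ell(\One_B)]=\mathbb{E}[\ell(\One_{A\cap B})]=Q(A\cap B)=\mathbb{E}^Q[\One_A\One_B],$$
and then extended to simple and bounded $X$ by linearity and uniform approximation, using $|\ell(X_m)-\ell(X)|\leq \|X_m-X\|_\infty$ (a consequence of monotonicity and normalization of $h$). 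Thus $\ell(X)=\mathbb{E}^Q[X\mid\G]$ for all bounded $X$.

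Finally I would pass to arbitrary $X\in\X$ by monotone approximation. Linearity upgrades continuity from above to continuity from below, so for $X\geq 0$ the truncations $X\wedge m\uparrow X$ give $\ell(X\wedge m)\uparrow \ell(X)$ $\mathbb{P}$-a.s.; since $\ell(X\wedge m)=\mathbb{E}^Q[X\wedge m\mid\G]$ and $Q\ll \mathbb{P}$, the conditional monotone convergence theorem under $Q$ identifies the common limit as $\mathbb{E}^Q[X\mid\G]$ (and incidentally shows $X\in L^1(Q)$, because $\mathbb{E}^Q[\ell(X)]=\mathbb{E}[\ell(X)]<\infty$). Splitting a general $X$ into positive and negative parts then gives $\ell=\mathbb{E}^Q[\,\cdot\mid\G]$, and substituting back into the envelope yields $h(X)=\sup_{Q\in\mathcal{Q}}\mathbb{E}^Q[X\mid\G]$, with each $Q\in\mathcal{Q}$ satisfying $Q\ll\mathbb{P}$ and $Q|_\G=\mathbb{P}|_\G$.
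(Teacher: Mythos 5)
Your proposal is correct and shares the paper's overall architecture---the envelope representation of Proposition \ref{partial_add} applied with $W=L^1(\G)$, followed by the identification of each dominated operator $\ell$ with a conditional expectation via the measure $Q_\ell(A):=\E[\ell(\One_A)]$---but it diverges from the paper precisely at the technical heart, and in an interesting way. Where you prove the localization property $\ell(\One_A X)=\One_A\,\ell(X)$ for $A\in\G$ directly from positivity and $\G$-normalization (your observation that $0\le \ell(\One_A X)\le c\,\ell(\One_A)=c\One_A$ forces $\ell(\One_A X)$ to vanish off $A$ is exactly right), the paper instead first establishes the expectation-invariance identity $\mathbb{E}^{Q}[\ell(X)]=\mathbb{E}^{Q}[X]=\E[\ell(X)]$ and then invokes Theorem 1 of \cite{Pfanzagl} to obtain the same module property; your subsequent approximation ladder (indicators, then simple, then bounded functions via the sup-norm Lipschitz bound, then general $X\in\X$ via truncation and conditional monotone convergence) replaces the paper's direct passage from Pfanzagl's theorem to $\int_A X\,\mathrm{d}Q=\int_A\ell(X)\,\mathrm{d}Q$ for $A \in \G$. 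Your route is more elementary and fully self-contained, at the cost of a longer verification; the paper's is shorter on the page but rests on an external citation. Your convexity argument also differs: you take the affine image of the convex set $D^W_+(h)$ under $\ell\mapsto Q_\ell$, while the paper passes to the maximal set $\tilde{\mathcal{Q}}$ of all measures whose conditional expectations are dominated by $h$; both are valid. Two small points you should make explicit in a final write-up: in your indicator computation the displayed integrals are with respect to $\mathbb{P}$, so you need the already-established facts that $\One_A\,\ell(\One_B)$ is $\G$-measurable and $Q_\ell|_\G=\mathbb{P}|_\G$ to convert it into the defining relation $\int_A\ell(\One_B)\,\mathrm{d}Q_\ell=Q_\ell(A\cap B)$; and the final identity $\ell(X)=\mathbb{E}^{Q_\ell}[X\,|\,\G]$ is first obtained $Q_\ell$-a.s., which upgrades to the required $\mathbb{P}$-a.s.\ equality because both sides are $\G$-measurable and any $\G$-measurable $Q_\ell$-null set is $\mathbb{P}$-null.
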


\smallskip

\begin{proof}
By Proposition \ref{partial_add} and the previous observation on continuity, we have
$$h(X)=\sup\limits_{\ell\in D^{\G}_+(h)}\ell(X),$$
for all $X\in \X$, where $D^{\G}_+(h)$ is the set of all linear, continuous from below, monotone, and $\G$-normalized mappings $\ell$ such that $\ell\leq h$. Fix $\ell\in  D^{\G}_+(h)$. We now show that there exists a probability measure $Q$ on $(\Omega,\mathcal{F})$ such that
$$\ell:X\mapsto \mathbb{E}^{Q}\left[X|\mathcal{G}\right].$$

\noindent Define the set function $Q$ on $\F$ by $Q(A)=\displaystyle\int_{\Omega}\ell\left(\mathbf{1}_A\right)\mathrm{d}\mathbb{P}$, for all $A\in \mathcal{F}$. Since $\Omega,\emptyset\in \mathcal{G}$ and $\ell$ is $\mathcal{G}$-normalized, $Q(\Omega)=\displaystyle\int_{\Omega}\ell(\mathbf{1}_{\Omega})\mathrm{d}\mathbb{P}=\mathbb{P}(\Omega)=1$. Analogously, $Q(\emptyset)=0$. By linearity of $\ell$, the additivity of $Q$ is immediate. Now, let $\left(A_m\right)_{m\in \mathbb{N}}$ be a disjoint sequence of elements of $\F$. Then by linearity, monotonicity, and continuity from below of $\ell$, the Monotone Convergence Theorem yields
\begin{align*}
\sum_{m=1}^{\infty}Q(A_m)&=\lim\limits_{m\to \infty}Q\left(\bigcup_{k=1}^mA_k\right)=\lim\limits_{m\to \infty}\int_{\Omega}\ell\left(\mathbf{1}_{\bigcup_{k=1}^mA_k}\right)\mathrm{d}\mathbb{P}\\
&=\int_{\Omega}\ell\left(\lim\limits_{m\to \infty}\mathbf{1}_{\bigcup_{k=1}^mA_k}\right)\mathrm{d}\mathbb{P}=\int_{\Omega}\ell\left(\mathbf{1}_{\bigcup_{n=1}^{\infty}A_n}\right)\mathrm{d}\mathbb{P}=Q\left(\bigcup_{n=1}^{\infty}A_n\right).
\end{align*}
Thus $Q$ is a probability measure over $\mathcal{F}$. Moreover, since $\ell$ is $\mathcal{G}$-normalized, it follows that for all $A\in \mathcal{G}$, $Q(A)=\displaystyle\int_{\Omega}\ell(\mathbf{1}_A)\mathrm{d}\mathbb{P}=\int_{\Omega}\mathbf{1}_A\mathrm{d}\mathbb{P}=\mathbb{P}(A)$ and hence, $Q|_{\mathcal{G}}=\mathbb{P}|_{\mathcal{G}}$. By definition of $Q$ it is immediate to see that $Q\ll\mathbb{P}$.

\medskip

Now, for all $A\in \F$, $\ell(\mathbf{1}_A)$ is $\G$-measurable. Therefore, since $Q|_{\mathcal{G}}=\mathbb{P}|_{\mathcal{G}}$, it follows that
$$\int_{\Omega}\mathbf{1}_A\mathrm{d}Q=\int_{\Omega}\ell\left(\mathbf{1}_{A}\right)\mathrm{d}\mathbb{P}=\int_{\Omega}\ell(\mathbf{1}_{A})\mathrm{d}Q.$$
By linearity and continuity from below of $\ell$, the Monotone Convergence Theorem implies that 
\begin{equation}\label{eq:linearrep}
\mathbb{E}^Q\left[X\right]=\mathbb{E}^Q\left[\ell(X)\right]=\mathbb{E}\left[\ell(X)\right],
\end{equation}
for all $X\in \X$. By \eqref{eq:linearrep}, we have that if $X\in \mathcal{X}$, then
\[
\mathbb{E}^Q\left[|X|\right]=\mathbb{E}^Q\left[\ell(|X|)\right]=\mathbb{E}\left[\ell(|X|)\right]<\infty.
\]
This implies that $\mathcal{X}\subseteq  L^1\left(\Omega,\mathcal{F},Q\right)$. Since $\mathcal{X}=L^1\left(\Omega,\mathcal{F},\mathbb{P}\right)$ we have that:
\begin{enumerate}
\item $\mathbb{R}\subseteq \mathcal{X}$,
\medskip
\item If $X,Y\in \mathcal{X}$ and $A_1,A_2\in \F$ with $A_1\cap A_2=\emptyset$, then $\mathbf{1}_{A_1}X+\mathbf{1}_{A_2}Y\in \mathcal{X}$.
\end{enumerate}
Moreover, by definition $\ell(\mathcal{X})\subseteq L^1(\G)$ and since $Q$ is absolutely continuous with respect to $\mathbb{P}$, we have that $\ell(Z)=Z$ $Q$-a.s. for all $Z\in L^1(\G)$. By monotonicity and expectation invariance (i.e., \eqref{eq:linearrep}) of $\ell$ with respect to $Q$, Theorem 1 in \cite{Pfanzagl} implies that
\[
\ell(\mathbf{1}_A X)=\mathbf{1}_A \ell(X)
\]
for all $X\in \mathcal{X}$ and $A\in \G$. This, together with \eqref{eq:linearrep}, readily implies that
\[
\int_A X\mathrm{d}Q=\int_{\Omega}\mathbf{1}_A X\mathrm{d}Q=\int_{\Omega}\ell(\mathbf{1}_AX)\mathrm{d}Q=\int_A \ell(X)\mathrm{d}Q
\]
for all $X\in \mathcal{X}$ and $A\in \G$. Thus, for all $X\in \mathcal{X}$, we have $\ell(X)=\mathbb{E}^{Q}\left[X|\mathcal{G}\right].$ Since $\ell$ was chosen arbitrarily, it follows that there exists a set of probability measures $\mathcal{Q}\subseteq \bigtriangleup^{\mathbb{P}}(\mathcal{F}|\mathcal{G})$ such that
$$h(X)= \sup\limits_{Q\in \mathcal{Q}}\mathbb{E}^{Q}\left[X|\mathcal{G}\right],$$
for all $X\in \X$. It remains to show that $\mathcal{Q}$ can be taken to be convex. To this end, let $\tilde{\mathcal{Q}} :=\left\lbrace Q\in \bigtriangleup(\F):\forall X\in \X,\ h(X)\geq \mathbb{E}^{Q}\left[X|\mathcal{G}\right] \right\rbrace.$ Then $h(X)=\sup_{Q\in \tilde{\mathcal{Q}}}\mathbb{E}^{Q}\left[X|\mathcal{G}\right]$, for all $X\in \X$, and $\tilde{Q}$ is convex. 
\end{proof}

\newpage
\section{Proofs and Auxiliary Results}
\label{AppProofs}

For $\Xvec\in \X^n$ and a permutation $\pi:\left[n\right]\to \left[n\right]$, we denote by $\Xvec_{\pi}$ the vector $\left(X_{\pi(1)},\ldots,X_{\pi(n)}\right)$.

\medskip

\begin{lemma}\label{abs_agentsano}
Suppose that $H$ is a robust allocation mechanism. The following are equivalent:

\medskip

\begin{enumerate}[label=(\roman*)]
\item $H$ satisfies AA.

\medskip

\item $D_i\left(H_i,S,\G\right)=D_j\left(H_j,S,\G\right)$ for all $i,j\in \left[n\right]$, $S\in \X$, and $\G\in \Sigma$.
\end{enumerate}
\end{lemma}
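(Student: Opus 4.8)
The plan is to reduce the equivalence to a comparison of one-variable envelope functions. Since $H$ is a robust allocation mechanism, the representation in terms of supporting sets gives $H_i(\Xvec|\G)=\inf_{L\in D_i(H_i,S^{\Xvec},\G)}L(X_i)$, so $H_i(\Xvec|\G)$ depends on $\Xvec$ only through $X_i$ and the aggregate $S^{\Xvec}$. In particular, for fixed $S\in\X$ and $\G\in\Sigma$ the quantity $H_i(X\evec^{(i)}+(S-X)\evec^{(k)}|\G)$ does not depend on the choice of $k\neq i$, so the function
$$g_i^{S,\G}:X\mapsto H_i\bigl(X\evec^{(i)}+(S-X)\evec^{(k)}\,\big|\,\G\bigr),\qquad k\neq i,$$
is well defined, and the defining condition of the supporting set collapses to $D_i(H_i,S,\G)=\{L\in\mathcal{L}:L(X)\geq g_i^{S,\G}(X)\ \text{for all }X\in\X\}$. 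Hence the supporting sets for two indices $i,j$ (with the same $S,\G$) coincide precisely when $g_i^{S,\G}=g_j^{S,\G}$, and the whole lemma becomes a statement about these envelope functions.

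For (i) $\Rightarrow$ (ii), I would fix $i\neq j$, $S$, $\G$, and $X\in\X$, and prove $g_i^{S,\G}(X)=g_j^{S,\G}(X)$ by transporting one canonical vector onto the other via AA. Picking any $m\neq j$ and setting $\Xvec:=X\evec^{(j)}+(S-X)\evec^{(m)}$ gives $g_j^{S,\G}(X)=H_j(\Xvec|\G)$. Now choose any permutation $\pi$ with $\pi(i)=j$; then AA yields $H_j(\Xvec|\G)=H_{\pi(i)}(\Xvec|\G)=H_i(\Xvec_{\pi}|\G)$. Since $\pi(i)=j\neq m$, the index $k_0:=\pi^{-1}(m)$ is different from $i$, and a direct inspection of the coordinates of $\Xvec_{\pi}=(X_{\pi(1)},\dots,X_{\pi(n)})$ shows $\Xvec_{\pi}=X\evec^{(i)}+(S-X)\evec^{(k_0)}$, so that $H_i(\Xvec_{\pi}|\G)=g_i^{S,\G}(X)$. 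Chaining these equalities gives $g_j^{S,\G}(X)=g_i^{S,\G}(X)$, and since $X$ was arbitrary the envelopes, hence the supporting sets, agree.

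For (ii) $\Rightarrow$ (i) I would argue directly from the representation. Write $D(S,\G)$ for the common value of $D_1(H_1,S,\G),\dots,D_n(H_n,S,\G)$, so that $H_k(\Xvec|\G)=\inf_{L\in D(S^{\Xvec},\G)}L(X_k)$ for every $k$. Because a permutation leaves the aggregate invariant, $S^{\Xvec_{\pi}}=S^{\Xvec}$, and therefore
$$H_i(\Xvec_{\pi}|\G)=\inf_{L\in D(S^{\Xvec_{\pi}},\G)}L\bigl((\Xvec_{\pi})_i\bigr)=\inf_{L\in D(S^{\Xvec},\G)}L(X_{\pi(i)})=H_{\pi(i)}(\Xvec|\G),$$
which is exactly AA. The only genuinely delicate point is in the first implication: one must first extract from the robust structure that $H_i$ depends on $\Xvec$ solely through $X_i$ and $S^{\Xvec}$ (legitimizing $g_i^{S,\G}$ and its independence of $k$), and then select $\pi$ so that the constraint $\pi(i)=j$ automatically forces the residual part $S-X$ into a coordinate $k_0\neq i$; the remaining steps are routine substitutions.
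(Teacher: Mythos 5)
Your proposal is correct and takes essentially the same route as the paper's own proof: for (i) $\Rightarrow$ (ii) the paper likewise evaluates $H$ at two-sparse canonical vectors that are permutations of one another (choosing $\pi$ with $\pi(j)=i$), deduces equality of the one-variable envelope functions, and concludes equality of the supporting sets from their characterization as the linear maps dominating that envelope; for (ii) $\Rightarrow$ (i) it performs the same direct computation using $S^{\Xvec_{\pi}}=S^{\Xvec}$. Your preliminary reduction (that $H_i(\Xvec|\G)$ depends on $\Xvec$ only through $X_i$ and $S^{\Xvec}$, so the supporting-set infimum reproduces $H_i$) is precisely the substitution remark the paper states after Definition \ref{RBGRS} and invokes through Lemma \ref{uniqueness_D}, so you assume no more than the paper's proof does.
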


\smallskip

\begin{proof}
\noindent\underline{$\left[(i)\implies (ii)\right]$}: Suppose that $H$ satisfies AA. Let $i,j\in \left[n\right]$ with $i\neq j$, $X,S\in \X$, $\G\in \Sigma$, and $\pi:\left[n\right]\to \left[n\right]$ be a permutation with $\pi(j)=i$. Fix $\Xvec$ and $\Yvec$ in $\X^n$ with $X_i=X$, $X_j=S-X$,  $Y_j=X$, $Y_i=S-X$, and zero in all other entries. Clearly $\Yvec=\Xvec_{\pi}$ and $\G^{\Xvec}=\G^{\Yvec}$. By AA and Lemma \ref{uniqueness_D}
\begin{equation}\label{inf_equations}
\inf\limits_{L\in D_i(H_i,S,\G)}L(X)=H_{\pi(j)}\left(\Xvec|\G\right)=H_{j}\left(\Xvec_{\pi}|\G\right)=H_{j}\left(\Yvec|\G\right)=\inf\limits_{L\in D_j(H_j,S,\G)}L(X).
\end{equation}

\medskip

\noindent Since the choice of $X\in \X$ was arbitrary, it follows that 
\begin{align*}
D_i(H_i,S,\G)&=\left\lbrace \tilde{L}\in \mathcal{L}:\forall X\in \X, \tilde{L}(X)\geq \inf\limits_{L\in D_i(H_i,S,\G)}L(X) \right\rbrace\\
&=\left\lbrace \tilde{L}\in \mathcal{L}:\forall X\in \X, \tilde{L}(X)\geq \inf\limits_{L\in D_j(H_j,S,\G)}L(X) \right\rbrace=D_j(H_j,S,\G).
\end{align*}

\bigskip

\noindent\underline{$\left[(ii)\implies (i)\right]$}: Suppose that $D_i(H_i,S,\G)=D_j(H_j,S,\G)$, for all $i,j\in \left[n\right]$, $S\in \X$, and $\G\in \Sigma$. Let $\Xvec\in \X^n$, $\G\in \Sigma$, and $\pi:\left[n\right]\to \left[n\right]$ be a permutation. Then
$$H_{\pi(i)}(\Xvec|\G)=\inf\limits_{L\in D_{\pi(i)}(H_{\pi(i)},S^{\Xvec},\G)}L(X_{\pi(i)})=\inf\limits_{L\in D_i(H_{i},S^{\Xvec},\G)}L(X_{\pi(i)})=H_i(\Xvec_{\pi}|\G).$$

\medskip

\noindent Since the choice of $i\in \left[n\right]$ and $\Xvec\in \X^n$ was arbitrary, AA holds. 
\end{proof}

\medskip

\begin{corollary}\label{agentsano}
Suppose that $H$ is a robust mean allocation mechanism, and $\bigtriangleup\subseteq \bigtriangleup(\F)$ is  nonempty. If for all $i\in \left[n\right]$,
\[
\mathcal{Q}_i(S,\G)=\left\lbrace Q\in \bigtriangleup:\forall X\in \X,\ \mathbb{E}^{Q}\left[X|\G^S\right]\geq \inf\limits_{Q_i\in \mathcal{Q}_i\left(S,\G\right)}\mathbb{E}^{Q_i}\left[X|\mathcal{G}^S\right] \right\rbrace,
\]
then, the following are equivalent:

\medskip

\begin{enumerate}[label=(\roman*)]
\item $H$ satisfies Agent Anonymity.

\bigskip

\item $\mathcal{Q}_i(S,\G)=\mathcal{Q}_j(S,\G)$, for all $i,j\in \left[n\right]$, $S\in \X$, and $\G\in \Sigma$.
\end{enumerate}
\end{corollary}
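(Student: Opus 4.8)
The plan is to mirror, line by line, the proof of Lemma \ref{abs_agentsano}, replacing each supporting set $D_i(H_i,S,\G)$ of linear operators by the set of measures $\mathcal{Q}_i(S,\G)$, and replacing each linear map $L$ by the conditional-expectation operator $X\mapsto \mathbb{E}^{Q}[X|\G^S]$. The structural role played there by Lemma \ref{uniqueness_D} (which recovers a representing set from its lower envelope) is played here by the standing hypothesis on $\mathcal{Q}_i(S,\G)$, namely that it is the \emph{maximal} family in $\bigtriangleup$ whose conditional expectations dominate the relevant infimum. This is exactly the ingredient that lets one upgrade an equality of envelopes into an equality of sets, so I would flag it as the only genuinely load-bearing step.

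For the direction $(i)\implies(ii)$, I would fix $i\neq j$ in $[n]$, together with $S\in\X$ and $\G\in\Sigma$, and choose a permutation $\pi$ with $\pi(j)=i$. Following the construction in Lemma \ref{abs_agentsano}, take $\Xvec,\Yvec\in\X^n$ with $X_i=X$, $X_j=S-X$, $Y_j=X$, $Y_i=S-X$, and zeros in all remaining coordinates, so that $\Yvec=\Xvec_\pi$ and $S^{\Xvec}=S^{\Yvec}=S$, whence $\G^{\Xvec}=\G^{\Yvec}=\G^S$. Unfolding the definition of a robust conditional mean allocation mechanism and applying AA gives
\[
\inf_{Q\in\mathcal{Q}_i(S,\G)}\mathbb{E}^{Q}[X|\G^S]=H_i(\Xvec|\G)=H_{\pi(j)}(\Xvec|\G)=H_j(\Xvec_\pi|\G)=H_j(\Yvec|\G)=\inf_{Q\in\mathcal{Q}_j(S,\G)}\mathbb{E}^{Q}[X|\G^S],
\]
for every $X\in\X$. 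Since the two lower envelopes coincide for all $X$, the maximality hypothesis forces $\mathcal{Q}_i(S,\G)$ and $\mathcal{Q}_j(S,\G)$ to be described by the same defining inequality, hence to be equal; as $i,j,S,\G$ were arbitrary this yields (ii).

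For $(ii)\implies(i)$, I would argue directly: fixing $\Xvec\in\X^n$, $\G\in\Sigma$, and a permutation $\pi$, note that permuting coordinates leaves the aggregate unchanged, so $\G^{\Xvec}=\G^{\Xvec_\pi}$, and then
\[
H_{\pi(i)}(\Xvec|\G)=\inf_{Q\in\mathcal{Q}_{\pi(i)}(S^{\Xvec},\G)}\mathbb{E}^{Q}[X_{\pi(i)}|\G^{\Xvec}]=\inf_{Q\in\mathcal{Q}_{i}(S^{\Xvec},\G)}\mathbb{E}^{Q}[X_{\pi(i)}|\G^{\Xvec}]=H_i(\Xvec_\pi|\G),
\]
which is precisely AA. I do not anticipate a serious obstacle here, since the computation is a direct substitution; the one point to verify carefully is that $\G^{\Xvec}=\G^{\Xvec_\pi}$ so that the conditioning $\sigma$-algebra is unaffected by the relabeling, and that the hypothesis on $\mathcal{Q}_i$ is consistent with $\mathcal{Q}_i(S,\G)\subseteq\bigtriangleup$ so that all infima are taken over the same ambient family $\bigtriangleup$.
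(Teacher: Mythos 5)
Your proposal is correct and follows exactly the paper's intended route: the paper's own proof of Corollary \ref{agentsano} is simply ``This follows from Lemma \ref{abs_agentsano},'' and your argument spells out precisely that transfer, with conditional expectations $\mathbb{E}^{Q}[\,\cdot\,|\G^S]$ in place of the linear maps $L$ and the maximality hypothesis on $\mathcal{Q}_i(S,\G)$ playing the role of Lemma \ref{uniqueness_D}. You correctly identify that hypothesis as the load-bearing step that upgrades equality of lower envelopes to equality of the representing sets.
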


\smallskip

\begin{proof}
This follows from Lemma \ref{abs_agentsano}. 
\end{proof}

\medskip

\begin{lemma}\label{cost_sub_is_split_robu}
An allocation mechanism $H$ satisfies FP if and only if for all $\Yvec\in \X^n$, $i,j\in \left[n\right]$, and $\G\in \Sigma$,

\vspace{-0.15cm}

\begin{equation}\label{sp_proof}
H_i(\Xvec|\G)+H_j(\Xvec|\G)\geq H_i(\Yvec|\G)+H_j(\Yvec|\G),
\end{equation}

\medskip

\noindent where $\Xvec:=\Yvec-Y_j\evec^{(j)}+Y_j\evec^{(i)}$.
\end{lemma}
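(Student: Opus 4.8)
The plan is to prove both implications by direct substitution into the definitions, after isolating the single algebraic fact that drives everything: the local cost $\mathcal{C}_{i,j}^{\cdot,\G}(H)$ depends on its first argument only through the pair-sum $X_i+X_j$ and the pair-allocation $H_i(\cdot|\G)+H_j(\cdot|\G)$, and the transfer appearing in FP leaves the pair-sum unchanged.

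First I would record this fact. Writing out the definition $\mathcal{C}_{i,j}^{\Xvec,\G}(H)=X_i+X_j-H_i(\Xvec|\G)-H_j(\Xvec|\G)$, note that if $X_j=0$ and $\Yvec=\Xvec+Z\evec^{(j)}-Z\evec^{(i)}$, then $Y_i+Y_j=(X_i-Z)+(0+Z)=X_i+X_j$, so the two pair-sums coincide and
$$\mathcal{C}_{i,j}^{\Yvec,\G}(H)-\mathcal{C}_{i,j}^{\Xvec,\G}(H)=\big(H_i(\Xvec|\G)+H_j(\Xvec|\G)\big)-\big(H_i(\Yvec|\G)+H_j(\Yvec|\G)\big).$$
Hence the FP inequality $\mathcal{C}_{i,j}^{\Yvec,\G}(H)\geq\mathcal{C}_{i,j}^{\Xvec,\G}(H)$ is literally the same assertion as \eqref{sp_proof} for the corresponding pair $\Xvec,\Yvec$.

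Second I would match the two parametrizations, checking that the operation ``collapse $j$ into $i$'', namely $\Yvec\mapsto\Yvec-Y_j\evec^{(j)}+Y_j\evec^{(i)}$, and the operation ``transfer the amount $Z$ from $i$ to $j$'', namely $\Xvec\mapsto\Xvec+Z\evec^{(j)}-Z\evec^{(i)}$ with $Z=Y_j$, are mutually inverse. For the ``only if'' direction I would take an arbitrary $\Yvec$ and set $\Xvec:=\Yvec-Y_j\evec^{(j)}+Y_j\evec^{(i)}$, so that $X_j=0$; applying FP to this $\Xvec$ with $Z:=Y_j$ reproduces exactly this $\Yvec$, and the identity above converts FP's conclusion into \eqref{sp_proof}. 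For the ``if'' direction I would start from FP's data $(\Xvec,Z)$ with $X_j=0$, form $\Yvec:=\Xvec+Z\evec^{(j)}-Z\evec^{(i)}$, observe that $Y_j=Z$ and $\Yvec-Y_j\evec^{(j)}+Y_j\evec^{(i)}=\Xvec$, and then read off FP's conclusion from \eqref{sp_proof} applied to this $\Yvec$.

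There is no substantial obstacle here; the lemma is essentially a reparametrization of FP. The only points requiring care are (i) verifying the pair-sum identity, which is what makes the $X_i+X_j$ terms cancel, and (ii) confirming that the two substitutions are genuine inverses, so that each instance of one inequality corresponds to exactly one instance of the other, guaranteeing the equivalence holds over \emph{all} $\Yvec$ and over \emph{all} admissible FP data. The degenerate case $i=j$ renders both the transfer and \eqref{sp_proof} trivial (the transfer is null and the inequality an equality), so it can be dismissed immediately.
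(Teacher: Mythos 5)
Your proof is correct and takes essentially the same approach as the paper: both directions proceed via the same mutually inverse substitutions ($\Xvec:=\Yvec-Y_j\evec^{(j)}+Y_j\evec^{(i)}$ and $\Yvec:=\Xvec+Z\evec^{(j)}-Z\evec^{(i)}$ with $Z=Y_j$), using the preserved pair-sum $X_i+X_j=Y_i+Y_j$ to cancel the endowment terms and reduce the cost inequality of FP to inequality \eqref{sp_proof}. Your explicit dismissal of the degenerate case $i=j$ is a minor point of extra care; the paper handles it implicitly by fixing $i\neq j$.
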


\smallskip

\begin{proof} 
Suppose that $H$ satisfies FP. Fix $\Yvec\in \X^n$, $i\neq j$ in $\left[n\right]$, and let $\Xvec:=\Yvec-Y_j\evec^{(j)}+Y_j\evec^{(i)}$. Clearly, $X_j=0$ and hence, by Frictional Participation,

\vspace{-0.3cm}

\[
Y_i+Y_j-H_i(\Yvec|\G)-H_j(\Yvec|\G)\geq X_i+X_j-H_i(\Xvec|\G)-H_j(\Xvec|\G).
\]

\medskip

\noindent Therefore, since $X_j=0$ and $X_i=Y_i+Y_j$, it follows that condition \eqref{sp_proof} holds.

\medskip

Conversely, suppose that $H$ satisfies \eqref{sp_proof}. Fix $\Xvec\in \X^n$ with $X_j=0$ for some $j\in \left[n\right]$, and fix $Z\in \X$. Let $\Yvec:=\Xvec+Z\evec^{(j)}-Z\evec^{(i)}$. Then $\Xvec=\Yvec-Y_j\evec^{(j)}+Y_j\evec^{(i)}$. Since $Y_i+Y_j=X_i-Z+Z=X_i+X_j$, condition \eqref{sp_proof} yields

\vspace{-0.15cm}

\[
Y_i+Y_j-H_i(\Yvec|\G)-H_j(\Yvec|\G)\geq X_i+X_j-H_i(\Xvec|\G)-H_j(\Xvec|\G),
\]

\medskip

\noindent and hence $H$ satisfies FP.
\end{proof}

\smallskip
\subsection{Proof of Theorem \ref{RGRS_prop}} \ 

\smallskip

\noindent\underline{$\left[(i)\implies (ii)\right]$}:  Let $S\in \X$ and $\G\in \Sigma$. Define $h^{S|\G}:\X\to \X$ by
\[h^{S|\G}(X):=H_1\left(X,S-X,0,\ldots,0|\G\right),\] 
for all $X\in \X$. We now verify the following two claims.

\medskip

\begin{enumerate}
\item \underline{$h^{S|\G}$ is superadditive}. To prove superadditivity fix $X,Y\in \X$. Then
\begin{align*}
h^{S|\G}\left(X+Y\right)&=H_1\left(X+Y,S-X-Y,0,\ldots,0|\G\right)\\
&\geq H_1\left(X,S-X-Y,Y,0\ldots,0|\G\right)+H_3\left(X,S-X-Y,Y,0\ldots,0|\G\right)\\
&=H_1\left(X,S-X-Y,Y,0\ldots,0|\G\right)+H_1\left(Y,S-X-Y,X,0\ldots,0|\G\right)\\
&=H_1\left(X,S-X,0\ldots,0|\G\right)+H_1\left(Y,S-Y,0\ldots,0|\G\right)
=h^{S|\G}(X)+h^{S|\G}(Y),
\end{align*}

\medskip

\noindent where the inequality follows from FP, Lemma \ref{cost_sub_is_split_robu}, and ZP. The second equality follows from AA, and the second-to-last equality follows from OA.

\medskip

\item \underline{$h^{S|\G}$ is positively homogeneous}. By SI, it follows that for all $\alpha\in \left[0,1\right]$ and $X\in \X$, we have 
\begin{align*}
h^{S|\G}(\alpha X)
&=H_1\left(\alpha X,S-\alpha X,0,\ldots,0|\G\right)
=H_1\left(\alpha X,S- X,(1-\alpha)X,\ldots,0|\G\right)\\
&=H_3\left((1-\alpha)X,S- X,\alpha X,\ldots,0|\G\right)
=\alpha H_1\left(X,S-X,0,\ldots,0|\G\right)=\alpha h^{S|\G}(X),
\end{align*}


\noindent where the second and third equalities follow from OA and AA, respectively. Thus, $h^{S|\G}$ is positively homogeneous.
\end{enumerate}

\smallskip

\noindent Consequently, and since the choice of $S\in \X$ was arbitrary, Theorem \ref{HBrep} implies that
$$H_1(\Xvec|\G)=H_1\left(X_1,S^{\Xvec}-X_1,0,\ldots,0|\G\right)=h^{S^{\Xvec}|\G}(X_1)=\inf\limits_{L\in D_1\left(H_1,S^{\Xvec},\G\right)}L(X_1),$$
where $$D_1\left(H_1,S^{\Xvec},\G\right)=\bigg\{ L\in \mathcal{L}:\forall X\in \X,\ L(X)\geq H_1(X,S^{\Xvec}-X,0,\ldots,0|\G) \bigg\}.$$ 

\medskip

\noindent Proceeding similarly, we obtain that
$$H_j(\Xvec|\G)=\inf\limits_{L\in D_j\left(H_j,S^{\Xvec},\G\right)}L(X_j), \ \forall j \in \{2,\ldots,n\}.$$

\medskip

\noindent\underline{$\left[(ii)\implies (i)\right]$}: For all $i\in [n]$, we denote 
\[
D\left(H_i,S^{\Xvec},\G\right)=D_i\left(H_i,S^{\Xvec},\G\right),
\]

\medskip

\noindent for all $i\in \left[n\right]$, $\Xvec\in \X^n$, and $\G\in \Sigma$.

\medskip

\begin{enumerate}
    \item \underline{FP}. Let $\Xvec\in \X$, $i,j\in \left[n\right]$, and $\Yvec=\Xvec-X_j\evec^{(j)}+X_j\evec^{(i)}$. Then
    \begin{align*}
H_i(\Yvec|\G)&=\inf\limits_{L\in D\left(H_i,S^{\Xvec},\G\right)}L(X_i+X_j)\geq \inf\limits_{L\in D\left(H_i,S^{\Xvec},\G\right)}L(X_i)+\inf\limits_{L\in D\left(H_i,S^{\Xvec},\G\right)}L(X_j)\\
&= \inf\limits_{L\in D\left(H_i,S^{\Xvec},\G\right)}L(X_i)+\inf\limits_{L\in D\left(H_j,S^{\Xvec},\G\right)}L(X_j)=H_i(\Xvec|\G)+H_j(\Xvec|\G).
\end{align*}
By Lemma \ref{cost_sub_is_split_robu}, $H$ satisfies FP.
\medskip  
    \item \underline{SI}. For all $\G \in \Sigma$, all $\Xvec \in \X^n$, and all $i,j \in [n]$, if $X_j=0$, $Z =\alpha X_i$ for some $\alpha \in [0,1]$, and $\Yvec := \Xvec + Z \evec^{(j)} - Z \evec^{(i)}$, then
    \begin{align*}
\alpha H_i(X_1, \ldots, X_i, \ldots, X_{j-1}, 0, X_{j+1}, \ldots, X_n|\G)&=\inf\limits_{L\in D_i\left(H_i,S,\G\right)}L(\alpha X_i)\\
&=\inf\limits_{L\in D_j\left(H_j,S,\G\right)}L(\alpha X_i)=H_j(\Yvec|\G).
    \end{align*}

\medskip
    
    \item \underline{OA} follows from the observation that for all allocations $\Xvec\in \X^n$, each $H_i$ depends only on $X_i$, $S^{\Xvec}$, and $\G$.

\medskip
    
    \item \underline{AA} follows from Lemma \ref{abs_agentsano}.

\medskip

    \item \underline{ZP} follows from the positive homogeneity of each $H_i$. \qed
\end{enumerate}

\medskip
\subsection{Proof of Proposition \ref{PropConvexCost}} \ 

\smallskip

\noindent Follows immediately from the proof of Theorem \ref{RGRS_prop}. \qed

\medskip
\subsection{Proof of Theorem \ref{thm1}} \ 

\smallskip

\noindent\underline{$\left[(i)\implies (ii)\right]$}: Let $S\in \X$ and $\G\in \Sigma$. Define $h^{S|\G}:\X\to \X$ by 
$$h^{S|\G}(X):=H_1\left(X,S-X,0,\ldots,0|\G\right),$$
for all $X\in \X$. Following the same steps as in the proof of Theorem \ref{RGRS_prop}, one can show that $h^{S|\G}$ is superadditive and positively homogeneous. We now verify the following claims.

\medskip

\begin{enumerate}
\item \underline{$h^{S|\G}$ is monotone}. Suppose that $X\geq Y$ in $\X$. By IF, we have that
$$H_1(X,Y,S-X-Y,0,\ldots,0)\geq H_2(X,Y,S-X-Y,0,\ldots,0).$$
By OA and AA, we have
\begin{align*}
h^{S|\G}(X)&=H_1(X,Y,S-X-Y,0,\ldots,0)\geq H_2(X,Y,S-X-Y,0,\ldots,0)\\
&=H_1(Y,X,S-X-Y,0,\ldots,0)=h^{S|\G}(Y),
\end{align*}
proving that $h^{S|\G}$ is monotone.

\medskip

\item The fact that \underline{$h^{S|\G}(X)$ is $\G^S$-measurable}, for all $X\in \X$, follows from IA.

\medskip

\item \underline{$h^{S|\G}$ is $\G^S$-normalized}. If $X$ is $\G^S$-measurable, then $(X,S-X,0,\ldots,0)$ is also $\G^S$-measurable and hence, by IB, we have
$$h^{S|\G}(X)=X.$$

\smallskip

\item \underline{$h^{S|\G}$ is continuous from above}. It follows directly from CA.
\end{enumerate}

\medskip

\noindent From Proposition \ref{main_representation}, it follows that there exists a set of probability measures $\mathcal{Q}_1(S,\G)$ such that
$$h^{S|\G}(X)= \inf\limits_{Q\in \mathcal{Q}_1(S,\G)}\mathbb{E}^{Q}\left[X|\mathcal{G}^S\right],$$
for all $X\in \X$. Then by AA, OA, and Corollary \ref{agentsano}, the proof of sufficiency is concluded.

\medskip

\noindent\underline{$\left[(ii)\implies (i)\right]$}: Suppose that $H$ is a robust conditional mean allocation mechanism with $\mathcal{Q}_i(S,\G)=\mathcal{Q}_j(S,\G)$, for all $S\in \X$, $i,j\in \left[n\right]$, and $\G\in \Sigma$. FP, SI, and OA were already proved for Theorem \ref{RGRS_prop}.

\medskip

\begin{enumerate}
    \item \underline{AA} follows from Corollary \ref{agentsano}.

\medskip
    
    \item \underline{IF}. Let $\Xvec\in \X^n$ and $\G\in \Sigma$. If $X_i\geq X_j$, then
    \begin{align*}
    H_i(\Xvec|\G)&= \inf\limits_{Q\in \mathcal{Q}_i(S,\G)}\mathbb{E}^{Q}\left[X_i|\mathcal{G}^{\Xvec}\right]=\inf\limits_{Q\in \mathcal{Q}_j(S,\G)}\mathbb{E}^{Q}\left[X_i|\mathcal{G}^{\Xvec}\right]\\
    &\geq \inf\limits_{Q\in \mathcal{Q}_j(S,\G)}\mathbb{E}^{Q}\left[X_j|\mathcal{G}^{\Xvec}\right]= H_j(\Xvec|\G).
    \end{align*}

\medskip
    
    \item \underline{IB}. Let $\Xvec\in \X^n$ and $\G\in \Sigma$. If $\Xvec$ is $\G^{\Xvec}$-measurable, then each $X_i$ is $\G^{\Xvec}$-measurable. This implies by the property of conditional expectations that
    $$H_i(\Xvec|\G)= \inf\limits_{Q\in \mathcal{Q}_i(S,\G)}\mathbb{E}^{Q}\left[X_i|\mathcal{G}^{\Xvec}\right]= \inf\limits_{Q\in \mathcal{Q}_i(S,\G)}X_i=X_i,$$
    for all $i\in \left[n\right]$. 

\medskip
    
    \item \underline{IA}. Let $\Xvec\in \X^n$, $\G\in \Sigma$, and $i\in \left[n\right]$. By definition, $H_i(\Xvec|\G)\in \X$. Thus, $\inf\limits_{Q\in \mathcal{Q}_i(S^{\Xvec},\G)}\mathbb{E}^{Q}\left[X_i|\mathcal{G}^{\Xvec}\right]\in \X$. This implies that the set
    \[
    \bigg\{ \mathbb{E}^{Q}\left[X_i|\mathcal{G}^{\Xvec}\right]:Q\in \mathcal{Q}_i(S^{\Xvec},\G) \bigg\}
    \]
    is order bounded with respect to the $\mathbb{P}$-a.s. dominance order. Therefore, by Lemma \ref{L1_count_sup}, there exists a countable set $\hat{\mathcal{Q}}\subseteq \mathcal{Q}_i(S^{\Xvec},\G)$ such that 
    \begin{equation}\label{count_sup}
    \inf\limits_{Q\in \mathcal{Q}_i(S^{\Xvec},\G)}\mathbb{E}^{Q}\left[X_i|\mathcal{G}^{\Xvec}\right]=\inf\limits_{Q\in \hat{\mathcal{Q}}}\mathbb{E}^{Q}\left[X_i|\mathcal{G}^{\Xvec}\right].
    \end{equation}

\medskip

    \noindent Since each $ \mathbb{E}^{Q}\left[X_i|\mathcal{G}^{\Xvec}\right]$ is $\G^{\Xvec}$-measurable, it follows from the measurability of infima of countable sets  and equation \eqref{count_sup} that
\[
H_i(\Xvec|\G)=\inf\limits_{Q\in \hat{\mathcal{Q}}}\mathbb{E}^{Q}\left[X_i|\mathcal{G}^{\Xvec}\right]\in \mathcal{X}\left(\G^{\Xvec}\right).
\]
By the arbitrary choice of $\Xvec\in \X^n$, $\G\in \Sigma$, and $i\in \left[n\right]$, IA follows.

\medskip
    
    \item \underline{CA}. Let $S\in \X$, $(X_m)_{m\in \mathbb{N}}\in \X^{\mathbb{N}}$ with $X_m\xrightarrow{\mathbb{P}-a.s.}X$ for some $X\in \X$ and $\lvert X_m\rvert\leq Y$ for some $Y\in \X$ and all $m\in \mathbb{N}$. Fix $\G\in \Sigma$. Since $\mathcal{Q}_i(S,\G)\subseteq \bigtriangleup^{\mathbb{P}}(\F|\G)$, we have that $X_m\xrightarrow{Q-a.s.}X$ for all $Q\in \mathcal{Q}_i(S,\G)$. Then, by Dominated Convergence Theorem, it follows that for all $i\neq j$
    {\small\begin{align*}
    H_i(X\evec^{(i)}+(S-X)\evec^{(j)}|\G)&=\inf\limits_{Q\in \mathcal{Q}_i(S,\G)}\mathbb{E}^{Q}\left[X|\mathcal{G}^{S}\right]=\inf\limits_{Q\in \mathcal{Q}_i(S,\G)}\mathbb{E}^{Q}\left[\limsup\limits_{m\to\infty}X_m|\mathcal{G}^{S}\right]\\
    &=\inf\limits_{Q\in \mathcal{Q}_i(S,\G)}\limsup\limits_{m\to\infty}\mathbb{E}^{Q}\left[X_m|\mathcal{G}^{S}\right]=\inf\limits_{Q\in \mathcal{Q}_i(S,\G)}\inf\limits_{m\geq 0}\sup\limits_{k\geq m}\mathbb{E}^{Q}\left[X_m|\mathcal{G}^{S}\right]\\
    &=\inf\limits_{m\geq 0}\inf\limits_{Q\in \mathcal{Q}_i(S,\G)}\sup\limits_{k\geq m}\mathbb{E}^{Q}\left[X_m|\mathcal{G}^{S}\right]\geq \inf\limits_{m\geq 0}\sup\limits_{k\geq m}\inf\limits_{Q\in \mathcal{Q}_i(S,\G)}\mathbb{E}^{Q}\left[X_m|\mathcal{G}^{S}\right]\\
    &=\limsup\limits_{m\to \infty}\inf\limits_{Q\in \mathcal{Q}_i(S,\G)}\mathbb{E}^{Q}\left[X_m|\mathcal{G}^{S}\right]=\limsup\limits_{m\to \infty}H_i(X_m\evec^{(i)}+(S-X_m)\evec^{(j)}|\G). 
    \end{align*}}
    Given the arbitrariety of $S\in \X$, $(X_m)_{m\in \mathbb{N}}\in \X^{\mathbb{N}}$, $\G\in \Sigma$, $i,j\in \left[n\right]$, we have that all the functions
    \[
    h^S_{i,j}:Z\mapsto H_i(Z\evec^{(i)}+(S-Z)\evec^{(j)}|\G)
    \]
    satisfy $h^S_{i,j}(Y)\geq \limsup\limits_{m\to\infty} h^S_{i,j}(Y_m)$, for all $i,j\in \left[n\right]$, $\G\in \Sigma$,  $(Y_m)_{m\in \mathbb{N}}\in \X^{\mathbb{N}}$ with $Y_m\xrightarrow{\mathbb{P}-a.s.}Y$ with $|Y_m|\leq Z$ for some $Y,Z\in \X$ and $S\in \X$. Now we prove that each $h^S_{i,j}$ is continuous from above. To this end, suppose that $(Z_m)_{m\in \mathbb{N}}\in \X^{\mathbb{N}}$ and $Z\in \X$, $Z_m\downarrow Z$, and $|Z_m|\leq Y$, for all $m\in \mathbb{N}$, for some $Y\in \X$. By IF, each $h^S_{i,j}$ is a monotone function. Hence, by the previous steps, it follows that
$$\limsup\limits_{m\to\infty}h_{i,j}^S(Z_m)\leq h_{i,j}^S(Z)\leq \liminf\limits_{m\to\infty}h_{i,j}^S(Z_m).$$

\medskip

\noindent Thus, $h_{i,j}^S(Z_m)\downarrow h_{i,j}^S(Z)$.\qed
\end{enumerate}

\medskip
\subsection{Proof of Proposition \ref{subj_CMRS}} \ 

\smallskip

\noindent\underline{$\left[(i)\implies (ii)\right]$}: The first steps of the proof follow the same reasoning as the proof of Theorem \ref{RGRS_prop}. In particular, let $S\in \X$ and $\G\in \Sigma$. Define $h^{S|\G}:\X\to \X$ by 
$$h^{S|\G}(X):=H_1\left(X,S-X,0,\ldots,0|\G\right),$$
for all $X\in \X$. The main difference in the sufficiency part of the proof is that FP* leads to additivity rather than superadditivity. This implies that $h^{S|\G}$ is linear. Repeating the arguments in the proof of Theorem \ref{RGRS_prop}, we obtain that $h^{S|\G}$ is also monotone, $\G^S$-normalized, positively homogeneous, and $h^{S|\G}(X)$ is $\G^S$-measurable for all $X\in \X$. By order continuity from below and linearity it follows that $h^{S|\G}$ is continuous from above. By Proposition \ref{main_representation} and Lemma \ref{linear_lemma}, there exists $Q\in \bigtriangleup^{\mathbb{P}}(\F|\G)$ such that
$$h^{S|\G}(X)=\mathbb{E}^{Q}\left[X|\G\right],$$
for all $X\in \X$. Then by AA, OA, and Corollary \ref{agentsano}, the proof of sufficiency is concluded.

\medskip

\noindent\underline{$\left[(ii)\implies (i)\right]$}: Following the steps of the proof of necessity of Theorem \ref{thm1}, it is immediate to see that any linear conditional mean allocation mechanism satisfies FP*, IF, IB, IA, SI, and AA. CA follows from the Monotone Convergence Theorem for conditional expectation. \qed

\medskip
\subsection{Proof of Proposition \ref{prop_properties}} \ 

\smallskip

\noindent Since $H(0|\cdot)=0$ and $\sigma(S^{\Xvec})=\sigma(\alpha S^{\Xvec})$, for all $\alpha>0$ and all $\Xvec\in \X^n$, it is straightforward to see that $H$ satisfies Positive Homogeneity. Normalization and Constancy follow from observing that constants are measurable with respect to all $\sigma$-algebras, and from the properties of conditional expectations. Since $\sigma(S^{\Xvec})=\sigma(S^{\Xvec}+c)$, for all $c\in \mathbb{R}$ and all $\Xvec\in \X^n$, it is immediate to see that $H$ satisfies Translativity. Suppose now that $\Xvec\in \X^n$. If $\sup X_j=+\infty$, then the inequality for the No-Ripoff property trivially holds. Suppose that $\sup X_j\in \mathbb{R}$.  The monotonicity of conditional expectations and the Constancy property proved above imply that $H$ satisfies the No-Ripoff property.\qed

\medskip

\subsection{A Comparison with Other Risk-Sharing Rules}

To compare with other existing rules, assume that $\mathcal{G}=\left\lbrace \emptyset,\Omega\right\rbrace$. First we define two risk-sharing rules that will be compared with our robust conditional mean-risk sharing rule. In particular, we define the Conditional Mean Risk-Sharing (CMRS) rule and Quantile-Based Risk-Sharing (QBRS) rule. The CMRS rule is simply defined as follows:
\[
H^{\textnormal{CMRS}}:\Xvec\mapsto \mathbb{E}\left[\Xvec\mid S^{\Xvec}\right].
\]

\medskip

In order to define the QBRS rule is necessary to introduce some notation. For all $Y\in \mathcal{X}$ denote by $F_{Y}$ the cumulative distribution function (CDF) of $Y$, and for all $\alpha,p\in [0,1]$ define:
\medskip
\begin{align*}
&F^{-1}_Y(p)=\inf\left\lbrace y\in \R:F_Y(y)\geq p \right\rbrace\\
&F^{-1,+}_Y(p)=\sup\left\lbrace y\in \R:F_Y(y)\leq p \right\rbrace\\
&F^{-1,\alpha}_Y(p)=\begin{cases}
F^{-1,+}_Y(0) & \textnormal{if}\ p=0\\
\alpha F^{-1}_Y(p)+(1-\alpha)F^{-1,+}_Y(p) & \textnormal{if}\ p\in (0,1)\\
F^{-1}_Y(p) & \textnormal{if}\ p=1
\end{cases}
\end{align*}

\medskip

\noindent Given a vector of initial endowments $\Xvec=(X_1,\ldots,X_n)\in \mathcal{X}^n$ define its comonotonic counterpart $\Xvec^c:=\left(F_{X_1}^{-1}(U),\ldots,F_{X_n}^{-1}(U)\right)$, where $U$ is a random variable uniformly distributed on $[0,1]$. The QBRS rule is defined by:
\[
H^{\textnormal{QBRS}}:\Xvec\mapsto \left(F^{-\!1\left(\alpha_{S^{\Xvec}}\right)}_{X_i}\!\Big(F_{S^{\Xvec,c}}\!\left(S^{\Xvec}\right)\Big)\right)_{i=1}^n,
\]
where $\alpha_{S^{\Xvec}}\in[0,1]$ is determined implicitly by
\[
F^{-\!1\left(\alpha_{S^{\Xvec}}\right)}_{\,S^{\Xvec,c}}\!\Big(F_{S^{\Xvec,c}}\!\left(S^{\Xvec}\right)\Big)
\;=\;
S^{\Xvec}.
\]

\medskip

Table \ref{tab:axiom_matrix} provides a comparison of the CMRS, QBRS, and RCMRS.

\medskip

\begin{table}[H]
\centering
\caption{Comparison of risk sharing rules}
\label{tab:axiom_matrix}
\renewcommand{\arraystretch}{1.15}
\setlength{\tabcolsep}{6pt}
\begin{tabular}{lcccccccccc}
\toprule
\textbf{Rule / Property} 
& Com & UI & IF & AF & RF & ZP & AA & OA & IA \\
\midrule
CMRS        & \ding{55} & \ding{51} & \ding{51} & \ding{51} & \ding{51} & \ding{51} & \ding{51} & \ding{51} & \ding{51}  \\
QBRS        & \ding{51} & \ding{55} & \ding{51} & \ding{55} & \ding{51} & \ding{51} & \ding{51} & \ding{55} & \ding{51}  \\
RCMRS       & \ding{55} & \ding{55} & \ding{51} & \ding{55} & \ding{51} & \ding{51} & \ding{51} & \ding{51} & \ding{51}  \\
\bottomrule
\end{tabular}
\end{table}

\noindent In Table \ref{tab:axiom_matrix}, 
\begin{itemize}
\item Com: denotes \textit{comonotonicity}. A risk-sharing rule $H$ satisfies comonotonicity if, for all $\Xvec \in \mathcal{X}^n$, the allocation $H(\Xvec)$ is comonotone. This property is particularly relevant in risk sharing due to its close link with Pareto optimality under risk aversion. Although Pareto efficiency cannot be assessed without specifying the agents' preferences, examining whether a rule yields comonotonic allocations provides an indirect means of evaluating Pareto optimality of the induced allocation for certain classes of preferences. The QBRS rule satisfies Com, while CMRS and RCRMS rules in general do not satisfy Com.

\medskip

\item UI: denotes \textit{universal improvement}. A risk-sharing rule $H$ satisfies universal improvement if, for all $\Xvec \in \mathcal{X}^n$ and $i \in [n]$, it holds that $X_i \succeq_{\textnormal{cx}} H_i(\Xvec)$.\footnote{Recall that $\succeq_{\textnormal{cx}}$ denotes the convex order: $X \succeq_{\textnormal{cx}} Y$ iff $\mathbb{E}[\varphi(X)] \geq \mathbb{E}[\varphi(Y)]$ for all convex functions $\varphi : \mathbb{R} \to \mathbb{R}$.} As with Pareto optimality, participation incentives of risk sharing rules cannot generally be evaluated without specifying the agents’ preferences. Nevertheless, the UI property provides a useful benchmark: if a rule satisfies UI, then risk-averse expected-utility agents prefer to participate. The CMRS rule satisfies UI, while QBRS and RCRMS rules in general do not satisfy UI.

\medskip

\item Fairness conditions: IF, AF, ZP are fairness conditions that we already discussed in Section \ref{SecRep}. RF denotes \textit{risk fairness}, sometimes also termed \textit{no-ripoff}, and it requires that the allocation to each agent should not exceed their maximum
possible loss. AF (actuarial fairness) is only satisfied by CMRS.

\medskip

\item Anonymity conditions: AA, OA, IA are anonymity conditions that we already discussed in Section \ref{SecRep}. The CMRS, RCNRS, and QBRS all satisfy properties AA and IA. The QBRS is the only risk-sharing rule that does not satisfy OA, as there is no risk-sharing rule on $\X^n$ that satisfies OA, Com, and ZP (Proposition 5 in Jiao et al.\ \cite{Jiaoetal2023}).
\end{itemize}

\medskip

Formally, the literature does not provide a standard definition of a \textit{robust} risk-sharing rule. Nonetheless, model uncertainty implies that relying on a single probabilistic specification $\mathbb{P}$ may yield misspecified assessments. In this perspective, CMRS and QBRS are not robust, whereas RCMRS rules offer a robust alternative.

\medskip

\subsubsection{Table \ref{tab:axiom_matrix} Formal Explanation}\ 

\begin{itemize}
\item CMRS: By Theorem 1, Corollary 1, and Proposition 6 in Jiao et al.\ \cite{Jiaoetal2023} the CMRS rule satisfies OA, AA, AF, RF, UI. It is straightforward to see that it also satisfies IF and ZP. As a consequence, it does not satisfy Com (Proposition 5 in Jiao et al.\ \cite{Jiaoetal2023}).

\medskip

\item QBRS: By Proposition 5.6 in Denuit et al.\ \cite{Denuitetal2022}, the QBRS rule satisfies Com, RF, AA, and ZP, but it satisfies  neither AF nor UI. By Proposition 5 in Jiao et al.\ \cite{Jiaoetal2023} ,it does not satisfy OA. By its quantile formulation, it also satisfies IF.

\medskip

\item RCMRS: By Theorem \ref{thm1} and Proposition  \ref{prop_properties}, any RCMRS rule satisfies OA, AA, RF, IF, ZP, IA, and RF. However, it may not satisfy AF (see the explanation in Section \ref{SecRiskSharing}), and consequently it may fail to satisfy UI. Moreover, since CMRS does not satisfy Com, an RCMRS rule may not satisfy Com either. 
\end{itemize}

\label{app:ex52}

\medskip

\subsection{Example \ref{example:meandeviation} Expanded} \ 

\smallskip

\noindent \begin{itemize}
\item $H^{D,\theta}\in \mathcal{AM}$:
\begin{align*}
\sum_{i=1}^nH_{i}^{D,\theta}(\Xvec|\G)
&=S^{\Xvec}- \sum_{i=1}^n\theta D(X_i|\G^{\Xvec})
\leq S^{\Xvec}- \sum_{i=1}^n\theta D\left(\sum_{i=1}^nX_i\middle\vert\G^{\Xvec}\right)
\leq S^{\Xvec},
\end{align*}

\medskip

\noindent where the last inequality follows from $ D(X_i|\G^{\Xvec})\geq 0$ and $\theta\geq 0$.

\medskip

\item IF: Since $\mathbb{E}\left[\cdot|\G\right]$ and $-D(\cdot|\G)$ are increasing, it follows that $H^{D,\theta}$ satisfies Internal Fairness.

\item AA: For all permutations $\pi$,
\[
H_{\pi(i)}^{D,\theta}(\Xvec|\G)=\mathbb{E}\left[X_{\pi(i)}|\G^{\Xvec}\right]-\theta D(X_{\pi(i)}|\G^{\Xvec})=H_{i}^{D,\theta}(\Xvec_{\pi}|\G).
\]


\item OA: This follows from observing that each $H_{i}^{D,\theta}$ depends on all $X_j$ for $j\neq i$ only through $S^{\Xvec}$.

\medskip

\item FP: This follows from the superadditivity of $\mathbb{E}\left[\cdot|\G\right]-\theta D(\cdot|\G)$, for all $\G\in \Sigma$.

\medskip

\item SI: This follows from the positive homogeneity $\mathbb{E}\left[\cdot|\G\right]-\theta D(\cdot|\G)$, for all $\G\in \Sigma$.

\medskip

\item ZP: This follows from the positive homogeneity $\mathbb{E}\left[\cdot|\G\right]-\theta D(\cdot|\G)$, for all $\G\in \Sigma$.

\medskip

\item IB: This follows from the fact that $D(X|\G)=0$, for all $\G\in \Sigma$ and $\G$-measurable $X\in \X$.

\medskip

\item IA: This follows from the $\G$-measurability of each $\mathbb{E}\left[X|\G\right]-\theta D(X|\G)$. \qed
\end{itemize}

\medskip

\subsection{Example \ref{example:ES} Expanded}\ 

\smallskip
\label{app:ex52}
\noindent 
We start recalling that, for all $i\in [n]$,
\[
\sigma_{1:n}^2=\sum_{i=1}^n\sigma^2_i+2\sum_{i<j}\rho_{ij}\sigma_i\sigma_j\ \textnormal{and}\ \bar{\rho}_i=\frac{\sum_{j=1}^n\rho_{ij}\sigma_i\sigma_j}{\sigma_i\sigma_{1:n}}.
\]
The computations of the examples are based on the following observations:
\[
X_1+X_2+\cdots+X_n\sim \mathcal{N}\left(\mu_1+\cdots+\mu_n,\sigma^2_{1:n}\right).
\]

\smallskip

\noindent Moreover, recall that whenever $X\sim \mathcal{N}(\mu_X,\sigma^2_X)$ and $Y\sim \mathcal{N}(\mu_Y,\sigma^2_Y)$, it follows that
\[
X|Y=y\sim \mathcal{N}\left(\mu_X+\frac{\sigma_X}{\sigma_Y}\rho_{XY}(y-\mu_Y),(1-\rho^2_{XY})\sigma^2_X\right),
\]
which implies that
\begin{equation}\label{eq:conditionalnormaldist}
X_i|X_1+\cdots+X_n=x\sim \mathcal{N}\left(\mu_i+\frac{\sigma_i}{\sigma_{1:n}}\bar{\rho}_i(x-\mu_1+\cdots+\mu_n),(1-\bar{\rho}^2_i)\sigma^2_i\right).
\end{equation}

\medskip

\noindent  Now, recall that for $X\sim \mathcal{N}(\mu,\sigma)$ and all $\lambda\in (0,1)$, 
\[
\inf\left\lbrace x\in \R:\mathbb{P}\left[X\leq x\right]\geq \lambda\right\rbrace = \mu+\sigma \, \Phi^{-1}(\lambda),
\]

\smallskip

\noindent
where $\Phi$ is the standard normal CDF. Consequently, denoting by $\varphi$ the density of the standard normal, we obtain
\begin{align*}
\mu+\frac{\sigma}{\lambda}\int_0^{\lambda}\Phi^{-1}(\gamma)\textnormal{d}\gamma &= \mu+\frac{\sigma}{\lambda}\int_{\Phi^{-1}(0)}^{\Phi^{-1}(\lambda)}\Phi^{-1}(\Phi(\upsilon))\varphi(\upsilon)\textnormal{d}\upsilon=\mu+\frac{\sigma}{\lambda}\int_{\Phi^{-1}(0)}^{\Phi^{-1}(\lambda)}\upsilon\varphi(\upsilon)\textnormal{d}\upsilon\\
&=\mu+\frac{\sigma}{\lambda}\int_{-\infty}^{\Phi^{-1}(\lambda)}\upsilon\varphi(\upsilon)\textnormal{d}\upsilon = \mu-\frac{\sigma}{\lambda\sqrt{2\pi}}\left[\textnormal{exp}\left\lbrace -y^2/2 \right\rbrace\right]^{\Phi^{-1}(\lambda)}_{-\infty}\\
&=\mu-\frac{\sigma}{\lambda}\frac{\textnormal{exp}\left\lbrace -\Phi^{-1}(\lambda)^2/2 \right\rbrace}{\sqrt{2\pi}}=\mu-\sigma\frac{\varphi(\Phi^{-1}(\lambda))}{\lambda}.
\end{align*}

\noindent
Therefore, by \eqref{eq:conditionalnormaldist}, it follows that
\begin{align*}
H_i^{\lambda}(\Xvec|\G)
&=\frac{1}{\lambda}\int_0^{\lambda}\inf\left\lbrace x\in \R:\mathbb{P}\left[X\leq x|S^{\Xvec}\right]\geq \gamma\right\rbrace\textnormal{d}\gamma\\
&=\mu_{X_i|S^{\Xvec}}-\frac{\sigma_{X_i|S^{\Xvec}}}{\lambda}\int_0^{\lambda}\Phi^{-1}(\gamma)\textnormal{d}\gamma
=\mu_{X_i|S^{\Xvec}}-\sigma_{X_i|S^{\Xvec}}\frac{\varphi(\Phi^{-1}(\lambda))}{\lambda}\\
&=\mu_i+\frac{\sigma_i\bar{\rho}_i}{\sigma_{1:n}}(S^{\Xvec}-\mu_1+\cdots+\mu_n)-\frac{\varphi(\Phi^{-1}(\lambda))}{\lambda}\sqrt{(1-\bar{\rho}^2_i)\sigma^2_i}.
\end{align*}

\noindent Consequently,
\begin{align*}
C^{\Xvec,\G}(H)&=S^{\Xvec}-\sum_{i=1}^nH_i^{\lambda}(\Xvec|\G)\\
&=S^{\Xvec}-\sum_{i=1}^n\mu_i-\sum_{i=1}^n\frac{\sigma_i\bar{\rho}_i}{\sigma_{1:n}}\left(S^{\Xvec}-\sum_{i=1}^n\mu_i\right)+\sum_{i=1}^m\frac{\sqrt{(1-\bar{\rho}^2_i)\sigma^2_i}}{\lambda}\varphi(\Phi^{-1}(\lambda))\\
&=\left(S^{\Xvec}-\sum_{i=1}^n\mu_i\right)\left(1-\sum_{i=1}^n\frac{\sigma_i\bar{\rho}_i}{\sigma_{1:n}}\right)+\sum_{i=1}^n\frac{\sqrt{(1-\bar{\rho}^2_i)\sigma^2_i}}{\lambda}\varphi(\Phi^{-1}(\lambda))\\
&=\frac{\varphi(\Phi^{-1}(\lambda))}{\lambda}\sum_{i=1}^n\sqrt{(1-\bar{\rho}^2_i)\sigma^2_i},
\end{align*}

\smallskip

\noindent where the last equality follows from
\[
\sum_{i=1}^n\sigma_i \bar{\rho}_i=\sum_{i=1}^n\sigma_i\left(\sum_{j=1}^n\frac{\rho_{ij}\sigma_i\sigma_j}{\sigma_i\sigma_{1:n}}\right)=\sum_{i=1}^n\sum_{j=1}^n\frac{\rho_{ij}\sigma_i\sigma_j}{\sigma_{1:n}}=\sigma_{1:n}.
\]

\medskip

\noindent We define the individual friction cost of agent $i$ as $C^{\Xvec,\G}_{i}(H)=X_i-H_i(X|\G)$, for all $i\in [n]$. In the case considered, this cost becomes
\[
C_i^{\Xvec,\G}(H)=X_i-\mu_i-\frac{\sigma_i\bar{\rho}_i}{\sigma_{1:n}}(S^{\Xvec}-\mu_1+\cdots+\mu_n)+\frac{\varphi(\Phi^{-1}(\lambda))}{\lambda}\sqrt{(1-\bar{\rho}^2_i)\sigma^2_i}.
\]

\medskip

\noindent
We denote the expected individual frictional costs by $\bar{C}_i^{\Xvec,\G}(H):=\mathbb{E}[C_i^{\Xvec,\G}(H)]$. In the example considered above, we obtain
\[
\bar{C}_i^{\Xvec,\G}(H)=\frac{\varphi(\Phi^{-1}(\lambda))}{\lambda}\sqrt{(1-\bar{\rho}^2_i)\sigma^2_i}.
\]

\medskip

\noindent
Assume now that individuals have mean-variance preferences given by $V_i:X\mapsto \mathbb{E}[X]-\theta_i \sigma^2_X$. Note that
\begin{align*}
V_i(H_i(X|\G))-V_i(X)&=-\bar{C}_i^{\Xvec,\G}(H)-\theta_i\left[\sigma^2_{H_i^\lambda(X|\G)}-\sigma^2_{i}\right]\\
&=-\frac{\varphi(\Phi^{-1}(\lambda))}{\lambda}\sqrt{(1-\bar{\rho}^2_i)\sigma^2_i}-\theta_i\left[\sigma^2_{H_i^\lambda(X|\G)}-\sigma^2_{i}\right]\\
&=-\frac{\varphi(\Phi^{-1}(\lambda))}{\lambda}\sqrt{(1-\bar{\rho}^2_i)\sigma^2_i}-\theta_i\left[\sigma^2_i\bar{\rho}^2_i-\sigma^2_{i}\right]\\
&=\sqrt{(1-\bar{\rho}^2_i)\sigma^2_i}\left[\theta_i\sqrt{(1-\bar{\rho}^2_i)\sigma^2_i}-\frac{\varphi(\Phi^{-1}(\lambda))}{\lambda}\right].
\end{align*}

\bibliographystyle{plain}
\bibliography{biblio}

\vspace{0.2cm}

\end{document}